\documentclass{article}

\usepackage[margin=1in]{geometry}
\usepackage[T1]{fontenc}
\usepackage{amsmath,amsthm,amssymb,hyperref}
\hypersetup{
  colorlinks=true,
  linkcolor=black,
  citecolor=blue
}
\usepackage{comment}
\usepackage[utf8]{inputenc}
\usepackage{graphicx}
\usepackage{lineno,xcolor}
\usepackage[style=alphabetic, backend=biber, maxcitenames=99, maxbibnames=99, minalphanames=3, labelalpha=true]{biblatex}
\usepackage{etoolbox}
\usepackage{float}
\usepackage{subcaption}
\usepackage{tikz}
\usetikzlibrary{decorations.pathreplacing}

\theoremstyle{definition}
\newtheorem{definition}{Definition}[section]

\theoremstyle{plain}
\newtheorem{theorem}{Theorem}[section]
\newtheorem{lemma}[theorem]{Lemma}
\newtheorem{corollary}[theorem]{Corollary}
\newtheorem{proposition}[theorem]{Proposition}

\theoremstyle{remark} 
\newtheorem*{remark}{Remark}
\newtheorem{claim}{Claim}

\usepackage[ruled,vlined]{algorithm2e}
\author{
    Zhengyang Liu\thanks{Beijing Institute of Technology. Email: zhengyang@bit.edu.cn} \and
    Ying Qin\thanks{Renmin University of China. Email: qinying0420@ruc.edu.cn} \and
    Zeyu Ren\thanks{Renmin University of China. Email: cs.zeyu.ren@outlook.com} \and
    Zihe Wang\thanks{Renmin University of China. Email: wang.zihe@ruc.edu.cn}
}

\begin{document}

\title{Second-Best Bilateral Trade is $1/2$ Efficient} 
\date{}
\pagenumbering{roman}
\maketitle

\begin{abstract}
    The landmark Myerson-Satterthwaite Theorem establishes a fundamental impossibility in bilateral trade: no Bayesian incentive-compatible mechanism can simultaneously achieve ex-post efficiency, individual rationality, and strong budget balance. We resolve a long-standing open question regarding the efficiency loss imposed by these constraints. Specifically, we prove that the Bayesian-optimal (second-best) mechanism always captures at least half of the first-best gains from trade ($\mathrm{SB}\ge\frac{1}{2}\mathrm{FB}$). This result is tight, definitively closing the gap between the previously best-known bounds of $0.317$ and $0.736$.
\end{abstract}

\tableofcontents
\clearpage
\pagenumbering{arabic}

\section{Introduction}
The design of efficient mechanisms is a foundational challenge at the intersection of economics and computer science. In its most elemental form, {\em bilateral trade} involves a single seller and a single buyer with private valuations, $b$ and $s$, drawn from public prior distributions $F_B$ and $F_S$. The goal is to maximize the {\em Gains from Trade} (GFT), defined as the expected surplus $b-s$ generated whenever a trade occurs.

In an ideal {\em first-best} setting, trade occurs if and only if $b\ge s$, thereby capturing the full social surplus. However, the landmark Myerson-Satterthwaite Theorem~\cite{MYERSON1983265} establishes a celebrated impossibility: no Bayesian incentive-compatible (BIC) mechanism can simultaneously satisfy ex-post efficiency, individual rationality (IR), and strong budget balance (SBB). This ``trilemma'' implies that any mechanism operating without an external subsidy must inevitably sacrifice some efficiency to maintain strategic integrity and financial self-sufficiency.

Faced with this impossibility, the classical economic approach seeks the {\em second-best} mechanism, that is, the Bayesian-optimal design that maximizes GFT for a given pair of prior distributions. While mathematically elegant, second-best mechanisms are often notoriously complex and brittle, relying heavily on the exact specification of priors. This fragility raises a fundamental question that defines this line of research: 
\begin{quote}
    To what extent can a second-best mechanism approximate the first-best efficiency?
\end{quote}

To circumvent the analytical barriers posed by the exact second-best mechanism, the field of algorithmic game theory has extensively studied simple and robust alternatives. A recent breakthrough by Deng, Mao, Sivan and Wang~\cite{deng2022approximately} demonstrated that the random-offerer (RO) mechanism, which randomizes between buyer-optimal and seller-optimal take-it-or-leave-it offers, achieves at least a $\frac{1}{8.23}\approx 0.121$ fraction of the first-best efficiency. This approximation guarantee was subsequently improved to $\frac{1}{3.15}$ by Fei~\cite{fei2022improved}. Complementing these advances, Hartline and Wang~\cite{hartline2025geometricanalysisgainstrade} provided a geometric analysis of these bounds, offering deeper intuition into their performance. Concurrently, a series of negative results have established theoretical limits on the RO mechanism's maximum potential~\cite{babaioff2021notegainstraderandomofferer,cai2021multi,cai2026newlowerboundrandom}.

Despite the rapid progress in understanding simple mechanisms, our structural understanding of the second-best benchmark remains surprisingly limited. To date, the only established theoretical upper bound indicates that the second-best is at most $\frac{2}{e}$-efficient relative to the first-best~\cite{blumrosen16approx_gains_trade_bilat_tradin}. Consequently, the worst-case approximation ratio for the second-best mechanism is currently known only to lie within the wide interval:
$$
\left[ \frac{1}{3.15}\approx 0.317, \frac{2}{e}\approx 0.736 \right].
$$
This substantial gap between the best known lower bound for a simple mechanism and the best known upper bound for the optimal mechanism highlights a significant void in our understanding of the fundamental limits of bilateral trade.

In this paper, we resolve the long-standing uncertainty regarding the efficiency of the second-best benchmark by closing the approximation gap.
\begin{theorem}[Main Result, informal]
  For every pair of independent buyer and seller distributions,
  $$
  \mathrm{SB}\ge \frac{1}{2} \mathrm{FB}.
  $$
  Moreover, the ratio $1/2$ is tight: there exist distributions where no Bayesian incentive compatible, individually rational, and strongly (or even weakly) budget-balanced mechanism can recover more than half of the first-best surplus.
\end{theorem}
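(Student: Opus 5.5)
Both directions go through the standard Myerson--Satterthwaite virtual-value characterization of the second-best. The Bayesian-optimal BIC, IR, budget-balanced mechanism maximizes $\mathbb{E}[(b-s)x(b,s)]$ over monotone allocation rules $x$ subject to $\mathbb{E}[(\varphi_B(b)-\gamma_S(s))x(b,s)]\ge 0$. Here $\varphi_B(b)=b-\frac{1-F_B(b)}{f_B(b)}$ is the buyer's virtual value and $\gamma_S(s)=s+\frac{F_S(s)}{f_S(s)}$ is the seller's virtual cost. After ironing, Lagrangian duality shows that the optimum trades exactly when $(b-s)+\lambda(\bar\varphi_B(b)-\bar\gamma_S(s))\ge 0$ for some $\lambda\ge0$. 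Equivalently, it trades when $\bar\varphi^{\alpha}_B(b)\ge\bar\gamma^{\alpha}_S(s)$ with $\alpha=\lambda/(1+\lambda)\in[0,1]$, where $\varphi^\alpha_B=b-\alpha\frac{1-F_B}{f_B}$ and $\gamma^\alpha_S=s+\alpha\frac{F_S}{f_S}$.

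\textbf{Lower bound.} Rather than analyzing the exact optimum, I would exhibit a feasible mechanism, or a convex combination of constraint-satisfying rules, with GFT at least $\frac12\mathrm{FB}$. Write $\mathrm{FB}=\int\!\!\int_{b\ge s}(b-s)\,dF_B\,dF_S$. My first candidate is the rule ``trade iff $\varphi^{1/2}_B(b)\ge\gamma^{1/2}_S(s)$'' with ironing, which halves the information rents. Its budget surplus is $\mathbb{E}[(\varphi_B-\gamma_S)x]=2\mathbb{E}[(\varphi^{1/2}_B-\gamma^{1/2}_S)x]-\mathbb{E}[(b-s)x]$, so feasibility is not automatic. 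I would therefore instead use weak duality directly:
\[
\mathrm{SB}=\min_{\lambda\ge0}\max_{x\ \text{monotone}}\mathbb{E}\bigl[(b-s+\lambda(\varphi_B-\gamma_S))x\bigr].
\]
It then suffices to show that for every $\lambda$ the unconstrained maximum is at least $\frac12\mathrm{FB}$. Equivalently, for every $\alpha\in[0,1]$,
\[
\tfrac{1}{1-\alpha}\,\mathbb{E}\bigl[(\bar\varphi^\alpha_B-\bar\gamma^\alpha_S)^+\bigr]\ge\tfrac12\,\mathbb{E}[(b-s)^+].
\]
The quantity on the left is an ``$\alpha$-partial revenue'' functional. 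I would reduce it to one dimension by conditioning on $s$, which makes it a buyer-side inequality for a fixed cost $c$. For $\alpha$ small the bound holds trivially. For $\alpha$ near $1$ I would exploit the $\frac1{1-\alpha}$ blow-up: take $x$ to be a posted price, whose value is $\frac{1}{1-\alpha}$ times price-revenue terms minus $\frac{\alpha}{1-\alpha}$ times GFT terms. I expect this uniform-in-$\alpha$ inequality, which must be proved via quantile-space (revenue-curve) geometry with concavification handling the ironing, to be the main obstacle. The worst case should be an equal-revenue-type distribution, and that is where the constant $\frac12$ should emerge.

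\textbf{Tightness.} I would use distributions that make the dual certificate tight, guided by the extremal case above: the seller is a point mass at $0$ (or near it) and the buyer is a truncated equal-revenue distribution, or symmetric equal-revenue-type distributions for both. I would compute FB and bound SB from above by plugging the specific $\lambda$ into the dual. Weak budget balance does not help, since the dual constraint is the same inequality. I would tune the truncation parameters so that the ratio tends to $\frac12$.
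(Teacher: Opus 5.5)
Your reduction is sound and is the paper's own starting point (Lemma~\ref{lem:Lagrangian}; the paper's $\alpha$ is your $\lambda$). It suffices that $\max_x[\mathrm{GFT}(x)+\lambda\Lambda(x)]\ge\frac12\mathrm{FB}$ for every $\lambda\ge0$ and every pair of distributions. Small $\lambda$ is indeed easy: the first-best rule has $\Lambda=-\mathrm{FB}$, so it gives $(1-\lambda)\mathrm{FB}$.

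\textbf{The lower bound.} The uniform inequality above is the whole theorem, and the proposal does not prove it. Both routes you sketch fail.

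\emph{Conditioning on $s$.} This cannot reduce the problem to a buyer-side inequality with a fixed cost. The seller's ironed cost depends on all of $F_S$, and the inequality is false seller type by seller type. Take $F_S(s)=\bigl(\frac{\lambda c}{\lambda c+(1+\lambda)(1-s)}\bigr)^{\lambda/(1+\lambda)}$, which has an atom $P$ at $0$, and the mirror-image buyer with an atom $P$ at $1$. For every $s\in(0,1]$ the seller term $(1+\lambda)s+\lambda F_S(s)/f_S(s)$ equals $1+\lambda+\lambda c$. Every buyer term $(1+\lambda)b-\lambda(1-F_B(b))/f_B(b)$ is at most $1+\lambda$. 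So your conditional left-hand side vanishes for every $s\in(0,1)$, although $\mathbb E_b[(b-s)^+]\ge P(1-s)>0$. The entire left-hand side, $(1+\lambda)P^2$, comes from the atom at $s=0$ paying for all other seller types.

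\emph{Posted prices at large $\lambda$.} Every posted price has $\Lambda=0$, so its Lagrangian value is just its GFT, with no $1/(1-\alpha)$ amplification. Fixed prices guarantee no constant fraction of $\mathrm{FB}$, so they cannot serve as the certificate.

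\emph{The missing idea.} Fix the seller and treat the buyer as adversarial. After ironing to a $\lambda$-weakly regular seller (which keeps SB and raises FB) and restricting to pointwise-monotone rules, Sion's theorem swaps $\sup_x$ and $\inf_{F_B}$. You then need a single allocation that works against every buyer distribution. Linearity in $f_B$ turns this into buyer-wise constraints $H(b;x)\ge0$, coupled across $b$ by the rent terms $-\lambda\sum_{b'<b}x(s,b')$ and by monotonicity. A greedy threshold construction meets these constraints. Proving it never fails (state compression, a dominance order, single-ratio truncation, a minimal counterexample) is where $\frac12$ arises, via $(1-a)B(1-a,1-a)<2$ with $a=\lambda/(1+\lambda)$, for sellers with constant $\lambda$-virtual value.

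\textbf{Tightness.} Upper-bounding $\mathrm{SB}$ by $\max_x[\mathrm{GFT}(x)+\lambda\Lambda(x)]$ at one fixed $\lambda$ is valid, and WBB does give the same constraint. This is actually a shorter route than the paper's exact computation of SB. But the instances you name do not work:
\begin{itemize}
\item A seller with mass $1-\varepsilon$ at $0$ gives $\mathrm{SB}\ge(1-\varepsilon)\mathrm{FB}$, by posting price $0$.
\item The symmetric equal-revenue pair $F_B(b)=b/(c+b)$, $F_S(s)=c/(c+1-s)$ gives $\mathrm{SB}/\mathrm{FB}\to1$, as the paper computes.
\end{itemize}
What works is the family above, whose virtual values are constant at the finite multiplier $\lambda$. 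Every coefficient of $\mathrm{GFT}+\lambda\Lambda$ other than at the atom pair $(0,1)$ is at most $-\lambda c$, so $\mathrm{SB}\le(1+\lambda)P^2$. As $c\to0^+$ this gives $\mathrm{SB}/\mathrm{FB}\le(1+\lambda)/B\bigl(\frac1{1+\lambda},\frac1{1+\lambda}\bigr)+o(1)$, which tends to $\frac12$ only if $\lambda\to\infty$ afterwards. The order of limits matters: sending $\lambda\to\infty$ first recovers the equal-revenue pair, whose ratio tends to $1$.
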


\subsection{More Related Work}

Parallel to the pursuit of the first-best benchmark, a prolific line of work has explored fixed-price mechanisms. The primary motivation for this focus stems from the characterization by Hagerty and Rogerson~\cite{hagerty87robus_tradin_mechan}, who proved that fixed-price mechanisms (and distributions over them) are the {\em only} mechanisms that are dominant-strategy incentive compatible, individually rational, and strongly budget-balanced. Early progress in this domain was made by McAfee~\cite{mcafee08the_gains_trade_under_fixed_price_mechan}, showing that median-based pricing yields a $0.5$-approximation to the second-best GFT for symmetric distributions. Blumrosen and  Mizrahi~\cite{blumrosen16approx_gains_trade_bilat_tradin} later contributed a $1/e$ approximation for the first-best GFT under the monotone hazard rate assumption, while $1/(e-1)$ is the tight ratio due to Fei~\cite{fei2022improved}.
These paradigms were extended to more general two-sided markets by~\cite{brustle17approx_gains_trade_two_sided}, and more recently, \cite{hajiaghayi25gains_trade_bilat_trade_broker} analyzed the approximation limits of posted-pricing in broker-mediated environments.

Recent strides in evaluating the welfare bounds of fixed-price models have produced increasingly tight approximation ratios. Blumrosen and Dobzinski~\cite{blumrosen21almos_effic_mechan_bilat_tradin} proposed a $(1-1/e)$-approximation to the first-best welfare, which is improved by Kang, Pernice and Vondr{\'a}k~\cite{kang22fixed_price_approx_bilat_trade} to $1-1/e+0.0001$. 
In independent works, Liu, Ren and Wang~\cite{liu2023improved} and Cai and Wu~\cite{cai23on_optim_fixed_price_mechan_bilat_trade} significantly improved both approximation ratios and lower bounds for social welfare using different techniques. Huang and Zhang~\cite{huang25_fixed_price_mechanism_bilateral_trade} subsequently improved the efficiency of the algorithms in~\cite{liu2023improved}. Most recently, Giambartolomei and de Keijzer~\cite{giambartolomei2026approximation} refined the mathematical-programming approach of Cai and Wu~\cite{cai23on_optim_fixed_price_mechan_bilat_trade}, improving the best known guarantee for optimal fixed-price mechanisms in the single-unit social-welfare model to (0.7292), and extending the framework to multi-unit bilateral trade.

\subsection{Technical Overview}
\label{sec:technical-overview}


Recent work on bilateral trade has primarily shown that remarkably simple mechanisms achieve constant approximations to the first-best. In contrast, we establish a sharp universal guarantee for the exact second-best benchmark. Although the second-best mechanism is defined only implicitly and can be highly complex, its worst-case performance admits a clean and sharp characterization.

Analyzing this benchmark requires us to reason directly about the solution to a constrained optimization problem over all BIC, IR, and WBB mechanisms. The difficulty is not to evaluate a given allocation rule, but to uncover enough structure in this implicit benchmark to prove a tight worst-case guarantee.

To overcome these barriers, our approach proceeds in two main stages.
\begin{enumerate}
\item 
\textbf{We reformulate the approximation guarantee as a statement regarding the feasibility of a greedy allocation procedure.} After a standard finite-type approximation, we take a Lagrangian view of the budget-balance constraint and reduce the task of proving
\(
\mathrm{SB}\ge \frac{1}{2}\cdot \mathrm{FB}
\)
to showing that the optimal value of an auxiliary unconstrained optimization problem is nonnegative. We then exploit the structure of this optimal-value problem: its nonnegativity can be certified by constructing an allocation function with the required monotonicity. This turns the original inequality into an allocation-construction problem. We provide such a construction through a greedy procedure, so the remaining task is to prove that this algorithmic construction never fails.

\item \textbf{We prove that the greedy construction never fails through a state-compression and dominance argument.} Although the greedy procedure generates a high-dimensional allocation rule, its future evolution can be summarized by a low-dimensional state. This compression allows us to compare different executions of the procedure without tracking the entire allocation history. We then introduce a dominance relation that captures which states are closer to failure, and prove truncation lemmas showing that any hypothetical failure would imply a smaller and more demanding failing instance. A minimal-counterexample argument therefore rules out failure altogether. Hence, the greedy construction is always feasible, which proves the nonnegativity of the auxiliary optimization problem and establishes the universal lower bound. Finally, we prove tightness by constructing a family of distributions whose second-best-to-first-best ratio converges to \(1/2\).

\end{enumerate}
\section{Preliminaries}
\paragraph{Notation.}
We denote the sets of non-negative real numbers and non-negative integers by $\mathbb{R}_{\ge0}$ and $\mathbb{Z}_{\ge 0}$, respectively.

The standard bilateral trade problem involves a seller and a buyer.
The seller holds a single indivisible good with a private valuation $s$. The buyer has a private valuation $b$ for acquiring the good.
These valuations are independent random variables drawn from prior distributions $F_S$ and $F_B$, respectively.
Without loss of generality, we assume that $F_S$ and $F_B$ are supported on $[0,1]$, as our analysis focuses solely on the scale-invariant efficiency ratio.
The mechanism designer knows these prior distributions but does not observe the realized valuations $s$ and $b$. Our primary objective is to maximize the expected surplus, or the Gains from Trade (GFT), $b-s$, generated upon a successful transaction.

A mechanism is fully characterized by an {\em allocation rule} $x: [0,1]^2 \to [0,1]$, which specifies the probability of trade given the reported valuations, and a {\em payment rule}, which dictates the financial transfers. We denote the buyer's expected payment by $p_B: [0,1]^2 \to \mathbb{R}$ and the seller's expected revenue by $p_S: [0,1]^2 \to \mathbb{R}$. The expected gain from trade achieved by an allocation rule $x$ is formally defined as:
\[\mathrm{GFT}_{F_S, F_B}(x)=\int_0^1\int_0^1(b-s)x(s,b)\mathrm{d}F_B(b)\mathrm{d}F_S(s).\]

The ideal benchmark is the {\em first-best} mechanism, which achieves full ex-post efficiency by ensuring trade occurs if and only if the buyer's value meets or exceeds the seller's. Formally, this is defined by the allocation rule $x(s,b)=1$ when $b\ge s$, and $0$ otherwise.

The gain from trade in the first-best outcome, denoted by $\mathrm{FB}_{F_S,F_B}$ is defined as:
\[ \mathrm{FB}_{F_S,F_B}:=\int_0^{1}\int_0^b(b-s)\mathrm{d}F_S(s)\mathrm{d}F_B(b).\]

In standard mechanism design, we require mechanisms to satisfy key strategic and participation constraints.
First, we define the interim expected {\em utility} for a buyer with true valuation $b$ reporting $b'$, and a seller with true valuation $s$ reporting $s'$:
\begin{align*}
    u_B(b,b') &:= \mathbb{E}_{s \sim F_S}[x(s,b') \cdot b - p_B(s,b')];\\
u_S(s,s') &:= \mathbb{E}_{b \sim F_B}[p_S(s',b) - x(s',b) \cdot s].
\end{align*}
A mechanism is {\em Bayesian incentive-compatible} (BIC) if truth-telling maximizes each agent's expected utility. That is, for any $b, b', s, s' \in [0,1]$:
$$u_B(b,b) \ge u_B(b,b') \quad \text{and} \quad u_S(s,s) \ge u_S(s,s').$$
Furthermore, a mechanism is {\em interim individually rational} (IR) if both agents are guaranteed a non-negative expected utility from participating, meaning $u_B(b,b) \ge 0$ and $u_S(s,s) \ge 0$ for all $b, s \in [0,1]$. 

Finally, to ensure the mechanism is financially self-sufficient and does not rely on external subsidies, we impose a budget balance constraint. In this paper, we formalize this requirement as follows:
\begin{definition}[Budget Balance Constraints]
A mechanism is \emph{strongly budget-balanced} (SBB) if the payment collected from the buyer exactly equals the revenue paid to the seller for every valuation profile $(s, b)$:
$$p_B(s,b) = p_S(s,b).$$
A mechanism is \emph{weakly budget-balanced} (WBB) if the payment collected from the buyer is at least the revenue paid to the seller, meaning the mechanism never runs a deficit:
$$p_B(s,b) \ge p_S(s,b).$$
\end{definition}

In their seminal work, Myerson and Satterthwaite~\cite{MYERSON1983265} characterized the second-best mechanism, which maximizes the expected gains from trade subject to Bayesian incentive compatibility (BIC), interim individual rationality (interim IR), and strong budget balance (SBB) constraints. Assuming the prior distributions $F_S$ and $F_B$ admit continuous and strictly positive density functions $f_S$ and $f_B$, a central result of their analysis is that an allocation rule $x(s,b)$ is implementable under these constraints if and only if its interim allocation probabilities are monotonic and it satisfies the following budget balance discriminant:
\[\Lambda(x) = \int_0^{1}\int_0^{1}\left(b-\frac{1-F_B(b)}{f_B(b)}-s-\frac{F_S(s)}{f_S(s)}\right)x(s,b)f_B(b)f_S(s)\mathrm{d}s\mathrm{d}b\ge 0.\]
We denote by $\mathrm{SB}_{F_S,F_B}$ the expected gains from trade achieved by the second-best mechanism, formally defined as the supremum of the following optimization program:
\begin{align}\label{op:original}
\sup &~\mathrm{GFT}_{F_S,F_B}(x) \notag\\
\text{s.t.}&~
\Lambda(x)\geq 0, \notag\\
&~x_B(b)\le x_B(b'),\quad \forall b,b'\in[0,1], b< b',\\
&~x_S(s)\ge x_S(s'),\quad \forall s,s'\in[0,1], s< s', \notag
\end{align}
where $x_B(b)=\int_0^{1}x(t,b)f_S(t)\mathrm{d}t$ and $x_S(s)=\int_0^{1}x(s,t)f_B(t)\mathrm{d}t$.

In this paper, we study the worst-case approximation ratio between the second-best and first-best outcomes, that is,
\[\inf_{F_B,F_S\in\Delta([0,1])}\frac{\mathrm{SB}_{F_S,F_B}}{\mathrm{FB}_{F_S,F_B}},\]
where $\Delta([0,1])$ denotes the set of all probability measures over the interval $[0,1]$.

\begin{remark}
Hereafter, we drop the subscripts on FB, SB, GFT, and related terms whenever the underlying measures $F_B$ and $F_S$ are clear from the context.
\end{remark}

\section{From Continuous to Discrete Distributions}\label{sec:dis}

Since the original optimization problem~\eqref{op:original} is formulated over a continuum of type profiles, our first step is to extend it to the setting of Borel probability measures. Subsequently, we specialize this formulation to discrete distributions by restricting the measures to finite supports. All omitted proofs from this section are deferred to Appendix~\ref{app:dis}.

\begin{proposition}\label{prop:general_mixed}
Let \(F_S\) and \(F_B\) be Borel probability measures on \([0,1]\), the gain from trade of the second-best mechanism with BIC, interim IR, SBB (or WBB) constraint is the optimal value of the following optimization problem:

\begin{align*}
\sup_x\quad
& \int_{[0,1]^2}(b-s)\,x(s,b)\,\mathrm{d}F_S(s)\,\mathrm{d}F_B(b) \\
\text{s.t.}\quad
& \Lambda_{\mathrm{gen}}(x)=
\int_{[0,1]}\!\!
\left(
b\,x_B(b)-\int_{0}^{b}x_B(t)\,\mathrm{d}t
\right)\mathrm{d}F_B(b)
-
\int_{[0,1]}\!\!
\left(
s\,x_S(s)+\int_{s}^{1}x_S(t)\,\mathrm{d}t
\right)\mathrm{d}F_S(s)\ge 0,\\
& x_B(b)\le x_B(b'), \qquad \forall b,b'\in[0,1], b< b',\\
& x_S(s)\ge x_S(s'), \qquad \forall s,s'\in[0,1], s< s',
\end{align*}
where
\[
x_B(b)=\int_{[0,1]}x(s,b)\,\mathrm{d}F_S(s),\qquad
x_S(s)=\int_{[0,1]}x(s,b)\,\mathrm{d}F_B(b).
\]
\end{proposition}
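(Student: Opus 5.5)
The plan is to redo the Myerson–Satterthwaite characterization for arbitrary Borel measures. The key point is that the payment identities and the budget-balance discriminant never actually need densities; they only need interim utilities written as integrals of interim allocations against Lebesgue measure $\mathrm{d}t$. Throughout, a mechanism is $(x,p_B,p_S)$ and we write $P_B(b)=\int p_B(s,b)\,\mathrm{d}F_S(s)$ and $P_S(s)=\int p_S(s,b)\,\mathrm{d}F_B(b)$. We define types for every $b,s\in[0,1]$, not only on the supports, since BIC and IR are imposed for all $b,s\in[0,1]$ and the off-support values are free to choose. The proof shows that the set of achievable GFT values coincides with the feasible values of the stated program, and we treat SBB and WBB simultaneously.

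\textbf{Necessity.} Take any BIC, interim IR, WBB mechanism. The buyer's interim utility $U_B(b)=b\,x_B(b)-P_B(b)$ is a supremum of affine functions of $b$, hence convex. By the envelope theorem (Milgrom–Segal form, which needs only BIC on the interval $[0,1]$), $x_B$ is nondecreasing and
\[
U_B(b)=U_B(0)+\int_0^b x_B(t)\,\mathrm{d}t .
\]
Symmetrically, $x_S$ is nonincreasing and $U_S(s)=U_S(1)+\int_s^1 x_S(t)\,\mathrm{d}t$. Hence
\[
P_B(b)=b\,x_B(b)-\int_0^b x_B\,\mathrm{d}t-U_B(0),\qquad P_S(s)=s\,x_S(s)+\int_s^1 x_S\,\mathrm{d}t+U_S(1).
\]
WBB gives $\int P_B\,\mathrm{d}F_B\ge \int P_S\,\mathrm{d}F_S$, because both sides equal the corresponding double integrals of $p_B$ and $p_S$ by Fubini. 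Substituting the payment formulas gives $\Lambda_{\mathrm{gen}}(x)\ge U_B(0)+U_S(1)\ge 0$, the last step by IR. The same argument covers SBB, since SBB implies WBB.

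\textbf{Sufficiency.} Conversely, take $x$ with $x_B$ nondecreasing, $x_S$ nonincreasing and $\Lambda_{\mathrm{gen}}(x)\ge0$. We build an SBB mechanism in the spirit of Myerson–Satterthwaite. Define the target interim payments with $U_B(0)=0$ and $U_S(1)=\Lambda_{\mathrm{gen}}(x)\ge0$, namely
\[
\pi_B(b)=b\,x_B(b)-\int_0^b x_B, \qquad \pi_S(s)=s\,x_S(s)+\int_s^1 x_S+\Lambda_{\mathrm{gen}}(x).
\]
The standard argument then shows BIC and IR. Monotonicity gives $b(x_B(b)-x_B(b'))\ge\int_{b'}^{b}x_B$, which is the buyer's IC inequality, and the seller's case is analogous. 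IR holds because $U_B(b)\ge U_B(0)=0$ and $U_S(s)\ge U_S(1)\ge0$.

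The remaining task is to realize these interim payments with an ex-post rule satisfying $p_B=p_S$. Set
\[
p(s,b)=\pi_B(b)+\pi_S(s)-c,\qquad c=\int\pi_B\,\mathrm{d}F_B .
\]
Then $\int p(s,b)\,\mathrm{d}F_S(s)=\pi_B(b)+\int\pi_S\,\mathrm{d}F_S-c$, which equals $\pi_B(b)$ exactly when $\int\pi_S\,\mathrm{d}F_S=\int\pi_B\,\mathrm{d}F_B$. That equality is precisely the definition of $\Lambda_{\mathrm{gen}}$ after absorbing the constant $\Lambda_{\mathrm{gen}}(x)$ into $U_S(1)$. The seller's side is symmetric. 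Hence $p_B=p_S=p$ works, giving the same GFT under SBB, and therefore under WBB as well.

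\textbf{Main obstacle.} I expect the hardest step to be the technical care in the necessity direction for general measures, in three places. First, we must justify Fubini and measurability of $p_B$ and $p_S$, assuming the payments are integrable. Second, the envelope formula must hold at every $b$, including atoms and points outside the support; this is why BIC on all of $[0,1]$ matters, and convexity of $U_B$ handles it with $x_B$ acting as a subgradient. Third, any convention about interim quantities at types of measure zero must be checked. Once these are settled, the supremum over mechanisms equals the supremum of the program.
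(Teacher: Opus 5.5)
Your proposal is correct and follows the paper's proof. Necessity goes through the envelope formula combined with WBB and IR. For sufficiency, your payment $\pi_B(b)+\pi_S(s)-c$, once you substitute the definition of $\Lambda_{\mathrm{gen}}$, is exactly the paper's $p_B=p_S$.

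One slip to fix: the buyer's IC inequality is $\int_{b'}^{b}x_B(t)\,\mathrm{d}t\ge (b-b')\,x_B(b')$, not $b\bigl(x_B(b)-x_B(b')\bigr)\ge\int_{b'}^{b}x_B$. The version you wrote fails already for $x_B\equiv 1$ and $b>b'$.
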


Proposition~\ref{prop:general_mixed} establishes a crucial equivalence: under BIC and interim IR, the distinction between strong and weak budget balance does not affect the set of implementable allocation rules. Both budget constraints are characterized by the identical condition \(\Lambda_{\mathrm{gen}}(x) \ge 0\).\footnote{Consequently, the classical conclusions regarding the second-best mechanism in Myerson and Satterthwaite~\cite{MYERSON1983265} and Blumrosen and  Mizrahi~\cite{blumrosen16approx_gains_trade_bilat_tradin} apply equally under both SBB and WBB constraints.} Hereafter, we refer to this simply as the budget balance (BB) constraint.

To facilitate our proof, we restrict our attention to discrete distributions defined on grid points. As we will demonstrate, this restriction is entirely without loss of generality. We now specialize Proposition~\ref{prop:general_mixed} to this discrete setting. Given a positive integer $N$, let the seller's values be supported on $\{s_i\}_{i=0}^N$ with $0 \le s_0 < s_1 < \dots < s_N \le 1$, and the buyer's values on $\{b_j\}_{j=0}^N$ with $0 \le b_0 < b_1 < \dots < b_N \le 1$. Let $f_S(\cdot)$ and $f_B(\cdot)$ denote the respective probability mass functions, with their associated cumulative distribution functions defined as:
\[
F_S(s_i)=\sum_{t\le i} f_S(s_t),\qquad
F_B(b_j)=\sum_{t\le j} f_B(b_t).
\]

\begin{corollary}\label{prop:discrete}
For a given \(N>0\), suppose the seller’s and buyer’s values are drawn from \(F_S, F_B\), supported on \(\{s_i\}_{i=0}^N\) and \(\{b_j\}_{j=0}^N\), respectively. Then, under this discretization, second-best mechanism with BIC, interim IR, SBB (or WBB) can be reformulated as:
\begin{align*}
\sup_{x}\quad 
& \sum_{i,j=0}^N x(s_i,b_j)\,f_S(s_i)\,f_B(b_j)\,(b_j-s_i) \\
\text{s.t.}\quad
& \sum_{i,j=0}^N x(s_i,b_j)\, f_S(s_i)\, f_B(b_j)\,
\Bigg(
b_j+(b_j-b_{j+1})\frac{1-F_B(b_j)}{f_B(b_j)}
-s_i-(s_i-s_{i-1})\frac{F_S(s_{i-1})}{f_S(s_i)}
\Bigg)\ge 0, \\
& x_S(s_i)\ge x_S(s_{i+1}), \qquad \forall\, i=0,\ldots,N-1,\\
& x_B(b_j)\le x_B(b_{j+1}), \qquad \forall\, j=0,\ldots,N-1,
\end{align*}
where \(s_{-1}:=s_0\), \(b_{N+1}:=b_N\) and
\[
x_B(b_j)=\sum_{i=0}^N x(s_i,b_j)f_S(s_i),\qquad
x_S(s_i)=\sum_{j=0}^N x(s_i,b_j)f_B(b_j).
\]
\end{corollary}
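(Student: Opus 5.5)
This follows by specializing Proposition~\ref{prop:general_mixed} to the atomic measures $F_S=\sum_i f_S(s_i)\delta_{s_i}$ and $F_B=\sum_j f_B(b_j)\delta_{b_j}$. The objective and the two monotonicity constraints transfer immediately. The integrals $x_B(b)=\int x(s,b)\,\mathrm{d}F_S(s)$ and $x_S(s)$ become the stated finite sums. Monotonicity only needs to be imposed on the support points, since the values of $x$ off the support do not affect the objective or the $\mathrm{d}F$-integrals. The one subtlety is the inner Lebesgue integrals $\int_0^b x_B(t)\,\mathrm{d}t$ and $\int_s^1 x_S(t)\,\mathrm{d}t$: they involve $x_B,x_S$ at non-support points, so the allocation must be extended off the support. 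I will choose the extension that maximizes $\Lambda_{\mathrm{gen}}$, because the sup over extensions is what matters for feasibility.

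\textbf{Step 1 (optimal extension).} Given the values on the grid, set $x_B(t):=x_B(b_{j-1})$ for $t\in(b_{j-1},b_j)$ (the smallest value allowed by monotonicity), with $x_B(t)=0$ for $t<b_0$. Set $x_S(t):=x_S(s_{i+1})$ for $t\in(s_i,s_{i+1})$, with $x_S(t)=0$ for $t>s_N$. Any monotone extension is bounded below pointwise by these step functions, so they maximize $\Lambda_{\mathrm{gen}}$. They are realizable, since one can set $x(\cdot,t)$ for an off-support buyer type $t$ equal to $x(\cdot,b_{j-1})$, and similarly for sellers. Hence the discrete program's feasibility reduces to $\Lambda_{\mathrm{gen}}\ge 0$ evaluated with these step functions.

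\textbf{Step 2 (Abel summation).} With the step extension, $\int_0^{b_j}x_B(t)\,\mathrm{d}t=\sum_{k<j}(b_{k+1}-b_k)x_B(b_k)$. The buyer part becomes $\sum_j f_B(b_j)\bigl[b_jx_B(b_j)-\sum_{k<j}(b_{k+1}-b_k)x_B(b_k)\bigr]$. Exchanging the order of summation, the coefficient of $x_B(b_k)$ is $b_kf_B(b_k)-(b_{k+1}-b_k)(1-F_B(b_k))$, which equals $f_B(b_k)\bigl(b_k+(b_k-b_{k+1})\frac{1-F_B(b_k)}{f_B(b_k)}\bigr)$. The convention $b_{N+1}=b_N$ handles $k=N$. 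Symmetrically, $\int_{s_i}^1x_S=\sum_{k>i}(s_k-s_{k-1})x_S(s_k)$. Swapping sums, the coefficient of $x_S(s_k)$ is $f_S(s_k)s_k+(s_k-s_{k-1})F_S(s_{k-1})$, and $s_{-1}=s_0$ covers $k=0$. Substituting $x_B(b_j)=\sum_i x(s_i,b_j)f_S(s_i)$ and $x_S(s_i)=\sum_jx(s_i,b_j)f_B(b_j)$ gives exactly the stated double sum.

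The main obstacle is Step 1: justifying that the off-support choice can be made freely and optimally. Concretely, one must check that Proposition~\ref{prop:general_mixed} allows $x$ to be defined everywhere on $[0,1]^2$ with monotonicity over all of $[0,1]$, and that the minimal step extension is both admissible and optimal. The Abel summation in Step 2 is routine bookkeeping.
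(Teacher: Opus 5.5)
Your proposal is correct and follows essentially the same route as the paper. The paper likewise takes arbitrary monotone extensions of $x_B,x_S$ off the support and notes that the step extensions $g^*(t)=x_B(b_\ell)$ on $[b_\ell,b_{\ell+1})$ and $h^*(t)=x_S(s_\ell)$ on $(s_{\ell-1},s_\ell]$ minimize the inner integrals, hence maximize $\Lambda_{\mathrm{gen}}$; it then exchanges the order of summation exactly as in your Step 2. Your remarks that these extensions are realizable by copying $x(\cdot,b_{j-1})$ (resp.\ $x(s_{i+1},\cdot)$) at off-support types, and that $x_B,x_S$ may be set to zero on $[0,b_0)$ and $(s_N,1]$, fill in details the paper leaves implicit (it simply integrates from $b_0$ and up to $s_N$).
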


To discretize the Borel probability measures, we restrict our analysis to an equidistant grid. Normalizing the support to the unit interval $[0,1]$, we define the grid points as $s_i = i\epsilon$ and $b_j = j\epsilon$ for $i,j \in \{0, \dots, N\}$, where $\epsilon = 1/N$.

Given Borel probability measures $F_S$ and $F_B$, we construct their discrete counterparts $\hat{F}_S$ and $\hat{F}_B$ by concentrating the probability mass onto this grid. Specifically, we assign the seller's mass on the interval $(s_{i-1}, s_i]$ to the upper endpoint $s_i$, and the buyer's mass on $[b_{j-1}, b_j)$ to the lower endpoint $b_{j-1}$. The discrete probability mass functions are thus defined as:
$$\hat{f}_S(s_i) = F_S(s_i) - F_S(s_{i-1}), \qquad \hat{f}_B(b_{j-1}) = F_B(b_j^-) - F_B(b_{j-1}^-),$$
where $F_B(v^-)$ denotes the left-hand limit of the cumulative distribution function at $v$.

Let $\hat{x}$ denote the optimal allocation rule for the discrete optimization problem (Corollary~\ref{prop:discrete}) under the distributions $\hat{F}_S$ and $\hat{F}_B$. We extend $\hat{x}$ back to the continuous domain by defining a piecewise-constant step function, $\tilde{x}$. Formally, for any continuous type profile $(s,b) \in (s_{i-1}, s_i] \times [b_{j-1}, b_j)$, we set the extended allocation to:
$$\tilde{x}(s,b) = \hat{x}(s_i, b_{j-1}).$$

\begin{lemma}\label{lem:feasibility-preservation}
The continuous extension $\tilde{x}$ satisfies BIC, interim IR, and BB constraints under the original distribution $F_S$ and $F_B$.
\end{lemma}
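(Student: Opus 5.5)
The plan is to show that $\tilde x$ inherits every constraint of Proposition~\ref{prop:general_mixed} from $\hat x$ \emph{exactly}. The key point is that the cell-wise averaging in the definition of $\hat F_S,\hat F_B$ makes the interim allocations of $\tilde x$ step functions. Moreover, the integrands appearing in $\Lambda_{\mathrm{gen}}$ turn out to be constant on each cell. By Proposition~\ref{prop:general_mixed}, it suffices to verify three things for $\tilde x$ under $(F_S,F_B)$: monotonicity of $\tilde x_B$, monotonicity of $\tilde x_S$, and $\Lambda_{\mathrm{gen}}(\tilde x)\ge 0$.

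\textbf{Step 1 (interim allocations).} Fix $b\in[b_{j-1},b_j)$. Since $\tilde x(\cdot,b)$ equals $\hat x(s_i,b_{j-1})$ on $(s_{i-1},s_i]$, and $F_S$ puts mass $\hat f_S(s_i)$ there, we get
\[
\tilde x_B(b)=\sum_i \hat x(s_i,b_{j-1})\hat f_S(s_i)=\hat x_B(b_{j-1}).
\]
Symmetrically, for $s\in(s_{i-1},s_i]$ we get $\tilde x_S(s)=\hat x_S(s_i)$, using that $F_B$ puts mass $\hat f_B(b_{j-1})$ on $[b_{j-1},b_j)$. The boundary cells need their own conventions, which I will fix explicitly: the atom at $s_0=0$ receives $F_S(\{0\})$, and the point $b=1$ is attached to the last buyer cell. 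Consequently $\tilde x_B$ and $\tilde x_S$ are the step-function interpolations of the discrete interim rules. Monotonicity is then inherited directly from the monotonicity constraints of Corollary~\ref{prop:discrete}.

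\textbf{Step 2 (budget balance).} For $b\in[b_{j-1},b_j)$, since $\tilde x_B$ is constant equal to $\hat x_B(b_{j-1})$ on $[b_{j-1},b]$,
\[
b\,\tilde x_B(b)-\int_0^b\tilde x_B(t)\,\mathrm dt
= b_{j-1}\hat x_B(b_{j-1})-\int_0^{b_{j-1}}\tilde x_B(t)\,\mathrm dt .
\]
This is independent of $b$ within the cell. Similarly, for $s\in(s_{i-1},s_i]$,
\[
s\,\tilde x_S(s)+\int_s^1\tilde x_S(t)\,\mathrm dt
= s_i\hat x_S(s_i)+\int_{s_i}^1\tilde x_S(t)\,\mathrm dt .
\]
Integrating each cell against $F_B$ (respectively $F_S$) therefore just multiplies these constants by $\hat f_B(b_{j-1})$ (respectively $\hat f_S(s_i)$). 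Hence $\Lambda_{\mathrm{gen}}(\tilde x)$ under $(F_S,F_B)$ equals $\Lambda_{\mathrm{gen}}$ of $\hat x$ (step-extended) under $(\hat F_S,\hat F_B)$.

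Evaluating the remaining integrals of the step functions gives $\int_0^{b_{j}}\tilde x_B=\sum_{t<j}(b_{t+1}-b_t)\hat x_B(b_t)$ and the analogous seller sum. Summing by parts then reproduces exactly the discrete discriminant in Corollary~\ref{prop:discrete}, with coefficients $b_j-(b_{j+1}-b_j)\frac{1-F_B(b_j)}{f_B(b_j)}$ and $s_i+(s_i-s_{i-1})\frac{F_S(s_{i-1})}{f_S(s_i)}$. That discriminant is $\ge 0$ because $\hat x$ is feasible.

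\textbf{Main obstacle.} The substantive content is the cell-constancy identity in Step~2; everything else is bookkeeping. I expect the only delicate part to be that bookkeeping: the half-open conventions (seller mass pushed up, buyer mass pushed down), the endpoint cells $s_0$ and $b_N$ with $s_{-1}:=s_0$ and $b_{N+1}:=b_N$, and checking that the summation by parts matches the discrete formula term by term. To keep this clean, I would first apply Proposition~\ref{prop:general_mixed} itself to the discrete measures $\hat F_S,\hat F_B$ with the step extension. This shows that the corollary's constraint is precisely $\Lambda_{\mathrm{gen}}$ of the step-extended rule, so the comparison in Step~2 reduces to equality of two instances of the same formula.
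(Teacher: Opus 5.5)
Your proposal is correct. It follows the same skeleton as the paper: monotonicity transfers from $\hat x$, and the budget discriminant of $\tilde x$ under $(F_S,F_B)$ equals that of $\hat x$ under $(\hat F_S,\hat F_B)$ cell by cell. The difference is which representation of the constraint you compute with.

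\textbf{Paper's route.} The paper uses the density form $\iint(\psi_B(b)-\psi_S(s))\,\tilde x\,f_Bf_S\,\mathrm ds\,\mathrm db$. It checks $\hat\psi_B(b_{j-1})\hat f_B(b_{j-1})=\int_{b_{j-1}}^{b_j}\psi_Bf_B\,\mathrm db$, and the seller analogue, via the antiderivative $b(F_B(b)-1)$.

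\textbf{Your route.} You use the envelope form $\Lambda_{\mathrm{gen}}$ of Proposition~\ref{prop:general_mixed}. The interim rules of $\tilde x$ are exactly the canonical step extensions $g^*,h^*$ from the proof of Corollary~\ref{prop:discrete}, and the envelope integrands are constant on cells. So the identity $\Lambda_{\mathrm{gen}}(\tilde x)=\Lambda_D(\hat x)$ needs no calculus.

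\textbf{What your route buys.}
\begin{itemize}
\item It never divides by a density, so it covers arbitrary Borel $F_S,F_B$, including atoms or singular parts. That is the setting in which the lemma is stated and used. The paper's computation as written presupposes densities, and its sum omits the cells $\{0\}$ and $\{1\}$.
\item It makes explicit that only interim monotonicity of $\hat x$ is needed, which is exactly what the discrete constraints give. The paper phrases this step as pointwise monotonicity of $\hat x$.
\end{itemize}
In return, the paper's route is a one-line identity that connects directly to the classical Myerson--Satterthwaite virtual-surplus form.

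\textbf{One bookkeeping point.} Make sure your boundary conventions match the discretization that $\hat x$ actually solves. That means $\{0\}$ and $\{1\}$ are separate cells, carried to $s_0$ and $b_N$, as in $I_0^{(M)}$ and $J_{2^M}^{(M)}$. Your cell-constancy computation handles those two cells unchanged.
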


Lemma~\ref{lem:feasibility-preservation} shows that feasibility is preserved
when a discrete allocation rule is extended to the continuous domain. It remains
to verify that the corresponding objective values are also consistent in the continuum limit. We establish this convergence by employing a sequence of grids.

Throughout the remainder of the analysis, we operate on a sequence parameterized by a positive integer $M$. For each $M$, we define the grid points $s_i^{(M)} = i\cdot 2^{-M}$ and $b_j^{(M)} = j\cdot 2^{-M}$. We partition the unit interval into cells 
\[
I_0^{(M)} := \{0\},
\qquad
I_i^{(M)} := (s_{i-1}^{(M)}, s_i^{(M)}]
\text{ for } i = 1,\ldots,2^M,
\] and \[
J_j^{(M)} := [b_j^{(M)}, b_{j+1}^{(M)})
\text{ for } j = 0,\ldots,2^M-1,
\qquad
J_{2^M}^{(M)} := \{1\}.
\]
Let $\widehat{f}_S^{(M)}(s_i^{(M)})$ and $\widehat{f}_B^{(M)}(b_{j-1}^{(M)})$ denote the integrated probability masses over $I_i^{(M)}$ and $J_j^{(M)}$.

Let \(\mathcal X_M\) denote the feasible set of discrete allocation rules on
the \(2^{-M}\)-grid. The discretized second-best expected surplus is given by:
\[
\widehat{\mathrm{SB}}_M
:=
\max_{x\in\mathcal X_M}
\sum_{j=1}^{2^M}
\sum_{i=1}^{2^M}
\left(b_{j-1}^{(M)}-s_i^{(M)}\right)
x\left(s_i^{(M)},b_{j-1}^{(M)}\right)
\widehat f_S^{(M)}\left(s_i^{(M)}\right)
\widehat f_B^{(M)}\left(b_{j-1}^{(M)}\right).
\]
Let $\widehat{x}_M \in \mathcal{X}_M$ be an optimal allocation. We denote its step-function extension by $\widetilde{x}_M$, where $\widetilde{x}_M(s,b) = \widehat{x}_M(s_i^{(M)}, b_{j-1}^{(M)})$ for all $(s,b) \in I_i^{(M)} \times J_j^{(M)}$. Symmetrically, we define the discretized first-best benchmark as:
\[
\widehat{\mathrm{FB}}_M
:=
\sum_{j=1}^{2^M}
\sum_{i=1}^{j-1}
\left(b_{j-1}^{(M)}-s_i^{(M)}\right)
\widehat f_S^{(M)}\left(s_i^{(M)}\right)
\widehat f_B^{(M)}\left(b_{j-1}^{(M)}\right).
\]

We rely on a fundamental consistency property between successive discretizations: because the $2^{-M}$-grid strictly refines the $2^{-(M-1)}$-grid, allocations map seamlessly between them. Any feasible allocation on the coarser $2^{-(M-1)}$-grid can be carried to the finer $2^{-M}$-grid by duplicating its value across all corresponding sub-rectangles. Conversely, any feasible allocation on the finer grid can be pooled to the coarser grid by taking the mass-weighted average over the sub-rectangles that comprise each cell of the coarser partition.

\begin{lemma}\label{lem:convergence}
There exists an increasing sequence \(\{M_m\}_{m=1}^{\infty}\) such that
\[
\lim_{m\to\infty}
\frac{
\mathrm{GFT}_{F_S,F_B}(\widetilde x_{M_m})
}{
\mathrm{FB}_{F_S,F_B}
}
\ge
\lim_{M\to\infty}
\frac{
\widehat{\mathrm{SB}}_M
}{
\widehat{\mathrm{FB}}_M
}.
\]
\end{lemma}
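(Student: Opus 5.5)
We prove the lemma by comparing both sides with the continuous first-best. There are three steps: show that $\widehat{\mathrm{FB}}_M\to\mathrm{FB}_{F_S,F_B}$; show that the true gain from trade of $\widetilde x_M$ differs from $\widehat{\mathrm{SB}}_M$ by at most a vanishing error; and choose a subsequence along which everything converges. Existence of $\lim_{M}\widehat{\mathrm{SB}}_M/\widehat{\mathrm{FB}}_M$ follows from the refinement property stated just before the lemma, as explained at the end.

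\textbf{Step 1: $\widehat{\mathrm{FB}}_M\to\mathrm{FB}$.} Under the discretization, a seller type in $I_i$ is moved up to $s_i$, and a buyer type in $J_j$ is moved down to the left endpoint $b_j$ of its cell. (The paper's formula labels the buyer cell $[b_{j-1},b_j)$ by $b_{j-1}$; the indexing is the same up to a shift.)
\begin{itemize}
\item \emph{Pointwise error.} Each such move changes $b-s$ by at most $2\cdot2^{-M}$ in absolute value.
\item \emph{Boundary error.} The discrete first-best sum counts only cell pairs whose grid representatives satisfy $b_{j}>s_i$. It therefore omits pairs $(s,b)$ with $b\ge s$ that fall in cells with $b_{j}\le s_i$. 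For such pairs $0\le b-s\le 2\cdot 2^{-M}$.
\end{itemize}
Hence
\[
\bigl|\widehat{\mathrm{FB}}_M-\mathrm{FB}\bigr|\le 4\cdot 2^{-M},
\]
and the limit follows. We may assume $\mathrm{FB}>0$, since otherwise the statement is vacuous. Then $\widehat{\mathrm{FB}}_M$ is bounded away from $0$ for large $M$.

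\textbf{Step 2: gain from trade of the extension.} By construction, $\widetilde x_M$ is constant on each cell $I_i\times J_j$, with value $\widehat x_M(s_i,b_j)\in[0,1]$. So
\[
\mathrm{GFT}_{F_S,F_B}(\widetilde x_M)=\sum_{i,j}\widehat x_M(s_i,b_j)\int_{I_i\times J_j}(b-s)\,\mathrm{d}F_S\,\mathrm{d}F_B .
\]
On each cell, $b-s$ differs from $b_j-s_i$ by at most $2\cdot 2^{-M}$. Also $|\widehat x_M|\le 1$, and the cell masses are exactly $\widehat f_S\widehat f_B$ and sum to $1$. Therefore
\[
\mathrm{GFT}(\widetilde x_M)\ge \widehat{\mathrm{SB}}_M-2\cdot2^{-M}.
\]
(Lemma~\ref{lem:feasibility-preservation} guarantees that $\widetilde x_M$ is feasible, though this is not needed for the inequality itself.)

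\textbf{Step 3: passing to the limit.} Dividing Step 2 by $\mathrm{FB}$ and using Step 1 gives
\[
\frac{\mathrm{GFT}(\widetilde x_M)}{\mathrm{FB}}\ge\frac{\widehat{\mathrm{SB}}_M}{\widehat{\mathrm{FB}}_M}\cdot\frac{\widehat{\mathrm{FB}}_M}{\mathrm{FB}}-\frac{2\cdot 2^{-M}}{\mathrm{FB}} .
\]
The factor $\widehat{\mathrm{FB}}_M/\mathrm{FB}$ tends to $1$, and the last term tends to $0$. Each ratio on the left lies in $[0,1]$, since the gain from trade never exceeds the first-best, so we can extract an increasing subsequence $M_m$ along which the left side converges. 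Along it, the limit is at least $\lim_M\widehat{\mathrm{SB}}_M/\widehat{\mathrm{FB}}_M$, whenever the latter exists.

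\textbf{Existence of the right-hand limit (the main obstacle).} By the refinement property, any feasible allocation on the coarse grid lifts by duplication to a feasible allocation on the fine grid. The lift has value equal to the coarse objective plus an error of $O(2^{-M})$, so $\widehat{\mathrm{SB}}_M\ge \widehat{\mathrm{SB}}_{M-1}-O(2^{-M})$. The corrections are summable, so $\widehat{\mathrm{SB}}_M+O(2^{-M})$ is monotone nondecreasing and bounded, hence convergent. Combined with Step 1, this shows the ratio $\widehat{\mathrm{SB}}_M/\widehat{\mathrm{FB}}_M$ converges.

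The delicate point is that the duplicated allocation must satisfy the discrete budget-balance constraint of Corollary~\ref{prop:discrete} on the finer grid. That constraint involves the grid-dependent terms $(b_j-b_{j+1})\frac{1-F_B}{f_B}$ and $(s_i-s_{i-1})\frac{F_S(s_{i-1})}{f_S(s_i)}$. I would handle this by rewriting it through $\Lambda_{\mathrm{gen}}$ from Proposition~\ref{prop:general_mixed}, which is grid-independent. If this direct verification proved awkward, the fallback is to interpret the right-hand side as $\limsup$. The argument in Step 3 still applies unchanged: pass to a subsequence realizing the $\limsup$ of the right side, then refine it further so that the left side converges.
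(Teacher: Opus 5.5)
Your proposal is correct and follows essentially the same route as the paper: show $\widehat{\mathrm{FB}}_M\to\mathrm{FB}_{F_S,F_B}$, compare $\mathrm{GFT}(\widetilde x_M)$ with $\widehat{\mathrm{SB}}_M$ cell by cell, extract a convergent subsequence of the bounded left-hand ratios, and get existence of $\lim_M\widehat{\mathrm{SB}}_M$ from monotonicity under lifting. The differences are minor. The paper uses the one-sided rounding (sellers up, buyers down) to obtain $\mathrm{GFT}(\widetilde x_M)\ge\widehat{\mathrm{SB}}_M$ and $\widehat{\mathrm{SB}}_{M+1}\ge\widehat{\mathrm{SB}}_M$ with no error term. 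It also simply asserts the grid-consistency of the lifted allocation, which you propose to verify through $\Lambda_{\mathrm{gen}}$; that verification does go through, because the duplicated allocation has the same step extension.
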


We can now lift the discrete performance guarantee to the Borel measure setting. By Lemma~\ref{lem:feasibility-preservation}, each step-function extension $\widetilde{x}_{M_m}$ is a feasible allocation that satisfies three constraints for the original Borel measure. Therefore, the Borel measure second-best benchmark naturally satisfies:

\[\frac{\mathrm{SB}_{F_S,F_B}}{\mathrm{FB}_{F_S,F_B}}
\ge
\frac{
\mathrm{GFT}_{F_S,F_B}(\widetilde x_{M_m})
}{
\mathrm{FB}_{F_S,F_B}
}.\]
Taking \(m\to\infty\) and applying Lemma~\ref{lem:convergence} yields:
\[
\frac{\mathrm{SB}_{F_S,F_B}}{\mathrm{FB}_{F_S,F_B}}
\ge
\lim_{M\to\infty}
\frac{
\widehat{\mathrm{SB}}_M
}{
\widehat{\mathrm{FB}}_M
}.
\]
Thus, any uniform lower bound established for the discrete instances immediately applies to the original Borel measures.

Conversely, any discrete measure defined on a finite grid is indeed a valid Borel probability measure on $[0,1]$. Consequently, the worst-case efficiency ratio over all general Borel instances cannot exceed the worst-case ratio over finite-grid instances. Combining these two inequalities, we conclude that the discrete and continuous worst-case approximation ratios perfectly coincide.

Next, we simplify the discrete domain. For a fixed $\epsilon = 1/N$, the $\epsilon$-equidistant grid on $[0,1]$ is isomorphic to the integer grid $\{0, 1, \ldots, N\}$ up to a scaling factor of $\epsilon$. Because the surplus objective $b-s$ is homogeneous of degree one in the agents' valuations, both $\widehat{\mathrm{SB}}$ and $\widehat{\mathrm{FB}}$ scale by this exact same factor under the linear transformation. As a result, the ratio $\widehat{\mathrm{SB}}/\widehat{\mathrm{FB}}$ is strictly scale-invariant. This allows us to equivalently evaluate any finite equidistant grid on $[0,1]$ as a finite subset of the integers. Let $\Delta(\mathbb{Z}_{W})$ denote the space of all finitely supported probability distributions on the non-negative integers $0,1,\ldots,W$.

\begin{theorem}\label{thm:main_reduction}
Let $\Delta([0,1])$ be the set of all probability measures supported on $[0,1]$. The worst-case efficiency ratio over the continuous domain is strictly equivalent to that over the discrete integer domain:
$$\inf_{F_B,F_S \in \Delta([0,1])} \frac{\mathrm{SB}}{\mathrm{FB}}
\ =
\ \inf_{W\ge 0}
\ \inf_{\widehat{F}_B,\widehat{F}_S \in \Delta(\mathbb{Z}_{W})} 
\ \frac{\widehat{\mathrm{SB}}}{\widehat{\mathrm{FB}}}.$$
\end{theorem}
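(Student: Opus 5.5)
We prove the two inequalities separately; both follow quickly from the results established above.

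For ``$\le$'', take $W\ge 1$ and distributions $\widehat F_S,\widehat F_B$ on $\{0,1,\dots,W\}$ with $\widehat{\mathrm{FB}}>0$. (Instances with $\widehat{\mathrm{FB}}=0$ are excluded by the usual convention that the ratio is only taken over instances with positive first-best; the same convention is used on the continuous side.) Rescale every value by $1/W$. This gives discrete, hence Borel, measures on $[0,1]$ supported on the $1/W$-grid. By Proposition~\ref{prop:general_mixed}, the second-best value of the rescaled instance is the value of a program whose constraints are homogeneous under a common rescaling of values. Indeed, $\Lambda_{\mathrm{gen}}$ scales by $1/W$, and the monotonicity constraints are unaffected, so the feasible set of allocations is unchanged. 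The objective also scales by $1/W$, and so does $\mathrm{FB}$. The ratio is therefore preserved. We still need that the continuous program of Proposition~\ref{prop:general_mixed}, applied to atomic measures, agrees with the discrete program of Corollary~\ref{prop:discrete}. This is exactly how Corollary~\ref{prop:discrete} was obtained, so we may cite it. Every integer instance is thus realized by some Borel instance, and the left-hand infimum is at most the right-hand one.

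For ``$\ge$'', fix Borel $F_S,F_B$ with $\mathrm{FB}>0$. By Lemma~\ref{lem:feasibility-preservation} each extension $\widetilde x_{M}$ is feasible, so $\mathrm{SB}\ge \mathrm{GFT}(\widetilde x_{M_m})$. Lemma~\ref{lem:convergence} then gives $\mathrm{SB}/\mathrm{FB}\ge \lim_{M}\widehat{\mathrm{SB}}_M/\widehat{\mathrm{FB}}_M$. Each term $\widehat{\mathrm{SB}}_M/\widehat{\mathrm{FB}}_M$ comes from an instance on the $2^{-M}$-grid, and multiplying values by $2^M$ turns it into an instance on $\mathbb{Z}_{2^M}$ with the same ratio, by the same homogeneity argument. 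Each term is therefore at least the right-hand infimum, and so is the limit. (If the limit does not exist, apply the argument to $\liminf$ along the subsequence.) Taking the infimum over $F_S,F_B$ gives ``$\ge$''.

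The main obstacle is not the structure of the argument but a degenerate case in the second step. We need $\widehat{\mathrm{FB}}_M>0$ for large $M$ so that each discrete ratio is well defined. This holds because $\widehat{\mathrm{FB}}_M\to\mathrm{FB}>0$: the gridding shifts each seller value up and each buyer value down by less than $2^{-M}$, so the integrand changes by $O(2^{-M})$. That convergence is implicit in Lemma~\ref{lem:convergence}, and I would check it explicitly there. The homogeneity claim also needs a careful look at the discrete discriminant of Corollary~\ref{prop:discrete}. Each term $b_j+(b_j-b_{j+1})\frac{1-F_B(b_j)}{f_B(b_j)}-s_i-(s_i-s_{i-1})\frac{F_S(s_{i-1})}{f_S(s_i)}$ is linear in the grid values with coefficients that depend only on the masses. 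It therefore scales by the same factor, which confirms that feasibility is invariant under rescaling.
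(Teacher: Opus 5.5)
Your proof is correct and follows the paper's argument. The ``$\ge$'' direction combines Lemma~\ref{lem:feasibility-preservation} and Lemma~\ref{lem:convergence} with rescaling the $2^{-M}$-grid instance to $\mathbb{Z}_{2^M}$, and the ``$\le$'' direction uses that finite-grid measures are themselves Borel measures on $[0,1]$ and that the ratio is scale-invariant. Your extra checks only make explicit what the paper leaves implicit: that the discrete discriminant is homogeneous of degree one in the values, and that $\widehat{\mathrm{FB}}_M>0$ for large $M$, which the paper verifies inside the proof of Lemma~\ref{lem:convergence}.
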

Accordingly, for the remainder of this paper, we restrict our analysis without loss of generality to discrete distributions supported on $\mathbb{Z}_{W}$.

\section{From Optimization to Greedy Algorithm}\label{sec:new-problem}

In this section, we develop the reduction that turns the second-best
efficiency problem into a constructive greedy feasibility test. The main
difficulty is that the second-best benchmark is defined by an optimization
problem subject to incentive and budget-balance constraints, while the desired
guarantee is a uniform lower bound over all buyer and seller distributions on
integer domains. Our goal is to replace this global optimization problem by a
sequence of local inequalities that can be checked algorithmically.

The reduction has three steps. First, we dualize the feasibility constraint
from Corollary~\ref{prop:discrete}. For a Lagrange multiplier
\(\alpha\ge 0\), the relevant objective does not lead to the ordinary Myerson virtual values. Rather, the regularity notion used in this section is \(\alpha\)-weak regularity: the seller-side
\(\alpha\)-virtual value
must be weakly increasing in \(s\). When this monotonicity fails, we apply an
\(\alpha\)-ironing procedure. The ironing step is a worst-case reduction: it
preserves the second-best value while weakly increasing the first-best value,
so it can only make the approximation ratio harder to prove.

Second, after this regularization, we apply Sion's Minimax Theorem to exchange
the order of optimization between the allocation rule and the buyer
distribution. This step is what converts a distributionally robust lower-bound
problem into a mechanism-construction problem. 

Third, once \(F_S\) and \(\alpha\) are fixed, the post-minimax objective is
linear in the buyer probability masses \(f_B(b)\). Therefore, robustness
against all buyer distributions is equivalent to a family of buyer-wise local
constraints, one for each buyer value \(b\). This yields a greedy
allocation procedure. We then prove that the greedy procedure succeeds if and
only if the original maximin feasibility problem is feasible.

For brevity, all omitted proofs for this section are deferred to
Appendix~\ref{app:new-problem}.
\subsection{The Lagrangian Dual Representation}
Following Corollary~\ref{prop:discrete}, we characterize the feasibility of a mechanism through constraints.
For a given pair of distributions $(F_S,F_B)$, we use $\Lambda_D(x)\ge 0$ to represent the first constraint, where
\[
\Lambda_D(x) =\sum_{s,b} x(s,b)\, f_S(s)\, f_B(b)\,
\left(
b-\frac{1-F_B(b)}{f_B(b)}
-s-\frac{F_S(s{-1})}{f_S(s)}
\right)\ge 0.
\]
The remaining two constraints, namely the monotonicity conditions
\(
x_B(b)\le x_B(b{+1})
\)
and
\(
x_S(s)\ge x_S(s{+1})
\)  for any $s,b\in \mathbb{Z}_{W}$,
are incorporated into the feasible domain. 
Let
\[
\mathcal R_1(W):=
\left\{x:\mathbb{Z}_{W}^2\to [0,1] \;\middle|\; \
\begin{array}{l} 
x_B(b) \le x_B(b') \text{ for all } b < b' \\ 
x_S(s) \ge x_S(s') \text{ for all } s < s' 
\end{array} \;\
\right\}.
\]
For finite-supported \(F_S,F_B\), the program
\[
\sup_{x\in\mathcal R_1(W)} \mathrm{GFT}(x)
\quad\text{s.t.}\quad \Lambda_D(x)\ge0
\]
is a finite-dimensional linear program. Hence, by LP strong duality, we introduce the Lagrangian $L(x,\alpha)$ for a multiplier $\alpha \ge 0$ with the following lemma:
\[
L(x,\alpha):=\mathrm{GFT}(x)+\alpha\,\Lambda_D(x).
\]

\begin{lemma}\label{lem:Lagrangian}
The second-best efficiency admits the following minimax representation: $$\mathrm{SB}_{F_S,F_B}
=
\inf_{\alpha\ge0}
\ \sup_{x\in\mathcal R_1    (W)}
\ L(x,\alpha).$$
\end{lemma}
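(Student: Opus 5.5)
The plan is to treat the finite-type second-best problem of Corollary~\ref{prop:discrete} as a linear program and apply LP duality to the single coupling constraint $\Lambda_D(x)\ge 0$. The variables $x(s,b)$ for $(s,b)\in\mathbb{Z}_W^2$ are finitely many. The set $\mathcal R_1(W)$ is a polytope: it is cut out by the box constraints $0\le x\le 1$ and the linear monotonicity inequalities on $x_B$ and $x_S$, so it is nonempty (it contains $x\equiv 0$) and compact. Both $\mathrm{GFT}(x)$ and $\Lambda_D(x)$ are linear in $x$. By Corollary~\ref{prop:discrete} and Theorem~\ref{thm:main_reduction}, $\mathrm{SB}_{F_S,F_B}=\max\{\mathrm{GFT}(x): x\in\mathcal R_1(W),\ \Lambda_D(x)\ge 0\}$, and the maximum is attained by compactness.

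\textbf{Weak duality.} For any $\alpha\ge 0$ and any feasible $x$, we have $L(x,\alpha)\ge \mathrm{GFT}(x)$ because $\alpha\Lambda_D(x)\ge 0$. Hence $\sup_{x\in\mathcal R_1(W)}L(x,\alpha)\ge \mathrm{SB}$, and taking the infimum over $\alpha$ gives $\inf_{\alpha\ge 0}\sup_x L\ge\mathrm{SB}$.

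\textbf{Strong duality.} For the reverse inequality, I would apply the partial Lagrangian duality theorem for linear programs, in which only one linear constraint is dualized and the polyhedral set $\mathcal R_1(W)$ is kept. Equivalently, this is a minimax theorem for the bilinear function $L$ on a compact convex set times $[0,\infty)$. Concretely, write $\mathcal R_1(W)$ as $\{x: Ax\le c\}$ and form the full LP dual with multipliers $(\alpha,\mu)$ for all constraints. Finite LP strong duality applies since the primal is feasible and bounded, so primal and dual optima coincide. Fixing the optimal $\alpha^*$ and dualizing the remaining constraints back shows that $\sup_{x\in\mathcal R_1(W)} L(x,\alpha^*)$ equals the dual optimum, which is $\mathrm{SB}$.

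An alternative argument would use Sion's theorem: $L$ is linear in $\alpha$ and in $x$, and $\mathcal R_1(W)$ is compact convex, so
\[
\inf_{\alpha\ge0}\max_{x}L=\max_{x}\inf_{\alpha\ge0}L .
\]
The inner infimum equals $\mathrm{GFT}(x)$ when $\Lambda_D(x)\ge 0$ and $-\infty$ otherwise, so the right-hand side is exactly $\mathrm{SB}$.

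\textbf{Main obstacle.} The only delicate point is justifying that no duality gap appears, given that $\alpha$ ranges over the noncompact set $[0,\infty)$. This is handled by either route above: finite LP duality needs no Slater condition, and Sion's theorem requires compactness on only one side, which $\mathcal R_1(W)$ supplies. I would also note that the infimum over $\alpha$ is attained, which the subsequent sections use.
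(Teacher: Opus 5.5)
Your proposal is correct and follows the same route as the paper: both treat the finite-type second-best problem as a linear program and apply LP strong duality with only the constraint $\Lambda_D(x)\ge 0$ dualized, keeping $\mathcal R_1(W)$ as the domain (the paper does this after negating to a minimization). Your explicit weak-duality step and your recovery of the partial Lagrangian from the full LP dual (or, alternatively, Sion's theorem) only make rigorous the partial-Lagrangian duality that the paper asserts directly.
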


Fix a target efficiency factor $\beta\in(0,1)$, and consider the task of proving
\[
    \mathrm{SB}_{F_S,F_B}\ge \beta\,\mathrm{FB}_{F_S,F_B}
\]
uniformly over all finite-support buyer and seller distributions on integer
domains. By Lemma~\ref{lem:Lagrangian}, for each fixed pair
\((F_S,F_B)\), the second-best benchmark can be written as the infimum over
the Lagrange multiplier \(\alpha\ge 0\) of the corresponding Lagrangian
objective. Therefore, the desired universal guarantee is equivalent to
requiring that the worst-case Lagrangian surplus above the target
\(\beta\cdot \mathrm{FB}\) be nonnegative:
\[
\inf_{W\ge0}
\ \inf_{F_S,F_B\in\Delta(\mathbb Z_{W})}
\ \inf_{\alpha\ge 0}
\ \sup_{x\in\mathcal R_1(W)}
\ \big[
\mathrm{GFT}(x)+\alpha\Lambda_D(x)-\beta\cdot\mathrm{FB}
\big]
\ge 0.
\]
The multiplier \(\alpha\) is an auxiliary dual variable whose feasible set
does not depend on \(W\), \(F_S\), or \(F_B\). Hence the infima are taken over
a product domain and can be reordered. We shall use the following equivalent
form, which is more convenient because it allows us to fix \(\alpha\) first
and then study the \(\alpha\)-dependent structure induced by the Lagrangian:
\[
\inf_{\alpha\ge 0}
\ \inf_{W\ge0}
\ \inf_{F_S,F_B\in\Delta(\mathbb Z_{W})}
\ \sup_{x\in\mathcal R_1(W)}
\ \big[
\mathrm{GFT}(x)+\alpha\Lambda_D(x)-\beta\cdot\mathrm{FB}
\big]
\ge 0.
\]

\subsection{\texorpdfstring{Discrete $\alpha$-Weak Regularity and Ironing}{Discrete alpha-Weak Regularity and Ironing}}

We next make two reductions that simplify the class of instances and the
allocation rules that need to be considered. The first is a support
normalization. Shifting both buyer and seller valuations by a common constant
does not affect gains from trade. The only subtlety is that, if the seller's
lowest type is \(m>0\), a direct common shift by \(-m\) may move buyer types
below \(m\) to negative values. This issue can be handled by a careful
decomposition of the buyer support and the corresponding allocation
constraints. We therefore obtain the following normalization result.

\begin{lemma}
\label{lem:wlog-fS0}
It suffices to prove the target guarantee for instances satisfying $f_S(0)>0$.
\end{lemma}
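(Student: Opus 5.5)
Given an instance $(F_S,F_B)$ on $\mathbb{Z}_W$ whose lowest seller type is $m=\min\{s: f_S(s)>0\}>0$, I will build a new instance $(F_S',F_B')$ with $f_S'(0)>0$. I will then show that $\mathrm{SB}'\le \mathrm{SB}$ while $\mathrm{FB}'=\mathrm{FB}$; the argument works equally with $\ge\beta\cdot\mathrm{FB}'$ and $\mathrm{FB}'\ge\mathrm{FB}$ up to the same ratio. Then a guarantee $\mathrm{SB}'\ge\beta\,\mathrm{FB}'$ transfers to the original instance.

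\emph{Construction.} Split the buyer support into $L=\{b<m\}$ and $H=\{b\ge m\}$. Buyers in $L$ never contribute to $\mathrm{FB}$, since $b<m\le s$ always. Define $F_S'$ as $F_S$ shifted by $-m$, so that $f_S'(s-m)=f_S(s)$. Define $F_B'$ by shifting $H$ by $-m$ and moving all mass of $L$ to the point $0$, i.e.\ $f_B'(0)=F_B(m-1)+f_B(m)$ and $f_B'(b-m)=f_B(b)$ for $b>m$. Both distributions are supported on $\mathbb{Z}_{W-m}$. Pairwise differences $b-s$ for $b\in H$ are preserved, and pairs with $b\in L$ become pairs with buyer value $0\le s'$. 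Hence $\mathrm{FB}'=\mathrm{FB}$ exactly.

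\emph{Transfer of mechanisms.} Given a feasible $x'$ for the new instance, I will define $x$ for the original instance by $x(s,b)=x'(s-m,b-m)$ for $b\in H$ and $x(s,b)=0$ for $b\in L$. There are two preliminary reductions.

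First, I will show that in the new instance one may assume without loss that $x'(\cdot,0)$ puts no trade except with $s'=0$, where the gain is zero anyway. More simply, I will argue that setting $x'(\cdot,0)\equiv 0$ keeps feasibility and does not lower $\mathrm{GFT}$. It removes nonpositive-gain trades, keeps $x_B'$ monotone (the lowest type now gets $0$), and weakly increases $\Lambda$: in the general form of Proposition~\ref{prop:general_mixed}, the buyer term for the lowest type is $b\,x_B(b)=0$, and lowering $x_S$ decreases the subtracted seller term.

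Second, with $x'(\cdot,0)=0$, the pulled-back $x$ has these properties:
\begin{itemize}
\item $\mathrm{GFT}(x)=\mathrm{GFT}(x')$.
\item The interim $x_S(s)=x_S'(s-m)$, so it is still monotone.
\item $x_B$ is $0$ on $L$ and equals $x_B'$ shifted on $H$, so it is monotone.
\end{itemize}
For budget balance I will use the discrete form from Corollary~\ref{prop:discrete}. Shifting all values by $m$ adds $m\,\Pr[\text{trade}]$ to the buyer's virtual term and subtracts the same amount from the seller's, since the information-rent terms depend only on gaps $b_{j+1}-b_j$ and $s_i-s_{i-1}$.

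The main obstacle is the buyer at the boundary $b=m$. In the original instance, the buyer's rent term for $b=m$ involves the gap to the next support point and the mass $1-F_B(m)$. In the new instance the point $0$ also carries the pooled $L$-mass, so the hazard term at $0$ differs. Because $x$ vanishes on $L$ and on the new point $0$, I expect these discrepancies to appear only with zero allocation weight. The same holds in the integral form $b\,x_B(b)-\int_0^b x_B$ of $\Lambda_{\mathrm{gen}}$, where $\int_0^b x_B$ over $[0,m)$ vanishes. This would give $\Lambda(x)=\Lambda'(x')\ge 0$.

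I will verify this carefully via the integral form of $\Lambda_{\mathrm{gen}}$ rather than the discrete hazard-rate form, since the integral form makes the translation-invariance transparent. Thus $\mathrm{SB}\ge\mathrm{SB}'\ge\beta\,\mathrm{FB}'=\beta\,\mathrm{FB}$.
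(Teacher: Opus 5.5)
Your overall plan is sound. The shift gives $\mathrm{FB}'=\mathrm{FB}$. Once $x'(\cdot,0)=0$, the pull-back identity $\Lambda_D(x)=\Lambda_D'(x')$ is exact: for $b>m$ one has $1-F_B(b)=1-F_B'(b-m)$ and $F_S(s-1)=F_S'(s-m-1)$, so every pointwise coefficient matches, and the boundary discrepancy you worry about only multiplies zero allocations. The gap is the preliminary step asserting that replacing $x'(\cdot,0)$ by $0$ keeps $x'$ feasible. You check the buyer side and $\Lambda$, but not the seller side. Zeroing subtracts $f_B'(0)\,x'(s',0)$ from $x_S'(s')$, and for a rule that is only interim monotone (which is all that feasibility in Corollary~\ref{prop:discrete} asks for) this can make $x_S'$ increasing. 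For example, take $f_S'=(\tfrac12,\tfrac12)$ on $\{0,1\}$, $f_B'$ uniform on $\{0,1,2\}$, and $x'(0,0)=x'(1,1)=\tfrac14$, $x'(0,2)=x'(1,2)=1$, with all other entries $0$. Then $x_S'(0)=x_S'(1)=\tfrac{5}{12}$, $x_B'=(\tfrac18,\tfrac18,1)$ and $\Lambda_D(x')=\tfrac16>0$, so $x'$ is feasible. After zeroing the column $b'=0$, however, $x_S'(0)=\tfrac13<\tfrac{5}{12}=x_S'(1)$. So your argument does not yet establish $\mathrm{SB}'\le\mathrm{SB}$.

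The problem comes from pooling the mass of $\{b\le m\}$ into an atom at $0$, which forces you to discard a whole column. The paper instead conditions on $\{b\ge m\}$. Let $p=\Pr[b\ge m]>0$ and $\widetilde f_B(b)=f_B(b+m)/p$, and pull back by $x(s,b)=\widetilde x(s-m,b-m)$ for $b\ge m$ and $x=0$ for $b<m$. This discards nothing from $\widetilde x$ and gives $x_S(s)=p\,\widetilde x_S(s-m)$, which is automatically monotone. Every pointwise coefficient of $\mathrm{GFT}$ and $\Lambda_D$ at $b\ge m$ is exactly $p$ times the corresponding one in the conditioned instance, so $\mathrm{GFT}(x)=p\,\mathrm{GFT}(\widetilde x)$, $\Lambda_D(x)=p\,\Lambda_D(\widetilde x)$ and $\mathrm{FB}=p\,\widetilde{\mathrm{FB}}$. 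The paper runs this at the level of the Lagrangian $\mathrm{GFT}+\alpha\Lambda_D-\beta\,\mathrm{FB}$ for fixed $\alpha$, and also notes that $\alpha$-weak regularity survives the shift. The one place it does zero entries (the region $b<m$, justified by the sign of $\phi_B(b)-\phi_S(s)$), it works inside the pointwise-monotone class $\mathcal R_2$, where deleting bottom buyer columns preserves monotonicity. Replacing your pooling by this conditioning repairs your primal argument. Keeping the pooling would instead require first showing that $\mathrm{SB}'$ is attained by a pointwise-monotone rule.
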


We now turn to the second
reduction, which concerns the regularity structure generated by the
Lagrangian. For a fixed multiplier \(\alpha\), the coefficient of
\(x(s,b)\) in
\(\mathrm{GFT}(x)+\alpha\Lambda_D(x)\) can be written as an
\(\alpha\)-dependent virtual surplus:
\[
\left((1+\alpha)b-\alpha\frac{1-F_B(b)}{f_B(b)}\right)
-
\left((1+\alpha)s+\alpha\frac{F_S(s-1)}{f_S(s)}\right).
\]
Thus the virtual value that matters here is not the ordinary virtual value, but rather the following \(\alpha\)-weak regularity.

\begin{definition}[$\alpha$-Virtual Value and $\alpha$-Weak Regularity]\label{def:virtual-value}
Given a parameter $\alpha \ge 0$ and a discrete seller distribution $F_S$, we define the seller-side $\alpha$-virtual value as:
$$\phi_S(s) := (1+\alpha)s + \alpha \frac{F_S(s-1)}{f_S(s)} \quad \forall s \in \mathbb{Z}_{\ge0},$$
where we adopt the convention $F_S(-1) = 0$. We say that $F_S$ is \emph{$\alpha$-weakly regular} if $\phi_S(s)$ is weakly increasing in $s$.
\end{definition}

The seller distribution need not be \(\alpha\)-weakly regular a priori. When
\(\phi_S(s)\) fails to be weakly increasing, we replace it by an ironed
version. The purpose of ironing is to convexify the cumulative
\(\alpha\)-virtual surplus so that the resulting slopes, i.e., the ironed
\(\alpha\)-virtual values, become monotone.

\begin{definition}[Ironed $\alpha$-Virtual Value]\label{def:ironing}
Let $F_S$ be a discrete seller distribution. Let $q_s := F_S(s)$ for $s \in \mathbb{Z}_{\ge0}$ (with $q_{-1} := 0$), and let $\Phi_S: [0,1] \to \mathbb{R}_{\ge 0}$ be the piecewise-linear function satisfying:
$$\Phi_S(q_s) := \sum_{s'=0}^s \phi_S(s')f_S(s').$$
Let $\overline{\Phi}_S$ denote the convex envelope (i.e., the greatest convex minorant) of $\Phi_S$ on $[0,1]$. The \emph{ironed $\alpha$-virtual value}, denoted $\phi_S^{\mathrm{ir}}$, is defined as the slope of $\overline{\Phi}_S$ on each interval $(q_{s-1}, q_s]$:
$$\phi_S^{\mathrm{ir}}(s) := \frac{\overline{\Phi}_S(F_S(s)) - \overline{\Phi}_S(F_S(s-1))}{f_S(s)}.$$
By construction, $\phi_S^{\mathrm{ir}}$ is weakly increasing in $s$. If $F_S$ is $\alpha$-weakly regular, then $\overline{\Phi}_S = \Phi_S$, yielding $\phi_S^{\mathrm{ir}} = \phi_S$. We write $F_S^{\mathrm{ir}}$ for the $\alpha$-weakly regular distribution associated with $\phi_S^{\mathrm{ir}}$.
\end{definition}

With the ironing operation in place, we can prove that there exists a solution $x$ possessing the following monotonicity property. After ironing, the
Lagrangian objective admits an optimizer with a stronger point-wise monotonicity
property, which is the form needed for the later greedy construction.

\begin{lemma}\label{lem:position-wise-monotone}
For any \(\alpha\ge 0\), the Lagrangian problem can be restricted, without loss of generality, to the set of point-wise monotone allocations:
$$\mathcal{R}_2(W) := \left\{ x : \mathbb{Z}_W^2 \to [0,1] \;\middle|\; 
\begin{array}{l} 
x(s, b) \le x(s, b') \text{ for all } b < b' \\ 
x(s, b) \ge x(s', b) \text{ for all } s < s' 
\end{array} 
\right\}.$$
\end{lemma}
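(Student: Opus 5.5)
The plan is to show that, for fixed \(\alpha\ge0\) and fixed \((F_S,F_B)\) (after replacing \(F_S\) by its ironed version if needed, which by the discussion above only makes the ratio harder), \(\sup_{x\in\mathcal R_1(W)}L(x,\alpha)\) is attained by some \(x\in\mathcal R_2(W)\). The starting point is that \(L(x,\alpha)=\sum_{s,b}x(s,b)f_S(s)f_B(b)\,(\phi_B(b)-\phi_S(s))\), where \(\phi_B(b)=(1+\alpha)b-\alpha\frac{1-F_B(b)}{f_B(b)}\). The objective is linear and separable in the cells, so, ignoring the interim monotonicity constraints, a pointwise optimizer is \(x^\ast(s,b)=\mathbf 1[\phi_B(b)\ge\phi_S(s)]\). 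If both virtual values are weakly increasing, this indicator is increasing in \(b\) and decreasing in \(s\). It therefore lies in \(\mathcal R_2(W)\subseteq\mathcal R_1(W)\), since pointwise monotonicity implies interim monotonicity after averaging. Because \(\mathcal R_1(W)\) is contained in the unconstrained box, \(x^\ast\) is optimal over \(\mathcal R_1(W)\).

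The main obstacle is that neither \(\phi_S\) nor \(\phi_B\) need be monotone. I would handle each side by Myerson-style ironing in quantile space. For the seller, write
\[
\sum_s x(s,b)f_S(s)\phi_S(s)=\sum_s x(s,b)\bigl(\Phi_S(q_s)-\Phi_S(q_{s-1})\bigr).
\]
Applying Abel summation gives
\[
\sum_s \Phi_S(q_s)\bigl(x(s,b)-x(s+1,b)\bigr)+\text{boundary term},
\]
and the same identity holds with \(\overline\Phi_S\) and \(\phi^{\mathrm{ir}}_S\) in place of \(\Phi_S\) and \(\phi_S\). Since \(\overline\Phi_S\le\Phi_S\) with equality at the endpoints, swapping \(\phi_S\) for \(\phi_S^{\mathrm{ir}}\) changes the objective by
\[
\sum_s(\Phi_S-\overline\Phi_S)(q_s)\bigl(x(s,b)-x(s+1,b)\bigr).
\]
The weight \(f_B(b)\) is common to the whole column \(b\), so summing the columns over \(b\) reduces this to \(\sum_s(\Phi_S-\overline\Phi_S)(q_s)\,(x_S(s)-x_S(s+1))\), which involves only the interim seller allocation. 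For \(x\in\mathcal R_1(W)\) each increment \(x_S(s)-x_S(s+1)\) is nonnegative, so the Lagrangian with \(\phi_S\) is bounded above by the Lagrangian with \(\phi_S^{\mathrm{ir}}\). Equality holds whenever \(x_S\) is constant on each ironing interval, i.e.\ wherever \(\Phi_S>\overline\Phi_S\). The buyer side is treated symmetrically, with \(\Phi_B\) built from the upper tail and \(x_B\) nondecreasing.

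Next I would take the pointwise maximizer of the ironed objective, \(\tilde x(s,b)=\mathbf 1[\phi_B^{\mathrm{ir}}(b)\ge\phi_S^{\mathrm{ir}}(s)]\). The ironed values are constant on ironing intervals, so ties are broken consistently across each interval. Then \(\tilde x\) is pointwise monotone, i.e.\ lies in \(\mathcal R_2(W)\). It is also constant across each ironing interval, so the two objectives agree at \(\tilde x\). For any \(x\in\mathcal R_1(W)\) this yields the chain
\[
L(x)\le L^{\mathrm{ir}}(x)\le L^{\mathrm{ir}}(\tilde x)=L(\tilde x).
\]
Hence the supremum over \(\mathcal R_1(W)\) is attained in \(\mathcal R_2(W)\). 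The delicate points I expect are the boundary conventions: \(F_S(-1)=0\) and the top buyer type. The zero-mass types excluded by Lemma~\ref{lem:wlog-fS0} and by the normalization, where virtual values are undefined, can be given arbitrary monotone allocation without affecting any term.
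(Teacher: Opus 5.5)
Apart from one side remark addressed below, your argument is correct. It rests on the same idea as the paper's proof: write $L(\cdot,\alpha)=\sum_{s,b}x(s,b)f_S(s)f_B(b)\bigl(\phi_B(b)-\phi_S(s)\bigr)$ and iron both virtual values. The difference is in how you finish. The paper only asserts that averaging an optimizer over the ironing intervals produces an element of $\mathcal R_2(W)$ without lowering the objective. That leaves implicit why the raw objective is dominated by the ironed one on $\mathcal R_1(W)$. It also overlooks that averaging within intervals does not by itself force pointwise monotonicity of an arbitrary optimizer.

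You supply both missing pieces. First, Abel summation against the nonnegative interim increments $x_S(s)-x_S(s+1)$ and $x_B(b+1)-x_B(b)$ gives $L(x)\le L^{\mathrm{ir}}(x)$ on all of $\mathcal R_1(W)$. Second, the explicit threshold rule $\tilde x(s,b)=\mathbf{1}[\phi_B^{\mathrm{ir}}(b)\ge\phi_S^{\mathrm{ir}}(s)]$ maximizes $L^{\mathrm{ir}}$ over the whole box, lies in $\mathcal R_2(W)$, and is constant on ironing intervals. Hence $L(x)\le L^{\mathrm{ir}}(x)\le L^{\mathrm{ir}}(\tilde x)=L(\tilde x)$. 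This shows directly that an optimizer exists in $\mathcal R_2(W)$, which is exactly what the lemma is used for, and it gives that optimizer in closed form. Your opening parenthetical about replacing $F_S$ by its ironed version is unnecessary, since your argument irons inside the objective for the given $F_S$.

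The side remark in your last sentence is wrong as stated: in the integer-grid Lagrangian, zero-mass types are not inert. The expansion displayed just before Theorem~\ref{thm:sion} shows this. If $f_B(b)=0$, the coefficient of $x(s,b)$ is $-\alpha f_S(s)\bigl(1-F_B(b)\bigr)$. If $f_S(s)=0$, it is $-\alpha f_B(b)F_S(s-1)$. Such types do occur, because Lemma~\ref{lem:wlog-fS0} only guarantees $f_S(0)>0$ and $F_B$ ranges over all of $\Delta(\mathbb Z_W)$.

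The fix is short, because these coefficients are nonpositive. Interim monotonicity gives $x_S(s)\ge x_S(s^{+})$, where $s^{+}$ is the next seller support point above $s$. It also gives $x_B(b)\ge x_B(b^{-})$, where $b^{-}$ is the previous buyer support point below $b$. Use $0$ in either bound when no such point exists. Substituting these bounds can only increase $L$. It collapses $L$ to the support-only Lagrangian with the gap-weighted virtual values of Corollary~\ref{prop:discrete}, to which your ironing argument applies verbatim. Finally, extend $\tilde x$ to the zero-mass rows and columns by the same copying, or by $0$ where there is nothing to copy. This keeps $\tilde x$ in $\mathcal R_2(W)$ and makes every bound tight, so your chain of inequalities goes through.
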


Applying Lemma~\ref{lem:position-wise-monotone} to the preceding worst-case
program, the supremum over \(\mathcal R_1(W)\) may be replaced by a supremum over
\(\mathcal R_2(W)\). Hence it suffices to verify
\[
\inf_{\alpha\ge 0}
\ \inf_{W\ge0}
\ \inf_{F_S,F_B\in\Delta(\mathbb Z_{W})}
\ \sup_{x\in\mathcal R_2(W)}
\ \big[
\mathrm{GFT}(x)+\alpha\Lambda_D(x)-\beta\cdot\mathrm{FB}
\big]
\ge 0.
\]
 
Furthermore, we can restrict our focus entirely to $\alpha$-weakly regular distributions. The following lemma establishes that applying our ironing procedure to an irregular distribution constitutes a worst-case reduction, as it can only decrease the overall approximation ratio.

\begin{lemma}\label{lem:ironing-ratio}
Fix \(\alpha\ge 0\). Replacing \(F_S\) with its ironed version \(F_S^{\text{ir}}\) leaves \(\mathrm{SB}\) unchanged, but weakly increases \(\mathrm{FB}\). Consequently,
\[
\frac{\mathrm{SB}_{F_S^{\text{ir}},F_B}}{\mathrm{FB}_{F_S^{\text{ir}},F_B}}
\le
\frac{\mathrm{SB}_{F_S,F_B}}{\mathrm{FB}_{F_S,F_B}}.
\]
\end{lemma}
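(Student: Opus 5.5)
The plan is to work at the fixed multiplier \(\alpha\) throughout. I read ``\(\mathrm{SB}\) unchanged'' as the statement that the \(\alpha\)-Lagrangian value \(\sup_{x\in\mathcal R_2(W)}[\mathrm{GFT}(x)+\alpha\Lambda_D(x)]\) is the same under \(F_S\) and under \(F_S^{\mathrm{ir}}\). This is the quantity that enters the reduced inequality displayed after Lemma~\ref{lem:position-wise-monotone}, so it is what the reduction needs. The ironed distribution \(F_S^{\mathrm{ir}}\) is determined by the convex envelope \(\overline{\Phi}_S\). Outside the ironing intervals it coincides with \(F_S\). On each maximal ironing interval \(\{l,\dots,r\}\), where \(\overline{\Phi}_S<\Phi_S\) strictly inside, I would fix the two endpoint cumulatives \(q_{l-1}\) and \(q_r\) and redistribute the mass \(F_S(r)-F_S(l-1)\) among \(l,\dots,r\). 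The redistribution is chosen so that the new \(\alpha\)-virtual values are all equal to the common slope \(\phi^{\mathrm{ir}}\) of \(\overline{\Phi}_S\) on \([q_{l-1},q_r]\).

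The first step is to show that this redistribution exists and to locate the new mass. I would solve the recursion \((1+\alpha)s+\alpha F^{\mathrm{ir}}(s-1)/f^{\mathrm{ir}}(s)=\phi^{\mathrm{ir}}\) forward from \(s=l\), with \(F^{\mathrm{ir}}(l-1)=q_{l-1}\). This gives \(f^{\mathrm{ir}}(s)=\alpha F^{\mathrm{ir}}(s-1)/(\phi^{\mathrm{ir}}-(1+\alpha)s)\), which requires \(\phi^{\mathrm{ir}}>(1+\alpha)s\) on the interval. If the mass runs out before \(r\), the remaining points are left empty. Matching the total mass \(q_r\) is then a one-dimensional consistency condition. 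It should follow from the identity \(\overline{\Phi}_S(q_r)-\overline{\Phi}_S(q_{l-1})=\Phi_S(q_r)-\Phi_S(q_{l-1})\), because the cumulative virtual surplus \(\sum_s\phi(s)f(s)\) telescopes to \((1+\alpha)\sum_s s f(s)+\alpha\,(\text{boundary terms in }F)\).

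The second step is the comparison \(F_S^{\mathrm{ir}}(s)\ge F_S(s)\) for every \(s\) in the interval. I would obtain it from \(\overline{\Phi}_S\le\Phi_S\): a pointwise lower virtual surplus below a given quantile forces the seller mass to sit at lower values. So \(F_S\) first-order stochastically dominates \(F_S^{\mathrm{ir}}\). Since \(\mathrm{FB}=\mathbb E[(b-s)^+]\) is nonincreasing in \(s\), this immediately gives \(\mathrm{FB}_{F_S^{\mathrm{ir}},F_B}\ge \mathrm{FB}_{F_S,F_B}\).

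For the Lagrangian value, I would use the standard ironing argument.
\begin{itemize}
\item For any \(x\in\mathcal R_2(W)\), write the seller side as \(\sum_s\phi_S(s)f_S(s)\,x(s,b)\). Summing by parts in quantile space expresses this as an integral of \(\Phi_S\) against the decreasing function \(x(\cdot,b)\).
\item Replacing \(\Phi_S\) by \(\overline{\Phi}_S\) can only raise the objective, since \(\overline{\Phi}_S\le\Phi_S\) and the \(\Phi\)-term enters with a negative sign. The two objectives agree whenever \(x(\cdot,b)\) is constant on every ironing interval.
\item An optimizer of the \(\overline{\Phi}_S\)-objective can be taken constant on ironing intervals, because its coefficients are constant there. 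Averaging \(x\) over each interval with weights \(f_S\) therefore loses nothing and preserves pointwise monotonicity.
\end{itemize}
Under \(F_S^{\mathrm{ir}}\), the function \(\Phi\) equals \(\overline{\Phi}_S\) by construction, and the buyer-side terms are untouched. Both suprema therefore equal the same \(\overline{\Phi}_S\)-problem.

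The step I expect to be the main obstacle is the redistribution one: the existence of a valid integer-support \(F_S^{\mathrm{ir}}\) realizing the constant slope \(\phi^{\mathrm{ir}}\). This requires checking \(\phi^{\mathrm{ir}}>(1+\alpha)s\) on the interval, nonnegativity of the masses, and exact matching of \(q_r\). If the forward recursion fails to match \(q_r\) exactly, I would instead allow \(F_S^{\mathrm{ir}}\) to place the residual mass at \(l\) (or at \(l-1\)). The cost is a weak inequality on the Lagrangian, and the inequality goes in the harmless direction for the ratio. The ratio claim then follows by combining the equal \(\alpha\)-Lagrangian values with the larger \(\mathrm{FB}\).
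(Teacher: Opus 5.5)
There is a genuine gap. It sits where you expected, but it also undermines your second step. Both steps are governed by the summation-by-parts identity, which holds for every seller distribution:
\[
\sum_{s=l}^{r}\phi_S(s)f_S(s)=(1+\alpha)\bigl(r\,q_r-(l-1)\,q_{l-1}\bigr)-\sum_{t=l-1}^{r-1}F_S(t).
\]
Suppose $F_S^{\mathrm{ir}}$ keeps $q_{l-1}$ and $q_r$ and has all its $\alpha$-virtual values on $\{l,\dots,r\}$ equal to $\phi^{\mathrm{ir}}$. Then its left-hand side is $\phi^{\mathrm{ir}}(q_r-q_{l-1})=\Phi_S(q_r)-\Phi_S(q_{l-1})$. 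Hence necessarily $\sum_{t=l}^{r-1}F_S^{\mathrm{ir}}(t)=\sum_{t=l}^{r-1}F_S(t)$, i.e.\ the conditional mean on the interval is preserved. Combined with your claim $F_S^{\mathrm{ir}}\ge F_S$ pointwise, this forces $F_S^{\mathrm{ir}}=F_S$ on the interval, i.e.\ no ironing at all. So your Steps 1 and 2 cannot both hold when ironing is nontrivial, and ``$(b-s)^+$ is nonincreasing in $s$'' cannot be the reason $\mathrm{FB}$ goes up. Since $\mathrm{FB}=\sum_b f_B(b)\sum_{t<b}F_S(t)$, what you need is the second-order comparison $\sum_{t<b}F_S^{\mathrm{ir}}(t)\ge\sum_{t<b}F_S(t)$ for all $b$ (a mean-preserving spread inside each interval). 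That comparison uses convexity of $(b-s)^+$ in $s$, not only monotonicity. For comparison, the paper never constructs $F_S^{\mathrm{ir}}$. Its Lagrangian argument is essentially your three bullets, and it derives the $\mathrm{FB}$ comparison from an integrated order built on $\Phi_S(s)=(1+\alpha)sF_S(s)-\sum_{t<s}F_S(t)$, not from pointwise dominance.

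The same identity shows why mass matching in Step 1 does not follow from $\overline{\Phi}_S(q_r)-\overline{\Phi}_S(q_{l-1})=\Phi_S(q_r)-\Phi_S(q_{l-1})$. That equality is a consequence of matching, not a cause. The forward recursion from $q_{l-1}$ at the fixed slope $\phi^{\mathrm{ir}}$ has no free parameter left with which to hit $q_r$.

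Concretely, take $\alpha=1$ and $f_S=(0.5,0.05,0.45)$ on $\{0,1,2\}$. Then $\phi_S=(0,12,47/9)$, and the envelope irons $\{1,2\}$ with slope $5.9$. The forced recursion gives $f_S^{\mathrm{ir}}(1)=5/39$ and $F_S^{\mathrm{ir}}(2)\approx0.959\ne1$. So no integer-supported distribution with $F(0)=0.5$, $F(2)=1$ and virtual values $5.9$ on $\{1,2\}$ exists; by the identity it would need $F(1)=0.55$, hence $\phi(1)=12$.

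Your fallback does not repair this. Dumping the residual at $l$ or $l-1$ makes the new $\Phi$ differ from $\overline{\Phi}_S$, so your equal-Lagrangian argument no longer applies. Worse, the Lagrangian inequality then goes the wrong way. Let $F_B$ be a point mass at some $B\ge W$ with $(1+\alpha)B$ above every seller $\alpha$-virtual value. Then
\[
\sup_x\bigl[\mathrm{GFT}(x)+\alpha\Lambda_D(x)\bigr]=(1+\alpha)(B-W)+\sum_{t<W}F_S(t),
\]
which strictly increases whenever seller mass moves to lower types. In the example it rises from $2(B-2)+1.05$ to about $2(B-2)+1.17$ if the residual goes to type $1$. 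So the ``harmless direction'' claim is unsupported, and any rescue would need a quantitative trade-off against the increase in $\mathrm{FB}$, which you have not supplied.

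Leaving points empty in the overshoot case fails differently. Empty points have infinite $\alpha$-virtual value, which breaks weak regularity against any positive-mass type above $r$. So realizing the ironed virtual values by an $\alpha$-weakly regular integer-supported distribution, or avoiding the need for such a realization, is a genuinely missing step, not a technicality.
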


Combining the support normalization, the point-wise monotonicity reduction,
and the ironing reduction, the universal efficiency lower bound is reduced to
checking the following minimax inequality over \(\alpha\)-weakly regular seller
distributions:

\[
\inf_{\alpha\ge 0}
\ \inf_{W\ge0}
\ \inf_{\substack{
    \alpha\text{-weakly regular }F_S\\
    F_S\in\Delta(\mathbb Z_{W})
}}
\ \inf_{F_B\in\Delta(\mathbb Z_{W})}
\ \sup_{x\in\mathcal R_2(W)}
\ \big[
\mathrm{GFT}(x)+\alpha\Lambda_D(x)-\beta\cdot\mathrm{FB}
\big]
\ge 0.
\]

\subsection{The Minimax Swap}
To establish the universal lower bound $\beta$, we apply Sion's Minimax Theorem~\cite{sion1958general} to the inner optimization. Note that the space of allocations ${\cal R}_2$ is compact and convex, the domain of buyer distributions $\Delta(\mathbb{Z}_{W})$ is compact and convex, and our objective function below is a linear function of the probabilities of the discrete values, satisfying the necessary continuity and curvature requirements.
\begin{align*}
    &\mathrm{GFT}(x)+\alpha\cdot\Lambda_D(x)-\beta\cdot \mathrm{FB} \\
    =&\sum_{s,b} x(s,b)\,f_S(s)\,f_B(b)\,(b-s) +\alpha\cdot\sum_{s,b} x(s,b)
    \left[
        f_S(s)f_B(b)(b-s)
        -f_S(s)\sum_{b'>b} f_B(b')
        -f_B(b)F_S(s{-}1)
    \right] \\
    &-\beta\cdot\sum_{b}\sum_{s<b}(b-s)f_B(b)f_S(s).
\end{align*}

\begin{theorem}[Sion's Minimax Theorem~\cite{sion1958general}]
\label{thm:sion}
Let $X$ be a compact convex subset of a linear topological space, and let $Y$ be a convex subset of a linear topological space. Let
\[
f:X\times Y\to \mathbb{R}
\]
be a function such that:

\begin{enumerate}
    \item for every $y\in Y$, the function $x\mapsto f(x,y)$ is upper semicontinuous and quasiconcave on $X$;
    \item for every $x\in X$, the function $y\mapsto f(x,y)$ is lower semicontinuous and quasiconvex on $Y$.
\end{enumerate}

Then
\[
\sup_{x\in X}\inf_{y\in Y} f(x,y)
=
\inf_{y\in Y}\sup_{x\in X} f(x,y).
\]
\end{theorem}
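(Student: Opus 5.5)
The statement is Sion's classical minimax theorem, cited from \cite{sion1958general}. The paper only invokes it, but here is the standard proof I would follow.

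\textbf{Easy direction.} For any $x'\in X$ and $y'\in Y$ we have $\inf_y f(x',y)\le f(x',y')\le \sup_x f(x,y')$. Taking the supremum over $x'$ and then the infimum over $y'$ gives
\[
\sup_x\inf_y f\le \inf_y\sup_x f .
\]

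\textbf{Reverse inequality: setup.} Suppose for contradiction that there is a real $c$ with $\sup_x\inf_y f<c<\inf_y\sup_x f$. For each $y$ define the set $A_y:=\{x\in X: f(x,y)\ge c\}$.
\begin{itemize}
\item By upper semicontinuity, each $A_y$ is closed in the compact set $X$.
\item Since $c>\sup_x\inf_y f$, every $x$ has some $y$ with $f(x,y)<c$. Hence $\bigcap_y A_y=\emptyset$.
\item By compactness (finite intersection property), finitely many $y_1,\dots,y_n$ already satisfy $\bigcap_{i\le n} A_{y_i}=\emptyset$, that is, $\max_x\min_i f(x,y_i)<c$.
\end{itemize}
It therefore suffices to prove the finite version: for finitely many $y_i$, $\sup_x\min_i f(x,y_i)\ge \inf_{y\in\mathrm{conv}\{y_i\}}\sup_x f(x,y)$. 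Since $\mathrm{conv}\{y_i\}\subseteq Y$, the right side is at least $\inf_{y\in Y}\sup_x f>c$, which contradicts $\max_x\min_i f(x,y_i)<c$.

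\textbf{Reduction to two points.} I would prove the finite version by induction on $n$, with $n=2$ as the key lemma. Assuming the two-point case, apply it on the restricted set $X':=\{x: f(x,y_n)\ge c'\}$. This set is compact by upper semicontinuity and convex by quasiconcavity, so it is a valid domain. Applying the two-point lemma there reduces the number of $y$'s by one.

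\textbf{Two-point lemma (main obstacle).} Suppose $\max_x\min(f(x,y_1),f(x,y_2))<c<\inf_{z\in[y_1,y_2]}\sup_x f(x,z)$, and choose $c'<c$ still above the left side.
\begin{itemize}
\item For each $z$ on the segment $[y_1,y_2]$, every $x$ satisfies $f(x,z)\le\max(f(x,y_1),f(x,y_2))$ by quasiconvexity in $y$.
\item Let $B_i:=\{x:f(x,y_i)\ge c'\}$. These are disjoint, nonempty, closed sets. Write $C_z:=\{x:f(x,z)\ge c'\}$; it is convex and nonempty, so it is connected. By the inequality above, $C_z\subseteq B_1\cup B_2$. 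Since $B_1,B_2$ are disjoint closed sets, connectedness puts $C_z$ entirely inside $B_1$ or entirely inside $B_2$.
\item Let $I:=\{z: C_z\subseteq B_1\}$ and $J:=\{z:C_z\subseteq B_2\}$. These partition the segment, with $y_1\in I$ and $y_2\in J$.
\item Show $I$ and $J$ are both open in the segment. Take $z_0\in I$ and a point $\bar x$ with $f(\bar x,z_0)>c$. Lower semicontinuity in $y$ gives $f(\bar x,z)>c$ for $z$ near $z_0$, so $\bar x\in C_z\cap B_1$, which forces $C_z\subseteq B_1$. The same argument applies to $J$.
\item A segment cannot be partitioned into two nonempty open sets, which is the contradiction.
\end{itemize}
Getting the strict-versus-nonstrict levels $c'<c$ right so that both openness claims hold is the delicate point.

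In the paper's application everything is finite-dimensional and $f$ is bilinear and continuous. There the theorem also follows directly from LP duality, or from von Neumann's minimax theorem over the compact polytopes $\mathcal R_2(W)$ and $\Delta(\mathbb Z_W)$, which is a simpler route I would note as a fallback.
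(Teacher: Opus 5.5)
The paper does not prove this statement; it quotes Sion and uses the theorem as a black box. Your outline is Komiya's elementary proof. Three of its pieces are sound: the easy inequality, the finite-intersection reduction to finitely many $y_i$, and the two-point connectedness lemma. The auxiliary level $c'$ is not actually needed in the lemma in this orientation. The strict inequality $c<\inf_z\sup_x f(x,z)$ already gives $\bar x$ with $f(\bar x,z_0)>c$, so the openness argument runs at the single level $c$.

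The gap is in the reduction from $n$ points to two. On $X'=\{x: f(x,y_n)\ge c'\}$, what you know is $\max_{x\in X'}\min_{i<n}f(x,y_i)<c'$. That involves $n-1$ points, not two, so the two-point lemma cannot be applied there and does not drop a point. The two tools must be used in the opposite places.

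First, prove the finite claim in strict existential form, for every nonempty compact convex domain $D$: if $\max_{x\in D}\min_{i\le n}f(x,y_i)<c$, then some $y_0\in\mathrm{conv}\{y_1,\dots,y_n\}$ has $\sup_{x\in D}f(x,y_0)<c$. The domain must be arbitrary because the induction changes it.

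For the inductive step, set $X'=\{x\in D: f(x,y_n)\ge c\}$.
\begin{itemize}
\item If $X'=\emptyset$, take $y_0=y_n$.
\item Otherwise, apply the $(n-1)$-point \emph{induction hypothesis} on $X'$. This gives $y_0'\in\mathrm{conv}\{y_1,\dots,y_{n-1}\}$ with $\sup_{x\in X'}f(x,y_0')<c$.
\item Every $x\in D$ then satisfies $f(x,y_0')<c$ (if $x\in X'$) or $f(x,y_n)<c$ (if $x\notin X'$). By upper semicontinuity on the compact set $D$, $\max_{x\in D}\min\{f(x,y_0'),f(x,y_n)\}<c$.
\item Only now apply the two-point lemma, on all of $D$ and to the new pair $(y_0',y_n)$. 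It yields a point of $[y_0',y_n]\subseteq\mathrm{conv}\{y_1,\dots,y_n\}$ with $\sup_{x\in D}f<c$.
\end{itemize}
This gluing step is the missing idea; with it your argument is complete.

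Your fallback is also valid for how the paper actually uses the theorem. There the objective is affine in $x$ and in $f_B$ separately, over the compact polytopes $\mathcal R_2(W)$ and $\Delta(\mathbb Z_W)$, so von Neumann's theorem already suffices.
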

Applying Theorem~\ref{thm:sion} to our target inequality, we can swap the inner minimization and maximization operators. Then, the ratio $\beta$ is a lower bound if and only if the following expression holds:
\begin{align*}
    \inf_{\alpha\ge0}
    \ \inf_{W\ge0}
    \ \inf_{\substack{
    \alpha\text{-weakly regular }F_S\\
    F_S\in\Delta(\mathbb Z_{W})
    }}
    \ \sup_{x\in\mathcal{R}_2(W)}
    \ \inf_{ F_B\in\Delta(\mathbb{Z}_{W})}
    \ \big[\mathrm{GFT}(x)+\alpha\cdot\Lambda_D(x)-\beta\cdot \mathrm{FB}\big]\ge 0.
\end{align*}
The preceding inequality is formulated for a fixed finite truncation
\(\mathbb Z_W\). We now remove this truncation by proving a stronger
statement. For any fixed \(W\) and any
\(F_S\in\Delta(\mathbb Z_W)\), an allocation rule
\(x\in\mathcal R_2\) defined on
\(\mathrm{Supp}(F_S)\times\mathbb Z_{\ge0}\) restricts to an element of
\(\mathcal R_2(W)\) on \(\mathrm{Supp}(F_S)\times\mathbb Z_W\), because
point-wise monotonicity is preserved under restriction. Moreover,
\(\Delta(\mathbb Z_W)\subseteq \Delta(\mathbb Z_{\ge0})\), so any guarantee
that holds for all buyer distributions on \(\mathbb Z_{\ge0}\) also holds for
all buyer distributions supported on \(\mathbb Z_W\). Hence it suffices to
prove
\begin{align}\label{eq:sion}
    \inf_{\alpha\ge0}
    \ \inf_{
    \alpha\text{-weakly regular }F_S}\ \sup_{x\in\mathcal{R}_2}
    \ \inf_{ F_B\in\Delta(\mathbb{Z}_{\ge0})}
    \ \big[\mathrm{GFT}(x)+\alpha\cdot\Lambda_D(x)-\beta\cdot \mathrm{FB}\big]
    \ge 0.
\end{align}

This reformulation shifts our perspective to a mechanism design game: for any fixed $\alpha$ and $\alpha$-weakly regular seller distribution $F_S$, we must construct a single allocation rule $x\in{\cal R}_2$ that satisfies the $\beta$-efficiency bound for all possible buyer distributions $F_B$. If such a robust allocation can be constructed, then the ratio $\beta$ is established as a universal lower bound on the second-best benchmark.

\subsection{The Greedy Algorithm}
\label{sec:canonical-recursion}

Having fixed the multiplier $\alpha \ge 0$ and an $\alpha$-weakly regular seller distribution $F_S$, we now direct our attention to the inner minimax problem:
\[
\sup_{x\in\mathcal R_2}\;
\inf_{F_B\in\Delta(\mathbb Z_{\ge 0})}
\Bigl[\mathrm{GFT}(x)+\alpha\cdot \Lambda_D(x)-\beta\cdot \mathrm{FB}\Bigr].
\]

At first sight, this is a global optimization problem, since the allocation rule \(x\) must simultaneously guarantee the desired lower bound for every buyer distribution \(F_B\). The key observation, however, is that once \(F_S\) is fixed, the bracketed objective is linear with respect to the buyer's discrete probability masses $f_B$. Therefore, the adversarial choice of \(F_B\) can be reduced to a family of local constraints indexed by the buyer value \(b\). In this way, the universal requirement over all buyer distributions is transformed into a buyer-wise feasibility condition.

The formal argument proceeds in three steps. First, we decompose the global
objective into buyer-wise coefficients. Second, we use these coefficients to
derive a sequential construction of the allocation rule. Finally, we prove
that this greedy construction is equivalent to the original maximin problem:
the algorithm succeeds if and only if there exists an allocation rule that certifies the desired lower bound.

\paragraph{Buyer-Wise Decomposition.}
We begin with the first step. Observe that, for a fixed allocation \(x\), the quantity
\(\mathrm{GFT}(x)+\alpha\Lambda_D(x)-\beta\,\mathrm{FB}\)
is linear in the probabilities \(f_B\). We may therefore represent it in the following form
\begin{equation}
\mathrm{GFT}(x)+\alpha\Lambda_D(x)-\beta\,\mathrm{FB}
=
\sum_{b\ge 0} f_B(b)\,H(b;x),
\label{eq:buyer-wise-decomp}
\end{equation}

\vspace{-0.5cm}
where
\begin{equation}
H(b;x)
:=
\sum_{0\le s<b} f_S(s)\Biggl[
(b-s)\left((1+\alpha)x(s,b)-\beta\right)
-\alpha\sum_{s<s'<b}x(s',b)
-\alpha\sum_{s<b'<b}x(s,b')
\Biggr].
\label{eq:Hb-def}
\end{equation}

Because the probability masses $f_B(b)$ are non-negative and sum to one, the global objective is non-negative for every valid buyer distribution if and only if every term in the summation is individually non-negative. This yields the fundamental equivalence:

\begin{equation}
\mathrm{GFT}(x)+\alpha\Lambda_D(x)-\beta\,\mathrm{FB}\ge 0,
\; \forall F_B\in \Delta(\mathbb Z_{\ge 0})
\quad\Longleftrightarrow\quad
H(b;x)\ge 0,\; \forall b\in \mathbb Z_{\ge 0}.
\label{eq:decomp-iff}
\end{equation}

Equation~\eqref{eq:decomp-iff} converts the original distributionally robust
constraint into a sequence of local feasibility requirements. It remains to
construct a monotone allocation rule $x\in\mathcal R_2$ satisfying all of
them. This suggests a natural greedy procedure. We process buyer values in
ascending order. When reaching value $b$, the profiles $x(\cdot,b')$ for
all $b'<b$ have already been fixed, so we need to determine the current profile
$x(\cdot,b)$. We first copy the previous profile
$x(\cdot,b-1)$ to preserve monotonicity. If this already gives
$H(b;x)\ge 0$, no change is needed. Otherwise, we increase the allocation at
the current buyer value by the smallest amount that makes the local
constraint tight.

The $\alpha$-weak regularity of $F_S$ ensures that this minimal correction
can be searched for in threshold form. This leads to the following greedy
construction.

\begin{algorithm}[H]
\LinesNumbered
\caption{Greedy Construction}
\label{alg:canonical-recursion}
\KwIn{The seller distribution $F_S$; parameters $\alpha\ge 0$ and $\beta\in(0,1)$.}
\KwOut{The final allocation rule $x$ or {\sc fail}}
\SetKwFunction{FMain}{Greedy Construction}
\SetKwProg{Fn}{Function}{:}{}
\Fn{\FMain}{
    Set \(x(\cdot,0) \equiv 0\) \;
    Set \(b \gets 1\), $t_0 \gets 0$ \;
    \While{true}{
        Set \(x(\cdot,b) \gets x(\cdot,b-1)\) and compute \(H(b;x)\) \;
        \If{\(H(b;x) \ge 0\)}{
            Set \(b \gets b+1\) \;
        }
        \Else{
            Set found \(\gets\) false \;
            \For{\(t_b = t_{b-1}, t_{b-1}+1, \dots, b-1\)}{
                \If{\((1+\alpha) b \le \phi_S(t_{b})\)}{
                \Return {\sc fail} \;
                }
                Choose \(x(t_b,b)\) such that \(H(b;x)=0\) \;
                \If{\(x(t_b,b) \in (0,1]\)}{
                    Set found \(\gets\) true \;
                    \textbf{break} \;
                }
                Set \(x(t_b,b) \gets 1\) \;
            }
            \If{found = true\nllabel{line:found-true}}{
                Set \(b \gets b + 1\) \;
            }
            \Else{
                \Return {\sc fail} \;
            }
        }
    }
}
\end{algorithm}

Algorithm~\ref{alg:canonical-recursion} is designed so that, whenever it successfully passes buyer
value $b$, the local constraint $H(b;x)\ge 0$ is satisfied while monotonicity
of the allocation is preserved. Therefore, if Algorithm~\ref{alg:canonical-recursion} never returns
\textsc{fail}, it directly constructs an allocation rule satisfying all
buyer-wise inequalities in \eqref{eq:decomp-iff}. The converse direction is
less immediate: we also need to show that if the greedy construction fails,
then no feasible monotone allocation rule can satisfy the remaining local
constraints. The next theorem formalizes this exact equivalence.

\begin{theorem}[Maximin-Greedy Equivalence]
\label{thm:minimax-recursion-equivalence}
Fix \(\alpha\ge 0\) and \(\beta\ge 0\), and let \(F_S\) be an $\alpha$-weakly regular seller distribution. Then 
    \[
    \sup_{x\in\mathcal R_2}\inf_{F_B\in\Delta(\mathbb Z_{\ge 0})}
    \left[
    \mathrm{GFT}(x)+\alpha\Lambda_D(x)-\beta\cdot \mathrm{FB}
    \right]\ge 0\quad\Longleftrightarrow \quad\text{The greedy algorithm never returns \textsc{fail}.}
    \]
\end{theorem}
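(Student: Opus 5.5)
The plan is to prove the two directions separately. Both rest on the buyer-wise equivalence \eqref{eq:decomp-iff} and on one explicit computation: the coefficient of a single entry \(x(s',b)\) in \(H(b;x)\). Collecting terms in \eqref{eq:Hb-def}, the entry \(x(s',b)\) (with \(s'<b\)) appears in two places. It appears with weight \(f_S(s')(1+\alpha)(b-s')\) from the first summand. It also appears with weight \(-\alpha\sum_{s<s'}f_S(s)=-\alpha F_S(s'-1)\) from the second summand. Its total coefficient is therefore
\[
c_b(s') \;=\; f_S(s')\bigl((1+\alpha)b-\phi_S(s')\bigr).
\]
The past entries \(x(s,b')\) with \(b'<b\) enter \(H(b;x)\) only through the third summand, with nonpositive weight \(-\alpha f_S(s)\) when \(s<b'\). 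Two facts follow from \(\alpha\)-weak regularity. First, the marginal gain per unit of seller mass, \((1+\alpha)b-\phi_S(s')\), is weakly decreasing in \(s'\). Second, once \((1+\alpha)b\le\phi_S(t)\), every entry at seller value \(t\) or above contributes nonpositively to \(H(b;\cdot)\).

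\textbf{Direction (\(\Leftarrow\)).} Suppose the greedy algorithm never fails. I will show by induction on \(b\) that each column \(x(\cdot,b)\) has threshold form: it equals \(1\) on \(\{s<t_b\}\), takes a value in \([0,1]\) at \(t_b\), and equals \(0\) above \(t_b\). The thresholds satisfy \(t_b\ge t_{b-1}\). The algorithm copies the previous column and then raises entries left to right starting at \(t_{b-1}\), so the threshold form is preserved. Threshold form gives \(x(s,b)\ge x(s',b)\) for \(s<s'\), and copy-then-raise gives \(x(s,b)\le x(s,b+1)\), so \(x\in\mathcal R_2\). Every column passed by the algorithm satisfies \(H(b;x)\ge 0\): either it holds after copying, or the algorithm sets it to equality. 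Hence by \eqref{eq:decomp-iff} the inner infimum is nonnegative for this \(x\), and so the supremum is nonnegative.

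\textbf{Direction (\(\Rightarrow\)).} I argue by contraposition. Suppose some \(x^*\in\mathcal R_2\) satisfies \(H(b;x^*)\ge 0\) for all \(b\); I will show the greedy run \(x^g\) never fails. The induction hypothesis is a dominance invariant: for every \(b\) and every seller value \(s\),
\[
x^g(s,b)\le x^*(s,b)\quad\text{in the aggregated sense}\quad
\sum_{s'\ge s} f_S(s')\,x^g(s',b)\;\le\;\sum_{s'\ge s} f_S(s')\,x^*(s',b),
\]
possibly supplemented by a bound on total column mass, \(\sum_s f_S(s)x^g(s,b)\le \sum_s f_S(s)x^*(s,b)\). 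The intuition is that greedy spends the least weighted mass needed for column \(b\) and places it at the seller values with the largest marginal gain. Since past entries only hurt future constraints, and hurt them with weights \(\alpha f_S(s)\) counted over \(s<b'\), the greedy history imposes a penalty on every future \(H(b;\cdot)\) no larger than the penalty from \(x^*\)'s history.

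For the inductive step at buyer value \(b\), I will do the following. First, use the invariant on columns \(b'<b\) to show that the penalty term of \(H(b;\cdot)\) under \(x^g\) is at most that under \(x^*\). Second, use the exchange argument implied by the monotone coefficients \(c_b(s')/f_S(s')\): among all columns that dominate \(x^g(\cdot,b-1)\) pointwise (which \(x^*(\cdot,b)\) does after a rearrangement of mass toward low \(s\), again by the invariant), the threshold column of least mass achieving \(H=0\) is feasible whenever \(x^*(\cdot,b)\) achieves \(H\ge0\). It then follows that the algorithm finds a \(t_b\) before \(\phi_S(t_b)\ge(1+\alpha)b\). If instead the algorithm were to hit that stopping condition, all remaining coefficients would be nonpositive, so \(x^*\) could not make \(H(b;\cdot)\) nonnegative either, a contradiction. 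Finally, I would check that the new greedy column still satisfies the invariant.

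I expect the main obstacle to be choosing the right dominance invariant in the \((\Rightarrow)\) direction. It must do two jobs. It must be weak enough to be preserved by greedy's "minimal mass, lowest seller values first" rule. It must also be strong enough to control the future penalty \(-\alpha\sum_{s<b'<b}x(s,b')\), which only counts seller values \(s<b'\) and so depends on where mass sits, not just on how much there is. I would first try the suffix-mass order above, since greedy mass sits at low \(s\), where penalties are counted, while \(x^*\) may put mass higher up. If that fails, I would fall back to comparing \(x^g\) with the ``greedified'' rearrangement of \(x^*\), meaning each column of \(x^*\) pushed left while keeping \(H\) unchanged, and prove directly that this rearrangement remains feasible in \(\mathcal R_2\).
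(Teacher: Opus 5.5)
Your plan is sound, and it reaches ($\Rightarrow$) by a different route than the paper; ($\Leftarrow$) is the same in both.

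\textbf{Comparison with the paper.} The paper normalizes an arbitrary certificate $x$ in three steps:
\begin{enumerate}
\item replace a column by its predecessor whenever that keeps $H(b;x)\ge 0$;
\item zero out entries with $(1+\alpha)b\le\phi_S(s)$;
\item shift mass leftward inside each column with $H(b;x)$ held fixed. Weak regularity then lowers the $f_S$-weighted mass, so later constraints only improve.
\end{enumerate}
It then asserts that the greedy reproduces the resulting threshold-form certificate. You instead keep $x^*$ fixed and compare it with the greedy run by induction. Both arguments rest on the two facts you isolated. First, the per-unit-mass coefficient $(1+\alpha)b-\phi_S(s)$ is nonincreasing in $s$. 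Second, column $b'$ enters every later $H(b;\cdot)$ exactly as $-\alpha\,m(b')$, where $m(b'):=\sum_{s<b'}f_S(s)x(s,b')$. Your comparison gives a self-contained proof. You never need a modified certificate to stay in $\mathcal R_2$; the paper's leftward shifts can raise $x(s-1,b)$ above $x(s-1,b+1)$, and it does not check this. You also make explicit the step the paper only asserts: the greedy survives whenever some certificate exists. In exchange, the paper's normalization exhibits the threshold structure that motivates its later state representation.

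\textbf{Fixing the invariant.} To close your induction, commit to the plain invariant $m^g(b)\le m^*(b)$, with masses restricted to $s<b$. Equivalently, first set $x^*(s,b)=0$ for $s\ge b$; this keeps $x^*\in\mathcal R_2$ and changes no $H$.
\begin{itemize}
\item Your worry that the penalty depends on where mass sits is unfounded. Within $s<b'$, each unit of weighted mass costs exactly $\alpha$, and entries with $s\ge b'$ appear in no $H$ at all.
\item The suffix-mass order as you wrote it, summed over all $s'$, does not give the penalty comparison. The reason is that $x^*$ may hold unpenalized mass at $s\ge b'$.
\end{itemize}

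\textbf{Closing the induction.} Write $H(b;x)=V_b(x(\cdot,b))-T_b$, where $V_b(c):=\sum_{s<b}f_S(s)\bigl((1+\alpha)b-\phi_S(s)\bigr)c(s)$ and $T_b:=\beta\sum_{s<b}(b-s)f_S(s)+\alpha\sum_{b'<b}m(b')$. The invariant on earlier columns gives $T_b^g\le T_b^*$. Since $m^*(b)\ge m^*(b-1)\ge m^g(b-1)$, the threshold-form column $\tilde c$ of mass $m^*(b)$ dominates $x^g(\cdot,b-1)$ pointwise. It also satisfies $V_b(\tilde c)\ge V_b(x^*(\cdot,b))\ge T_b^*\ge T_b^g$.
\begin{itemize}
\item This excludes both failure exits. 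At either exit, the current greedy column agrees with $\tilde c$ below $t_{b-1}$. It is at least $\tilde c$ where the coefficient is positive, and at most $\tilde c$ only where the coefficient is nonpositive. Hence its $V_b$ is at least $V_b(\tilde c)\ge T_b^g$, a contradiction.
\item It also forces $m^g(b)\le m^*(b)$. Any threshold-form column of smaller mass lies pointwise below the greedy column, strictly somewhere on positions with positive coefficient, and so has $V_b<T_b^g$.
\end{itemize}

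\textbf{A further point.} The hypothesis ``$\sup\ge 0$'' must first be turned into an actual certificate $x^*$; the paper also passes over this. It follows from compactness of $\mathcal R_2$ in the product topology. Each $H(b;\cdot)$ depends on finitely many coordinates, so $x\mapsto\inf_b H(b;x)$ is upper semicontinuous and its supremum is attained.
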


\section{\texorpdfstring{Lower Bound: $1/2$ Efficiency is Achievable}{Lower Bound: half Efficiency is Achievable}}

\label{sec:optimality}
In this section, we prove $\mathrm{SB}\ge \tfrac12\,\mathrm{FB}$. Relying on our minimax formulation in Eq.~\eqref{eq:sion}, it is sufficient to show that for any $\alpha \ge 0$ and any $\alpha$-weakly regular seller distribution, the inequality
\[
\sup_{x\in\mathcal{R}_2}\inf_{F_B\in\Delta(\mathbb{Z}_{\ge 0})}
\bigl[\mathrm{GFT}(x)+\alpha\cdot\Lambda_D(x)-\beta\cdot \mathrm{FB}\bigr]\ge 0
\]
holds with $\beta=\tfrac12$. Theorem~\ref{thm:minimax-recursion-equivalence} (Minimax-Greedy Equivalence) allows us to translate this global optimization bound into an algorithmic feasibility question. Specifically, the bound holds if and only if Algorithm~\ref{alg:canonical-recursion} avoids returning \textsc{fail} when initialized with $\beta = \tfrac12$.

We introduce in Section~\ref{sec:state} several important concepts used to analyze the greedy algorithm, including a state representation and a state-based reformulation of Algorithm~\ref{alg:canonical-recursion}, which we call Algorithm~\ref{alg:state-recursion}.
Section~\ref{sec:t1} provide tools, primarily establishing several reduction results. In Section~\ref{sec:general}, we prove that Algorithm~\ref{alg:state-recursion} does not fail. The proof proceeds by contradiction: we assume that a failure scenario exists, and then show that from such a scenario we can identify a ``shorter'' counterexample. Additional proofs and supplementary analysis can be found in Appendix~\ref{app:optimality}.

\subsection{State and Dominance Order}
\label{sec:state}

We first introduce a compact \emph{state} representation showing that Algorithm~\ref{alg:canonical-recursion}'s progress at each \emph{clear step} (a stage where the constraint binds tightly, $H(b;x) = 0$) is completely characterized by a finite-dimensional vector. We then equip the state space with a \emph{dominance order}. This allows us to establish a monotonicity result (Proposition~\ref{prop:mono}) that helps us determine the exact boundary of the set of all achievable states.

\paragraph{State Representation.}

By Algorithm~\ref{alg:canonical-recursion}, whenever a buyer value $b$ is successfully processed (i.e., the condition {\em found = true} is met in Line~\ref{line:found-true}), the resulting allocation vector $x(\cdot,b)$ exhibits a strict threshold structure. Specifically, there is a pair \((t_b,y)\) with $t_b \in \{0, \dots, b-1\}$ and $y \in (0, 1]$ such that:
\[
x(s,b) =
\begin{cases}
1, & s < t_b,\\
y\in(0,1], & s = t_b,\\
0, & s > t_b.
\end{cases}
\]
To analyze whether Algorithm~\ref{alg:canonical-recursion} can execute perpetually over the unbounded domain (since $b$ can reach $\infty$) without returning \textsc{fail}, we require a compact representation of the information carried from one step to the next. While Algorithm~\ref{alg:canonical-recursion}'s decisions formally depend on the full history of the constructed allocation rule, the threshold structure of $x(\cdot,b)$ 
ensures that it is enough to carry forward the allocation probability at the
threshold. This structural insight motivates the introduction of a \emph{state} variable that succinctly encodes this critical value.

For a seller distribution, we use zero-based seller-side ratio indices throughout:
\(r_i:=\frac{F_S(i-1)}{f_S(i)},\  i\ge0,\)
with the convention \(F_S(-1)=0\). Hence \(r_0=0\).  In particular, the
ratio suffix appearing in a state always starts from the current threshold
index \(t_b\), not from a locally renumbered coordinate.

\begin{definition}[State and Clear State]
\label{def:state}
We let
\(
\overline{\mathbb R}_{\ge0}:=\mathbb R_{\ge0}\cup\{\infty\}
\)
and
\(
\overline{\mathbb R}_{>0}:=\mathbb R_{>0}\cup\{\infty\}.
\)
An indexed \emph{state} is an expression of the form
\(\omega=(y\,;\,r_t,r_{t+1},\ldots,r_m), \  y\in(0,1],\)
where \(0\le t\le m\), \(r_t\in\overline{\mathbb R}_{\ge0}\), and
\(r_i\in\overline{\mathbb R}_{>0}\) for all \(i>t\).

We say a state \(\omega\) is \emph{achievable} if there exist an
\(\alpha\)-weakly regular seller distribution \(F_S\) with mass sequence \(f_S\)
and a buyer value \(b\) such that Algorithm~\ref{alg:canonical-recursion}
successfully processes \(b\) without returning \textsc{fail}
(i.e., \emph{found = true} holds in Line~18) and, with \(m=b-1\) and
\(t=t_b\), the components of \(\omega\) satisfy
\(y=x(t_b,b), \qquad r_i=\frac{F_S(i-1)}{f_S(i)} \quad\text{for }i=t_b,\ldots,b-1.\)
Furthermore, if a state is achieved with condition \(H(b;x)=0\), we refer to
\(\omega\) as a \emph{clear state}, and the algorithmic transition yielding it
as a \emph{clear step}.
\end{definition}

For any clear state reached at buyer value \(b\) with threshold index \(t_b\),
the vector \((r_{t_b},r_{t_b+1},\ldots,r_{b-1})\) represents the seller-side
inverse hazard ratios of the seller types at or above the threshold index. In
a clear state, these quantities together with the threshold allocation \(y\)
form a complete state that uniquely determines all future behavior of
Algorithm~\ref{alg:canonical-recursion}.

\paragraph{Base Clear State.}
At \(b=1\), the only seller type satisfying \(s<b\) is \(s=0\), so the
clear-state condition \(H(1;x)=0\) forces
\(x(0,1)=\frac{\beta}{1+\alpha}.\)
Since \(r_0=F_S(-1)/f_S(0)=0\), this determines the distinguished base clear
state
\(\omega_0:=\left(\frac{\beta}{1+\alpha}\,;\,r_0\right), \qquad r_0=0.\)
We refer to \(\omega_0\) as the \emph{base clear state}.

In particular, suppose that Algorithm~\ref{alg:canonical-recursion} achieves
a clear state at buyer value \(b_0\), and let \(t_0:=t_{b_0}\) denote its
threshold index.  As Algorithm~\ref{alg:canonical-recursion} proceeds to
larger buyer values \(b>b_0\), the allocation rule changes only at the current
buyer value, while all previously constructed vectors \(x(\cdot,b')\),
\(b'\le b_0\), are left unchanged.  Hence \(H(b_0;x)=0\) remains true at all
later steps.  For any buyer value \(b>b_0\), it suffices to study the
difference \(H(b;x)-H(b_0;x)\).

After normalizing by \(F_S(t_0)\), define the normalized discriminant
\(\hat H_{t_0,b_0}(b;x)\) as
\begin{equation}
\label{eq:Hhat}
\begin{split}
    &\hat H_{t_0,b_0}(b;x)
    :={}
    \Bigg[(b-b_0)(1-\beta)-\alpha\sum_{t_0\le t<b}x(t,b)
    +\alpha x(t_0,b_0)\Bigg]\frac{r_{t_0}}{1+r_{t_0}}
    \\
    &+\Bigg[(1+\alpha)\Big((b-t_0)x(t_0,b)-(b_0-t_0)x(t_0,b_0)\Big)
    -(b-b_0)\beta
    -\alpha\sum_{t_0<s'<b}x(s',b)
    -\alpha\sum_{b_0\le b'<b}x(t_0,b')
    \Bigg]\frac{1}{1+r_{t_0}}
    \\
    &+\sum_{t_0<s<b_0}\Bigg[(1+\alpha)(b-s)x(s,b)-(b-b_0)\beta
    -\alpha\sum_{s<s'<b}x(s',b)
    -\alpha\sum_{b_0\le b'<b}x(s,b')
    \Bigg]
    \prod_{t_0<i<s}\left(1+\frac{1}{r_i}\right)\frac{1}{r_s}
    \\
    &+\sum_{b_0\le s<b}\Bigg[(b-s)\Big((1+\alpha)x(s,b)-\beta\Big)
    -\alpha\sum_{s<s'<b}x(s',b)
    -\alpha\sum_{b_0\le b'<b}x(s,b')
    \Bigg]
    \prod_{t_0<i<s}\left(1+\frac{1}{r_i}\right)\frac{1}{r_s}.
\end{split}
\end{equation}
Here an empty product is interpreted as \(1\).  The important point is that
\(r_i\) is indexed by the seller value \(i\) itself.  Thus, when the latest
clear state has threshold \(t_0\), the active ratio suffix is
\((r_{t_0},r_{t_0+1},\ldots)\).

\begin{claim}[State Sufficiency of Clear States]
\label{clm:state-sufficiency}
Let \(\omega=(y;r_{t_0},\ldots,r_{b_0-1})\) be a clear state reached at buyer
value \(b_0\) with threshold index \(t_0\). Then the execution of
Algorithm~\ref{alg:canonical-recursion} after step \(b_0\) is determined
entirely by \(\omega\) and the future ratio tail
\((r_{b_0},r_{b_0+1},\ldots)\), and is independent of the allocation history
before step \(b_0\).
\end{claim}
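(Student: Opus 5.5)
The plan is to show that every decision Algorithm~\ref{alg:canonical-recursion} makes after step \(b_0\) depends only on the sign and zero set of \(H(b;x)\) for \(b>b_0\) (together with the failure test \((1+\alpha)b\le\phi_S(t_b)\)). We then show that both of these are functions of \(\omega\) and the future ratios alone.

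\textbf{Step 1: reduce to the normalized discriminant.} Because \(H(b_0;x)=0\) at the clear step and the columns \(x(\cdot,b')\) with \(b'\le b_0\) are never modified afterwards, we have \(H(b;x)=H(b;x)-H(b_0;x)\) for all \(b>b_0\). Expanding the definition \eqref{eq:Hb-def} of this difference and dividing by \(F_S(t_0)+f_S(t_0)=F_S(t_0)\) (inclusive CDF) yields exactly \(\hat H_{t_0,b_0}(b;x)\) in \eqref{eq:Hhat}. The seller weights rewrite as follows:
\[
\frac{F_S(t_0-1)}{F_S(t_0)}=\frac{r_{t_0}}{1+r_{t_0}},\qquad
\frac{f_S(t_0)}{F_S(t_0)}=\frac{1}{1+r_{t_0}},\qquad
\frac{f_S(s)}{F_S(t_0)}=\frac{1}{r_s}\prod_{t_0<i<s}\Bigl(1+\frac1{r_i}\Bigr).
\]
The last identity uses \(F_S(i)=F_S(i-1)(1+1/r_i)\). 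The block \(s<t_0\) is aggregated with weight \(F_S(t_0-1)\). The key cancellation is that for \(s<t_0\) one has \(x(s,b')=1\) for all \(b'\ge b_0\), since the threshold \(t_b\) is nondecreasing. The \(s\)-dependent terms \((b-s)\) then telescope against those at \(b_0\), leaving \((b-b_0)(1-\beta)\) plus \(\alpha\) times column differences. Column differences restricted to \(s'\ge t_0\) are what appear; the entries \(s'<t_0\) are all 1 in both columns and cancel, so no history below \(t_0\) survives. Thus the sign of \(H\) equals the sign of \(\hat H\), and this is the routine but bookkeeping-heavy part of the proof.

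\textbf{Step 2: show \(\hat H\) depends only on the allowed data.} Inspecting \eqref{eq:Hhat}, the only inputs are \(b_0\), \(t_0\), \(y=x(t_0,b_0)\), the ratios \(r_i\) for \(i\ge t_0\), and the entries \(x(s,b')\) with \(s\ge t_0\) and \(b_0\le b'\le b\). At \(b'=b_0\) these entries are determined by the threshold structure of the clear column, namely \(y\) at \(t_0\) and \(0\) above it. The offsets \(b_0,t_0\) are encoded in \(\omega\) by the index range of its ratio suffix.

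\textbf{Step 3: induct on \(b>b_0\).} Assume the columns \(b_0,\ldots,b-1\) are functions of \(\omega\) and the tail. The copy step sets \(x(\cdot,b)=x(\cdot,b-1)\), and the test \(H\ge0\) is a test on \(\hat H\), so it is determined by the allowed data. In the correction loop, \(t_b\) ranges from \(t_{b-1}\ge t_0\). The value of \(x(t_b,b)\) solving \(H=0\) is the same as the value solving \(\hat H=0\), since the two are related by a positive scaling and \(\hat H\) is affine in that entry. The failure test uses \(\phi_S(t_b)=(1+\alpha)t_b+\alpha r_{t_b}\) with \(t_b\ge t_0\), which is again allowed data. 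Hence column \(b\) and the fail/continue decision are determined, which completes the induction.

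The main obstacle is the Step 1 algebra, specifically verifying that the \(s<t_0\) block collapses exactly to the first bracket of \eqref{eq:Hhat}. This requires invoking monotonicity of thresholds to assert \(x(s,b')=1\) for \(s<t_0\) and \(b'\ge b_0\).
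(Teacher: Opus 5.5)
Your proposal is correct and follows essentially the paper's argument: subtract the binding constraint $H(b_0;x)=0$, normalize by $F_S(t_0)$ to obtain $\hat H_{t_0,b_0}$ in \eqref{eq:Hhat} (the $s<t_0$ block collapses because, by nondecreasing thresholds, those entries equal $1$ in every column from $b_0$ on), and then check, inductively in $b$ as you do, that every test and update uses only $y$, the indices $t_0,b_0$, and the ratios $r_i$ with $i\ge t_0$. One small slip: the normalizing identity should read $F_S(t_0-1)+f_S(t_0)=F_S(t_0)$, not $F_S(t_0)+f_S(t_0)=F_S(t_0)$.
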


By Claim~\ref{clm:state-sufficiency}, for the purpose of analyzing whether the greedy
construction can continue without returning \textsc{fail}, it is convenient
to extract this useful information from the full informational
description of Algorithm~\ref{alg:canonical-recursion}.

The following procedure is precisely this compressed version of the greedy
construction.  Its input consists of an initial clear state
\(\omega^{(0)}=(y^{(0)};\mathbf r^{(0)})\) and an additional ratio sequence
\(\mathbf r^{(1)}\) to be processed.  Algorithm~\ref{alg:state-recursion} concatenates these two
ratio sequences and simulates the same greedy update rule as
Algorithm~\ref{alg:canonical-recursion}, but it uses 
the normalized function \(\hat H(b;x)\) instead of \(H(b;x)\) and updates $b_0$ which represents the buyer value corresponding to the latest clear state.
We use $\mathrm{ALG}_2$ to denote the procedure implemented by Algorithm~\ref{alg:state-recursion}.

\begin{algorithm}[H]
\caption{State Transition}
\LinesNumbered
\label{alg:state-recursion}
\KwIn{Initial clear state \( \omega_{\mathrm{init}} = \bigl(y^{(0)}; \mathbf r^{(0)}\bigr) \); input sequence \( \mathbf r^{(1)} \)}
\KwOut{Output state \( \omega_{\mathrm{output}} = \bigl(x(t_{|\mathbf R|+1}, |\mathbf R|+1); \mathbf R[t_{|\mathbf R|+1}:]\bigr) \) or \textsc{fail}}
\SetKwFunction{FMain}{State Transition}
\SetKwProg{Fn}{Function}{:}{}
\Fn{\FMain}{
    Set \( \mathbf R \gets \mathbf r^{(0)} \| \mathbf r^{(1)} \) \tcp*{global ratio sequence}
    Set \( b \gets |\mathbf r^{(0)}|+2 \), \( b_0\gets |\mathbf r^{(0)}|+1 \), and \( \ell\gets 0 \) \tcp*{\(\ell=t_{b_0}\)}
    Set \( x(0,b{-}1) \gets 1 \), \( x(1,b{-}1) \gets y^{(0)} \), and \( t_{b-1}\gets 1 \)\;
    \While{\( b \le |\mathbf R|+1 \)}{
        Set \( x(\cdot,b) \gets x(\cdot,b-1) \) and \( t_b\gets t_{b-1} \)\;
        Compute \( \hat H_{\ell,b_0}(b;x) \) using the active ratio sequence \( \mathbf R[\ell:] \)\;
        \If{\( \hat H_{\ell,b_0}(b;x) \ge 0 \)}{
            \If{\( \hat H_{\ell,b_0}(b;x)=0 \)}{
                Set \( b_0\gets b \) and \( \ell\gets t_b \)\;
            }
            Set \( b\gets b+1 \)\;
        }
        \Else{
            Set \( \mathrm{found} \gets \mathrm{false} \)\; \nllabel{line:state-transition-init-found}
            \For{\( t_b = t_{b-1}, t_{b-1}+1, \dots, b-1 \)}{
            \nllabel{line:state-transition-threshold-search}
                \If{\( (1+\alpha)b \le \phi^\alpha_{\mathbf R}(t_b) \)}{
                    \nllabel{line:state-transition-virtual-test}
                    \Return \textsc{fail}\;\nllabel{line:state-transition-return-fail-virtual}
                }
                Choose \( x(t_b,b) \) such that
                \(\hat H_{\ell,b_0}(b;x)=0\)
                using the active ratio sequence \( \mathbf R[\ell:] \)\;
                \If{\( x(t_b,b) \in (0,1] \)}{
                    Set \( \mathrm{found} \gets \mathrm{true} \)\;
                    \textbf{break}\;
                }
                Set \( x(t_b,b) \gets 1 \)\;\nllabel{line:state-transition-update-kb}
            }
            \If{\( \mathrm{found}=\mathrm{true} \)}{
                Set \( b_0\gets b \), \( \ell\gets t_b \), and \( b\gets b+1 \)\;
            }
            \Else{
                \Return \textsc{fail}\;\nllabel{line:state-transition-no-found-fail}
            }
        }
    }
    \Return \( \bigl(x(t_{|\mathbf R|+1}, |\mathbf R|+1); \mathbf R[t_{|\mathbf R|+1}:]\bigr) \)\;
}
\end{algorithm}

For clarity, we first introduce some basic operations on finite sequences.
For a sequence \(\mathbf a=(a_0,\ldots,a_m)\), we use zero-based indexing. We denote its length by \(|\mathbf a|\), its slice from position \(i\) to position \(j\) by \(\mathbf a[i:j]\), and its suffix starting from position \(i\) by \(\mathbf a[i:]\). Formally,
\[
|\mathbf a|=m,\qquad
\mathbf a[i:j]=(a_i,a_{i+1},\ldots,a_j),\qquad
\mathbf a[i:]=(a_i,a_{i+1},\ldots,a_m).
\]
We adopt the convention that a slice is empty whenever \(i>j\). For two sequences \(\mathbf a\) and \(\mathbf b\), we write
\(\mathbf a|\mathbf b\) for their concatenation, i.e., the sequence obtained by appending \(\mathbf b\) after \(\mathbf a\).

\paragraph{Dominance Order and Monotonicity.}

To compare states according to how close they bring Algorithm~\ref{alg:state-recursion} to failure, we introduce the following dominance order.

\begin{definition}[Dominance Order $\prec$]
\label{def:order}

Let
\(\omega=(y\,;\mathbf{r})\) and \(\omega'=(y';\mathbf{r}')\)
be two states with ratio sequences \(\mathbf{r}\) and \(\mathbf{r}'\), respectively. The two states are comparable if one of the two ratio sequences is a suffix of the other. We say that \(\omega\) is dominated by \(\omega'\) (\(\omega\prec\omega'\)) if either
(i) \(\mathbf{r}'\) is a proper suffix of \(\mathbf{r}\), or
(ii) \(\mathbf{r}=\mathbf{r}'\) and \(y<y'\). We call a state undominated if no other achievable state dominates it.

\end{definition}
Intuitively, $\omega\prec \omega'$ means that $\omega'$ is more
demanding: starting from $\omega'$ leaves Algorithm~\ref{alg:state-recursion} closer to failure
than starting from $\omega$. The following proposition shows that this
order is preserved throughout Algorithm~\ref{alg:state-recursion}'s execution.

\begin{proposition}[Monotonicity]
\label{prop:mono}
Let \(\omega\prec \omega'\) be two clear states, and fix the same ratio suffix \(\mathbf{r}=(r_t,\ldots,r_m)\). If \(\mathrm{ALG}_2(\omega';\mathbf{r})\neq\textsc{fail}\), then \(\mathrm{ALG}_2(\omega;\mathbf{r})\neq \textsc{fail}\) and
\(\mathrm{ALG}_2(\omega;\mathbf{r})\prec \mathrm{ALG}_2(\omega';\mathbf{r}).\)
In particular, if \(\mathrm{ALG}_2(\omega;\mathbf{r})= \textsc{fail}\), then
\(\mathrm{ALG}_2(\omega';\mathbf{r})=\textsc{fail}\).
\end{proposition}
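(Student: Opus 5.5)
We prove the proposition by induction on the length of the input ratio suffix \(\mathbf r\), that is, on the number of buyer steps that \(\mathrm{ALG}_2\) processes. We compare the two executions step by step. The key object is the allocation profile at the current buyer value together with the latest clear state. By Claim~\ref{clm:state-sufficiency}, the future of each run depends only on its latest clear state and the remaining tail. It therefore suffices to prove a \emph{one-step} statement: if \(\omega\prec\omega'\) and both runs are fed the next ratio \(r_{m+1}\) (in the concatenated sequence), then either the run from \(\omega\) reaches a clear state no later than the run from \(\omega'\), or it stays non-clear but with \(\hat H\) strictly positive. In either case the two runs remain ordered in a suitable extended sense.

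Concretely, the extended invariant is a comparison of thresholds and threshold allocations. Write \((t,y)\) and \((t',y')\) for the threshold index and threshold allocation of the two current profiles. We order these pairs lexicographically, \((t,y)\le(t',y')\) iff \(t<t'\), or \(t=t'\) and \(y\le y'\). The invariant is \((t,y)\le(t',y')\), together with the inequality \(\hat H^{\omega}(b)\ge \hat H^{\omega'}(b)\) evaluated at the common current profile of \(\omega'\).

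First we verify the base case using the two cases of Definition~\ref{def:order}. In case (ii), the suffixes coincide and \(y<y'\). In case (i), the suffix of \(\omega'\) is proper, so its threshold index is strictly larger, meaning \(\omega'\) has already committed allocation \(1\) at more seller types. The key structural fact is that \(H(b;x)\) in \eqref{eq:Hb-def}, and hence \(\hat H\) in \eqref{eq:Hhat}, is monotone in the allocation in the right way. Raising \(x(s,b)\) at the current column helps through \((1+\alpha)(b-s)\) but costs \(\alpha\) per lower seller type. Raising past columns \(x(s,b')\), \(b'<b\), only appears with coefficient \(-\alpha\), so a lower past history makes every future constraint easier. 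Under \(\alpha\)-weak regularity, the threshold search in Algorithm~\ref{alg:state-recursion} increases the allocation along the seller index exactly in the direction where the marginal gain \((1+\alpha)b-\phi^\alpha(t_b)\) is positive and decreasing. Thus the minimal correction needed is monotone in the accumulated past allocation.

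Next, the inductive step splits into three cases.

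1. If the run from \(\omega'\) finds \(\hat H\ge0\) without correction, then so does the run from \(\omega\), because its past allocations are pointwise smaller.
2. If \(\omega'\) needs a correction, we show that the correction for \(\omega\) stops at a lexicographically smaller pair \((t_b,x(t_b,b))\). The fail test \((1+\alpha)b\le\phi(t_b)\) is reached no earlier for \(\omega\), because \(\phi\) is increasing and \(t\le t'\). So if \(\omega\) fails, \(\omega'\) has already failed.
3. When either run reaches a new clear state, we translate the pairwise comparison back into the order \(\prec\). A smaller threshold index gives a longer suffix, which is case (i). An equal threshold with smaller \(y\) is case (ii). One subtlety remains: the run from \(\omega\) may clear while \(\omega'\) has \(\hat H>0\). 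Then we compare a clear state with a non-clear configuration, and the renormalization \(\hat H_{\ell,b_0}\) changes base point. We handle this by working throughout with the unnormalized \(H(b;x)\), which is invariant under rebasing because previously tight constraints stay tight.

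The main obstacle is the inequality in the correction step. We must show that the tight solution \(x(t_b,b)\) is monotone in the history when the two histories differ both in past columns and in their threshold positions. Since \(\hat H\) is affine in \(x(t_b,b)\), we plan to compute its dependence on \(x(t_b,b)\) and on each past entry. The coefficient on \(x(t_b,b)\) is positive precisely when \((1+\alpha)b>\phi(t_b)\), which is exactly when the fail test is not triggered. The dependence on each past entry is nonpositive. These two facts give the required monotonicity.
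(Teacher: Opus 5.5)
\emph{The overall approach matches the paper's, but Case~1 of your inductive step is false, and the justification you give for it is the wrong principle for the greedy.}

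The mechanism you rely on is the one the paper uses. History entries enter $\hat H$ with nonpositive coefficients. Current-row entries enter with coefficient $f_S(s)\bigl((1+\alpha)b-\phi_S(s)\bigr)$. The virtual-value test cannot fire at a threshold no larger than one the other run has already passed. The paper argues by contradiction at the first step where the row order flips, noting that such a step must be a correction step of the run from $\omega$. Your forward induction is an equivalent packaging of the same idea.

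The problem is Case~1. It is not true that if the run from $\omega'$ passes step $b$ by copying, so does the run from $\omega$. Right after a clear step at $b_0$, the copy step gives $H(b_0+1;x)=\sum_{s\le b_0}f_S(s)\,x(s,b_0)-\beta F_S(b_0)$, which is increasing in the copied row. Take equal suffixes with $y<y'$, and suppose $\beta F_S(b_0)$ lies between $F_S(t_0-1)+y f_S(t_0)$ and $F_S(t_0-1)+y' f_S(t_0)$. Then the $\omega'$ run copies while the $\omega$ run must correct. The principle you cite, that a lower history makes later constraints easier, holds only at a fixed current row. Under copying, the current row drops together with the history, and its positive coefficients dominate. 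This is exactly what generically produces the situation you flag in Case~3.

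The repair is to apply your stated invariant uniformly rather than splitting on whether $\omega'$ corrects. Write $X,X'$ for the rows the two runs choose.
\begin{itemize}
\item At step $b$, $X'(\cdot,b)$ is a threshold row with $X'(\cdot,b)\ge X'(\cdot,b-1)\ge X(\cdot,b-1)$, whether it was copied or corrected. So it lies on the chain of threshold rows that the $\omega$ run scans.
\item Evaluated at this row, the $\omega$ run's constraint is at least the $\omega'$ run's constraint, which is $\ge 0$. The inequality is strict, because the two clear-state rows differ strictly in entries whose coefficient is strictly negative.
\item $\hat H^{\omega}$ is strictly increasing along the chain up to the threshold of $X'(\cdot,b)$, which satisfies $(1+\alpha)b>\phi_S$. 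Hence the $\omega$ run stops, by copy or by correction, strictly below $X'(\cdot,b)$, and it never reaches either failure branch.
\end{itemize}
This also gives the strict $\prec$ at the output, which your non-strict lexicographic invariant does not track.

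Two further points should be made explicit.
\begin{itemize}
\item The comparison must be between values of $H(b)-H(b_0)$, for the common initial clear step $b_0$, under a single normalization. Raw $H$ is not comparable across runs whose pre-$b_0$ histories differ. In case~(i) of the order, treat the seller types below the larger threshold as lumped; this is harmless because they are fully allocated.
\item In this difference, the clear-state row enters with coefficient $f_S(s)\bigl(\phi_S(s)-(1+\alpha)b_0-\alpha\bigr)$. Its negativity comes from the virtual-value test passed at $b_0$, not from the $-\alpha$ history term alone.
\end{itemize}
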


Therefore, it suffices to focus on undominated states. For a ratio suffix
\(\mathbf r=(r_t,\ldots,r_m)\), let \(\mathcal S_{\mathbf r}\) denote the set
of achievable states whose ratio components are \(\mathbf r\), that is,
states of the form \((y\,;r_t,\ldots,r_m)\). The collection
\(\{\mathcal S_{\mathbf r}\}_{\mathbf r}\) partitions the set of achievable
states. Moreover, any two states in the same set \(\mathcal S_{\mathbf r}\)
are comparable under the dominance order. Hence, selecting the maximum state
from each nonempty \(\mathcal S_{\mathbf r}\) gives a collection that contains
all undominated states.
While proving that Algorithm~\ref{alg:state-recursion} does not return \textsc{fail}, we can also determine all undominated states.

\paragraph{Threshold-Stationary Property.}
For convenience of exposition, we introduce an important class of state transitions in this section.
Let $N:=\lvert\operatorname{supp}(F_S)\rvert\ge 1$ denote the support size of the seller's distribution $F_S$. 

We say that a sequence of state transitions starting from the base clear state $\omega_0$ in Algorithm~\ref{alg:state-recursion} is {\em threshold-stationary} if, at every buyer value \(b<N\),  Algorithm~\ref{alg:state-recursion} either returns \textsc{fail} at $t_b=1$, or finds an achievable state at \(t_b=0\). We impose no restriction on how Algorithm~\ref{alg:state-recursion} behaves at $b = N$.

In the threshold-stationary case, we are interested in the allocation probabilities \(x(0,b)\) for all \(b<N\). For notational simplicity, we write \(x_b:=x(0,b)\). 
Taking successive differences of the equations \(\hat H(b;x)=0\) for \(b=1,2,\ldots,N-1\) yields

\begin{equation}
\label{eq:big-jump-rec}
x_b
=
\Bigl(1-\frac{1}{b(1+\alpha)}\Bigr)x_{b-1}
+
\frac{\beta}{b(1+\alpha)}\frac{F_S(b-1)}{F_S(0)},
\qquad
x_1=\frac{\beta}{1+\alpha}.
\end{equation}

From Equation~\eqref{eq:big-jump-rec}, $x_b,\forall b$ is linear in $\beta$, we let $x_b=\beta G_b$. Unrolling the recursion \eqref{eq:big-jump-rec} gives
\begin{equation}
\label{eq:Gm-def}
G_b
=
\sum_{t=1}^{b}
\frac{F_S(t-1)}{t(1+\alpha)F_S(0)}
\prod_{j=t+1}^{b}\left(1-\frac{1}{j(1+\alpha)}\right),
\qquad b\ge 1.
\end{equation}

The next two lemmas record two basic consequences of this one-dimensional description. Lemma~\ref{lem:mono-G} shows that the normalized allocation level \(G_b\) is monotone in the buyer value. This monotonicity will be used later to compare states and to rule out certain shortest counterexamples. Lemma~\ref{lem:alpha-zero-never-fail} treats the boundary case \(\alpha=0\). In this case the recursion above is sufficiently explicit to show directly that the greedy construction cannot fail. Thus, in the subsequent analysis, we only consider the case \(\alpha>0\).

\begin{lemma}
\label{lem:mono-G}
The sequence $(G_b)_{b\ge 1}$ is strictly increasing.
\end{lemma}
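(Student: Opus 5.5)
Rather than use the unrolled sum \eqref{eq:Gm-def}, I plan to work with the recursion \eqref{eq:big-jump-rec} divided by $\beta$. It reads
\[
G_b-G_{b-1}=\frac{1}{b(1+\alpha)}\left(\frac{F_S(b-1)}{F_S(0)}-G_{b-1}\right),\qquad G_1=\frac{1}{1+\alpha},
\]
so the claim $G_b>G_{b-1}$ for every $b\ge 2$ is equivalent to the invariant $G_{b-1}<F_S(b-1)/F_S(0)$. Write $c_b:=F_S(b)/F_S(0)$. This sequence is weakly increasing with $c_0=1$, and it is well defined because Lemma~\ref{lem:wlog-fS0} lets us assume $f_S(0)>0$.

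I will prove the stronger statement $G_b<c_{b-1}$ for all $b\ge1$ by induction. Since $c_{b-1}\le c_b$, this gives $G_b<c_b$, which is exactly the condition needed at the next step.

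\textbf{Base case.} We have $G_1=\tfrac{1}{1+\alpha}\le 1=c_0$. Equality holds only when $\alpha=0$, so the boundary case needs separate treatment. Strict inequality is available since the paper restricts to $\alpha>0$ after Lemma~\ref{lem:alpha-zero-never-fail}. If $\alpha=0$ must be included, I would instead argue that $G_2-G_1=\tfrac12(c_1-1)$ and use $f_S(1)>0$ on the relevant support.

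\textbf{Inductive step.} Assume $G_{b-1}<c_{b-2}\le c_{b-1}$. The recursion then shows $G_b$ is a convex combination of $G_{b-1}$ and $c_{b-1}$, with weights $1-\tfrac{1}{b(1+\alpha)}\in(0,1)$ on $G_{b-1}$ and $\tfrac{1}{b(1+\alpha)}$ on $c_{b-1}$. A convex combination with a strictly positive weight on the strictly smaller point $G_{b-1}$ is strictly less than $c_{b-1}$, so $G_b<c_{b-1}$ and the invariant propagates.

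The same convex-combination picture gives strict monotonicity directly. Because $G_{b-1}<c_{b-1}$ and the weight on $c_{b-1}$ is positive, $G_b$ lies strictly above $G_{b-1}$.

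The main obstacle is keeping every inequality strict. This requires $\alpha>0$, or separate handling of the $\alpha=0$ boundary as described above, together with confirming that the range $b<N$ used here stays within the region where \eqref{eq:big-jump-rec} is valid. For the monotonicity lemma itself, the recursion can be treated as the definition of $G_b$ for all $b\ge1$, so no support restriction is needed.
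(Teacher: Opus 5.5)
Your proof is correct and is essentially the paper's argument. Both use the first-difference form of the recursion to reduce strict monotonicity to the invariant $G_b<F_S(b)/F_S(0)$. You establish that invariant by induction on the convex-combination form of the recursion, while the paper unrolls the sum and telescopes the weights to $1-\prod_{j=1}^{b}\bigl(1-\frac{1}{j(1+\alpha)}\bigr)$. Your version also correctly locates the strictness in $\alpha>0$ (via $G_1<1$). The paper instead places its strict inequality at the monotonicity-of-$F_S$ step, which in general only gives a weak inequality.
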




\begin{lemma}[The case \(\alpha=0\)]
\label{lem:alpha-zero-never-fail}
Let \(\alpha=0\) and \(\beta=\tfrac12\). Starting from the base clear state,
Algorithm~\ref{alg:state-recursion} never returns \textsc{fail}.
\end{lemma}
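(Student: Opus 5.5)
With $\alpha=0$ the Lagrangian terms disappear, and the plan is to check directly that neither \textsc{fail} branch of Algorithm~\ref{alg:canonical-recursion} can ever be reached. By Claim~\ref{clm:state-sufficiency}, Algorithm~\ref{alg:state-recursion} started from the base clear state $\omega_0$ simulates Algorithm~\ref{alg:canonical-recursion} exactly; the normalisation by $F_S(t_0)>0$ does not change the sign of $H$. So it suffices to argue about Algorithm~\ref{alg:canonical-recursion}. By Lemma~\ref{lem:wlog-fS0} we may assume $f_S(0)>0$. Substituting $\alpha=0$ and $\beta=\tfrac12$ into \eqref{eq:Hb-def} gives
\[
H(b;x)=\sum_{0\le s<b} f_S(s)\,(b-s)\Bigl(x(s,b)-\tfrac12\Bigr).
\]
Thus $H(b;x)$ depends only on the current column $x(\cdot,b)$. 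It is nondecreasing in each entry $x(s,b)$, and linear in $x(s,b)$ with slope $f_S(s)(b-s)\ge 0$.

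\textbf{The virtual-value test never fires.} With $\alpha=0$, Definition~\ref{def:virtual-value} gives $\phi_S(s)=s$. The loop only visits $t_b\le b-1$, so $(1+\alpha)b=b>b-1\ge\phi_S(t_b)$. Hence the test in the inner loop never returns \textsc{fail}, and the only remaining failure mode is that the loop ends with \emph{found} still false.

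\textbf{A root always exists.} Suppose $H(b;x)<0$ after copying $x(\cdot,b)\gets x(\cdot,b-1)$. The loop raises entries one at a time, from $t_{b-1}$ upward, and each increase weakly raises $H$. If the loop set every entry $x(s,b)$ with $t_{b-1}\le s\le b-1$ to $1$, all entries with $s<b$ would equal $1$. Entries below $t_{b-1}$ are already $1$ by the threshold structure of the previous column. In that case
\[
H(b;x)=\tfrac12\sum_{s<b} f_S(s)(b-s)\ \ge\ \tfrac12 f_S(0)\,b\ >\ 0 .
\]
So there is a first index $t_b$ at which setting $x(t_b,b)=1$ makes $H\ge 0$, while $H$ is still negative at the current value of $x(t_b,b)$. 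At this index the slope $f_S(t_b)(b-t_b)$ must be positive, since otherwise the increase would not change $H$. By linearity, the equation $H=0$ then has a unique solution $x(t_b,b)$ strictly above the current value, which is $\ge 0$, and at most $1$. So $x(t_b,b)\in(0,1]$, \emph{found} becomes true, and the algorithm proceeds.

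\textbf{Monotonicity and induction.} The new column agrees with $x(\cdot,b-1)$ except for increased entries, and it keeps the threshold form $1,\dots,1,y,0,\dots$. Hence point-wise monotonicity in $\mathcal R_2$ is preserved in both coordinates. Induction on $b$ shows the algorithm never returns \textsc{fail}.

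The only delicate point is the interaction between the algorithm's ``choose $x(t_b,b)$ with $H=0$'' step and seller values with zero mass, where the slope vanishes and the equation has no solution. I handle this by noting that the loop simply passes over such indices: the root is always located at the first index where $H$ changes sign, and that index necessarily has $f_S(t_b)>0$.
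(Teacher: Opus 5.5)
Your proof is correct and follows the paper's route. The virtual-value test cannot fire because $\phi_S(s)=s\le b-1<b$, and the all-ones column makes the local constraint strictly positive, so the sequential threshold search must cross zero at an index whose root lies in $(0,1]$. The one difference is that you work with the unnormalized $H(b;x)$ of Algorithm~\ref{alg:canonical-recursion}, which for $\alpha=0$ depends only on the current column. This makes the positivity check immediate, and it handles zero-mass seller values explicitly, whereas the paper's weights $w_s$ in $\hat H$ vanish there and are simply asserted to be positive.
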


\subsection{Threshold-stationary Case}
\label{sec:t1}

This subsection proves the single-ratio truncation step under the
threshold-stationary assumption. For an \(\alpha\)-weakly regular seller
distribution, the \(\alpha\)-virtual value \(\phi^\alpha_S\) is weakly increasing. We
first prove the comparison in the constant-\(\phi^\alpha_S\) case, where adjacent
support sizes can be compared explicitly, and then use it as the local
reduction for the weakly increasing case.
The non-threshold-stationary
case is handled separately in Section~\ref{sec:general}.

We call an execution feasible if Algorithm~\ref{alg:state-recursion} does not return \textsc{fail}. Although Algorithm~\ref{alg:state-recursion} could in principle update the threshold in many different ways,  Proposition~\ref{prop:regular-never-fail} shows that, for 
\(\beta=\tfrac12\), any failure can occur only at \(b=N\) and threshold index \(t_b>1\) 

\begin{proposition}[Feasibility at $t_b=1$]
    \label{prop:regular-never-fail}
    Let $\alpha > 0$ and $\beta = \tfrac12$. For the input base clear state $\omega_0$, Algorithm \ref{alg:state-recursion} never returns \textsc{fail} at threshold index $t_b = 1$.
\end{proposition}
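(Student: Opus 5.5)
The plan is to reduce the claim to the one-dimensional threshold-stationary description. Suppose the algorithm returns \textsc{fail} at some buyer value $b$ with threshold index $t_b=1$, and take the first such $b$. Up to step $b-1$ every step either passes or finds a state with threshold $0$; a failure at $t_b=1$ would already be a failure at threshold index $1$. So the execution up to $b-1$ is threshold-stationary, and $x(0,b')=x_{b'}=\beta G_{b'}$ for $b'<b$, by \eqref{eq:big-jump-rec}–\eqref{eq:Gm-def}. Reaching $t_b=1$ means that setting $x(0,b)=1$ still leaves $\hat H(b;x)<0$. There are then two failure modes: the virtual-value test $(1+\alpha)b\le\phi_S(1)$ (Line~\ref{line:state-transition-virtual-test}), or the equation $\hat H(b;x)=0$ being unsolvable with $x(1,b)\in(0,1]$.

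The key steps, in order, are as follows.

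First, I would write $H(b;x)$ from \eqref{eq:Hb-def} explicitly, with $x(0,b)=1$, $x(0,b')=\beta G_{b'}$ for $b'<b$, $x(s,b')=0$ for $s\ge1$, $b'<b$, and $x(1,b)=z$. Only the terms $s=0$ and $s=1$ involve $z$, and all later-seller terms are the constant $-\beta(b-s)f_S(s)$. The condition $H=0$ is then linear in $z$, with coefficient
\[
f_S(0)\bigl(-\alpha\bigr)+f_S(1)(1+\alpha)(b-1).
\]
This coefficient is positive exactly when $(1+\alpha)(b-1)>\alpha F_S(0)/f_S(1)=\alpha r_1$, which is essentially the negation of the virtual-value failure at $t_b=1$, since $\phi_S(1)=(1+\alpha)+\alpha r_1$. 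So whenever the virtual test passes, increasing $z$ increases $H$, and the solution satisfies $z>0$ because $H(z=0)<0$.

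Second, it remains to show $H(b;x)|_{z=1}\ge 0$ and that the virtual test cannot trigger. For the latter I would use $\alpha$-weak regularity: $\phi_S(0)=0\le\phi_S(1)$ gives nothing, so the content is that reaching $t_b=1$ at all forces $(1+\alpha)b>\phi_S(1)$. To get this, I would compare with step $b-1$, where threshold $0$ sufficed with $x_{b-1}=\beta G_{b-1}\le1$. I also need $x_{b-1}\le 1$ together with $x_{b}$ (from the recursion) exceeding $1$, which is why the search moved to $t_b=1$.

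Third, for $z=1$, I would use the identity $H(b;x)=\sum_s(\cdot)$ and bound each seller's bracket with $\beta=\tfrac12$. For $s\ge2$ the bracket is $(b-s)(-\beta)-\alpha\cdot(\text{allocation at }s'\in\{0,1\})$, which here is only $-\beta(b-s)$. For $s=0,1$, full allocation gives roughly $(1-\beta)(b-s)$ minus $\alpha$ times the history sums. Here I expect $\beta=\tfrac12$ to be critical: the surplus $(1-\beta)(b-s)$ on the low sellers must cover $\beta\sum_{s\ge2}(b-s)f_S(s)$ and the $\alpha$-penalties. I would use $\alpha$-weak regularity, $\phi_S(s)\ge\phi_S(1)$, to bound the mass $F_S(s-1)$ relative to $f_S(s)$, i.e.\ $r_s\le r_1+(1+\alpha)(1-s)/\alpha+\dots$. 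Equivalently, the upper seller mass is controlled by $F_S(1)$, via $f_S(s)\ge\frac{\alpha F_S(s-1)}{\phi_S(s)-(1+\alpha)s}$ inverted appropriately.

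The main obstacle is this third step: turning weak regularity plus the fact that $x_{b-1}=\beta G_{b-1}\le 1<\beta G_b$ (via Lemma~\ref{lem:mono-G}) into the inequality $H|_{z=1}\ge0$. My first attempt would be to substitute $F_S(b-1)/F_S(0)$ from \eqref{eq:big-jump-rec} using $x_b>1$, rewriting the negative mass terms through the recursion so that everything telescopes into a sum of terms $(1-2\beta)(\cdot)\ge0$ plus nonnegative remainders.
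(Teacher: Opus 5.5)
There is a genuine gap, and it sits where the proposition's real content lies.

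\textbf{The missing idea.} For $b\ge3$, the only way to return \textsc{fail} while $t_b=1$ is the virtual-value test $(1+\alpha)b\le\phi_S(1)$. The unsolvable branch returns \textsc{fail} only after the last candidate $t_b=b-1$. Your second step leaves ruling out the virtual test to ``compare with step $b-1$'', which cannot work: $x_{b-1}\le1<x_b$ says nothing about $\phi_S(1)$.

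What is needed is a contrapositive extremal comparison. Suppose $(1+\alpha)b\le\phi_S(1)$. Weak regularity gives $\phi_S(m)\ge(1+\alpha)b$ for all $1\le m\le b-1$, that is, $F_S(m-1)/f_S(m)\ge\frac{1+\alpha}{\alpha}(b-m)$. Hence $F_S(i)/F_S(0)=\prod_{m=1}^{i}\bigl(1+f_S(m)/F_S(m-1)\bigr)$ is dominated pointwise by the normalized CDF of the constant-$\phi$ instance with $\phi\equiv(1+\alpha)b$ and support size $b+1$.

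Since \eqref{eq:Gm-def} has nonnegative coefficients in $F_S(t-1)$, this gives $G_b\le\widetilde G_b^{(b+1)}$. Lemma~\ref{lem:zero-ratio-case} then gives $G_b<2$: the sequence is monotone in the support size, and its Beta-type limit is strictly below $2$ for every finite $\alpha$. So $x_b=\tfrac12G_b<1$, the threshold-zero candidate already succeeds, and the search never reaches $t_b=1$.

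The tool you mention in your third step, bounding $F_S(s-1)/f_S(s)$ through $\phi_S(s)\ge\phi_S(1)$, is the right one. It has to be applied under the hypothesis $(1+\alpha)b\le\phi_S(1)$ to bound $G_b$. That Beta bound is exactly where $\beta=\tfrac12$ is sharp.

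\textbf{Your third step is neither needed nor true.} If no $x(1,b)\in(0,1]$ solves $H(b;x)=0$, the loop sets $x(1,b)=1$ and moves on to $t_b=2$. This branch can return \textsc{fail} at $t_b=1$ only when $b=2$. There, a direct computation with $x(0,2)=x(1,2)=1$ gives $H(2;x)=f_S(0)\bigl(1+\alpha-\tfrac{\alpha}{2(1+\alpha)}\bigr)+f_S(1)\bigl(\alpha+\tfrac12\bigr)>0$.

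For $b\ge3$, the inequality $H(b;x)|_{z=1}\ge0$ can fail. Take $\alpha=1$ and seller masses proportional to $1,1,M$ on $\{0,1,2\}$. This is $\alpha$-weakly regular, with $\phi_S=0,\,3,\,4+2/M$. Then $x_1=\tfrac14$ and $x_2=\tfrac{7}{16}$. At $b=3$ with $x(0,3)=x(1,3)=1$ and $x(2,3)=0$, one gets $H(3;x)\propto\tfrac{93}{16}-\tfrac{M}{2}<0$ for $M\ge12$. The algorithm therefore correctly proceeds to threshold $2$ and succeeds there. This is consistent with the paper's later treatment of clear steps at thresholds larger than $1$. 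The telescoping into $(1-2\beta)(\cdot)$ terms that you plan in step three would be aimed at a false inequality.
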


The obstruction to failing at threshold index \(t_b=1\) is the
threshold-zero candidate. If the algorithm were to fail at \(t_b=1\), then
the first positive seller threshold would already have too large an $\alpha$-virtual
value, namely \(\phi^\alpha_S(1)\ge (1+\alpha)b\). By weak regularity, this forces
all earlier positive seller values \(1,\ldots,b-1\) to have virtual value at
least \((1+\alpha)b\). 
Lemma~\ref{lem:zero-ratio-case} shows that even this
extremal case keeps the threshold-zero allocation feasible for
\(\beta=\tfrac12\). Therefore the failure at \(t_b=1\) can never occur.

\begin{lemma}
\label{lem:zero-ratio-case}
For the constant-\(\phi^\alpha_S\) case with ratio parameter \(r=0\) and
\(\alpha>0\), the sequence
\(\{\widetilde G_{N-1}^{(N)}\}_{N\ge2}\) is strictly increasing with respect
to \(N\). Moreover,
\[
  \lim_{N\to\infty}\widetilde G_{N-1}^{(N)}
  \;=\;
  \frac{1}{1+\alpha}\,
  \frac{\Gamma\!\bigl(\tfrac{1}{1+\alpha}\bigr)^{2}}
       {\Gamma\!\bigl(\tfrac{2}{1+\alpha}\bigr)},
\qquad
  \sup_{\alpha>0}\;\frac{1}{1+\alpha}\,
  \frac{\Gamma\!\bigl(\tfrac{1}{1+\alpha}\bigr)^{2}}
       {\Gamma\!\bigl(\tfrac{2}{1+\alpha}\bigr)}
  =2,
\]
where the supremum is attained only in the limit \(\alpha\to\infty\).
\end{lemma}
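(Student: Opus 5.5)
The plan is to make the constant-$\phi^\alpha_S$ case fully explicit through the unrolled recursion \eqref{eq:Gm-def}, and then treat the three claims (the limit, the supremum, and monotonicity in $N$) separately. Throughout write $c:=\frac1{1+\alpha}\in(0,1)$ and $a:=\frac{\alpha}{1+\alpha}=1-c$.

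\textbf{Setting up the closed form.} For a seller distribution with support size $N$, constant virtual value on the positive types, and boundary ratio parameter $r=0$, the constancy pins the common value at $\phi=(1+\alpha)N$. I am reading ``$r=0$'' as the extremal case where the ratio would vanish at index $N$; this is the reading that reproduces the stated limit. Then $\alpha r_i=(1+\alpha)(N-i)$, so $r_i=(N-i)/a$ for $1\le i\le N-1$. Since $F_S(i)/F_S(i-1)=1+1/r_i$, we get
\[
\frac{F_S(t-1)}{F_S(0)}=\prod_{i=1}^{t-1}\frac{N-i+a}{N-i}=\frac{\Gamma(N+a)\,\Gamma(N-t+1)}{\Gamma(N)\,\Gamma(N-t+1+a)}.
\]
Also $\prod_{j=t+1}^{b}(1-c/j)=\frac{\Gamma(b+1-c)\,\Gamma(t+1)}{\Gamma(b+1)\,\Gamma(t+1-c)}$. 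Plugging both into \eqref{eq:Gm-def} with $b=N-1$ expresses $\widetilde G^{(N)}_{N-1}$ as an explicit finite sum of Gamma ratios, $\sum_{t=1}^{N-1}w_N(t)$.

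\textbf{Limit and supremum.} Set $t=uN$. By $\Gamma(x+a)/\Gamma(x)\sim x^{a}$, the summand satisfies
\[
w_N(t)\approx \frac{c}{N}\,u^{c-1}(1-u)^{-a}.
\]
So the sum is a Riemann sum for $c\int_0^1u^{c-1}(1-u)^{c-1}\,du=c\,B(c,c)$. The endpoint singularities are integrable, which justifies the passage by dominated convergence after using the standard bounds on Gamma ratios (Gautschi/Wendel). This gives the stated limit $\frac{1}{1+\alpha}\Gamma(c)^2/\Gamma(2c)$.

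For the supremum, apply the duplication formula $B(c,c)=2^{1-2c}B(\tfrac12,c)$ to get
\[
f(c):=cB(c,c)=2^{1-2c}\sqrt\pi\,\frac{\Gamma(c+1)}{\Gamma(c+\tfrac12)}.
\]
Its logarithmic derivative is $-2\ln2+\psi(c+1)-\psi(c+\tfrac12)$. Because $\psi$ is strictly concave, $\psi(c+1)-\psi(c+\tfrac12)$ is strictly decreasing in $c$, so it is $<\psi(1)-\psi(\tfrac12)=2\ln 2$ for $c>0$. Hence $f$ is strictly decreasing on $(0,1)$, and $f(0^+)=2\sqrt\pi/\Gamma(\tfrac12)=2$. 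So the supremum over $\alpha>0$ equals $2$, and it is approached only as $c\to0$, i.e.\ $\alpha\to\infty$.

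\textbf{Strict monotonicity in $N$.} I expect this to be the main obstacle, because passing from $N$ to $N+1$ changes every weight $F_S(t-1)/F_S(0)$ as well as the number of terms. My first attempt is to find a one-step recursion in $N$. The factor $\frac{\Gamma(N+a)\Gamma(N-t+1)}{\Gamma(N)\Gamma(N-t+1+a)}$ updates by multiplication with $\frac{N+a}{N}\cdot\frac{N-t+1}{N-t+1+a}$. This factor exceeds $1$ exactly when $t\ge 2$, and for $t=1$ it equals $1$. So every existing term weakly grows in the weight component.

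The Beta-type factor, however, loses one multiplication by $1-c/N$. I would therefore compare $w_{N+1}(t)$ with $w_N(t)$ termwise and check the sign of
\[
\frac{N+a}{N}\cdot\frac{N-t+1}{N-t+1+a}\cdot\frac{N+1-c}{N+1}-1 .
\]
It is positive for $t$ large and negative for small $t$, which leaves a single sign change. Since the new term $w_{N+1}(N)$ is positive, I would combine it with the negative small-$t$ deficits, using an Abel-summation argument, or alternatively by writing both sums as discrete Beta-integrals in hypergeometric form ${}_3F_2$ and applying a contiguous relation. If the direct comparison stalls, a fallback is to recast $\widetilde G^{(N)}_{N-1}$ as a probability-type quantity (for instance the expected value under a Pólya-urn coupling) in which increasing $N$ visibly adds mass.
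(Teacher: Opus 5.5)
Your limit computation follows the same Gamma-ratio/Riemann-sum/Beta-integral route as the paper. Your supremum argument is correct and somewhat cleaner than the paper's term-by-term differentiation of the Weierstrass product: the duplication formula plus strict concavity of $\psi$ gives $cB(c,c)$ strictly decreasing in $c$ with limit $2$ at $0^+$. Two problems remain, and the second is a genuine gap.

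\textbf{Indexing.} In the paper the ratio parameter $r$ is the last in-support ratio $r_{N-1}$ (compare Lemma~\ref{lem:extremal-t1} with $N=k+2$). So $r=0$ means $r_i=(N-1-i)/a$, $\phi^\alpha_S\equiv(1+\alpha)(N-1)$, and $\widetilde F_S^{(N)}(t-1)=\prod_{j=N-t}^{N-2}(1+a/j)$. This is also the family invoked in Proposition~\ref{prop:regular-never-fail}. Your choice $r_i=(N-i)/a$ produces the paper's $\widetilde F_S^{(N+1)}$, so the sequence you study is the paper's $\widetilde G^{(N+1)}_{N-1}$, not $\widetilde G^{(N)}_{N-1}$. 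The limit cannot discriminate between the two readings, since both converge to $cB(c,c)$. Your justification ``this is the reading that reproduces the stated limit'' is therefore not valid, and you would be proving monotonicity of a different sequence.

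\textbf{Monotonicity is not proved, and your sign analysis is wrong.} Enlarging the support makes every ratio $r_i$ larger, so the CDF weights shrink rather than grow. The factor $\frac{N+a}{N}\cdot\frac{N-t+1}{N-t+1+a}$ equals $1$ at $t=1$ and is strictly less than $1$ for $t\ge 2$, because $m\mapsto m/(m+a)$ is increasing and $N-t+1<N$. Also, the extra product factor is $1-\frac{1}{N(1+\alpha)}=\frac{N-1+a}{N}$, not $(N+1-c)/(N+1)$. Your displayed expression is in fact decreasing in $t$ and already negative at $t=1$. Hence $w_{N+1}(t)<w_N(t)$ for every common index $t\le N-1$. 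There is no sign change: the entire increase must come from the single new summand, which has to beat the aggregate loss of all old ones. None of your suggested tools (Abel summation, ${}_3F_2$ contiguous relations, a P\'olya urn) is carried far enough to deliver that inequality.

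\textbf{What the paper does instead.} It splits $N\to N+1$ into two steps: a change of distribution at fixed horizon, then one extra recursion step at fixed distribution. The key identity is
\[
\widetilde F_S^{(N)}(t-1)-\widetilde F_S^{(N+1)}(t-1)=\frac{t-1}{N-1+a}\Bigl[\widetilde F_S^{(N+1)}(t)-\widetilde F_S^{(N+1)}(t-1)\Bigr],
\]
which writes the loss as a nonnegative combination of CDF increments with weights $\prod_{j=t+1}^{N-1}\bigl(1-\frac{1}{j(1+\alpha)}\bigr)$. For the gain, it combines the recursion
\[
\widetilde G_N^{(N+1)}-\widetilde G_{N-1}^{(N+1)}=\frac{\widetilde F_S^{(N+1)}(N-1)-\widetilde G_{N-1}^{(N+1)}}{N(1+\alpha)}
\]
with the telescoping identity $\sum_{t\ge s}\frac{1}{t(1+\alpha)}\prod_{j=t+1}^{N-1}\bigl(1-\frac{1}{j(1+\alpha)}\bigr)=1-\prod_{j=s}^{N-1}\bigl(1-\frac{1}{j(1+\alpha)}\bigr)$. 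Together these express the gain as a positive constant plus the same increments with the same weights. Comparing coefficients increment by increment gives
\[
\frac{1}{N(1+\alpha)}-\frac{t-1}{t(1+\alpha)(N-1+a)}=\frac{N-t+at}{Nt(1+\alpha)(N-1+a)}>0,
\]
and this is exactly the step missing from your proposal.
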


The intuition is as follows.  To compare
\(\widetilde G_N^{(N+1)}\) with \(\widetilde G_{N-1}^{(N)}\), we express both
quantities with compatible product weights.  Although the common summands are
not monotone term by term, their changes can be combined with the additional final-step contribution in the \((N+1)\)-case, and the resulting coefficients are
strictly positive.  This proves strict monotonicity in \(N\).  The limit is
then obtained by converting the products into Gamma ratios, which converge to
a Beta integral; the resulting Gamma expression is increasing in \(\alpha\)
and approaches \(2\) as \(\alpha\to\infty\).

By Proposition~\ref{prop:regular-never-fail}, the threshold-stationary
execution cannot fail at threshold index \(t_b=1\). We now compare two adjacent
executions by deleting the first positive seller-side ratio.

\paragraph{Index Convention for Single-Ratio Truncation.}

Throughout the single-ratio truncation argument, the long execution has
seller-support length $k+2$ and the short execution has seller-support length
$k+1$.  We mark the ratio coordinates by the corresponding support
length:
\(r_i^{(k+2)}, i=0,1,\ldots,k+1\)
for the $k+2$ case, and
\(r_i^{(k+1)}, i=0,1,\ldots,k\)
for the $k+1$ case.  The base ratio is zero in both cases:
\(r_0^{(k+2)}=r_0^{(k+1)}=0.\)
The short case is obtained by deleting the first positive ratio of the long
case.  Thus
\begin{equation}
\label{eq:single-ratio-truncation-zero-based}
    r_i^{(k+1)}=r_{i+1}^{(k+2)},
    \qquad i=1,\ldots,k.
\end{equation}
Equivalently, the inputs after the base clear state are
\( r_1^{(k+2)},r_2^{(k+2)},\ldots,r_{k+1}^{(k+2)}\)
and \\
\( r_1^{(k+1)},r_2^{(k+1)},\ldots,r_k^{(k+1)}.\)
The zero ratio $r_0^{(k+2)}=r_0^{(k+1)}=0$ is already included in
$\omega_0$.

If the $k+2$ case reaches buyer value $b=k+2$ with threshold index $t_b=t$,
then its state is written as
\(\omega=(y;r_t^{(k+2)},r_{t+1}^{(k+2)},\ldots,r_{k+1}^{(k+2)}).\)
The corresponding artificial state in the $k+1$ case has threshold index
$t-1$ and ratio suffix
\((r_{t-1}^{(k+1)},r_t^{(k+1)},\ldots,r_k^{(k+1)}).\)
By Equation \eqref{eq:single-ratio-truncation-zero-based}, these are the same positive
seller-side ratios whenever $t\ge2$:
\(r_{t-1+i}^{(k+1)}=r_{t+i}^{(k+2)}, \ i=0,1,\ldots,k+1-t.\)
Thus the comparison is a suffix comparison, while the state notation
still starts from the actual threshold index in each case.

\paragraph{Artificial Allocation for Single-Ratio Truncation Arguments.}

Let
\(\omega=(y;r_t^{(k+2)},r_{t+1}^{(k+2)},\ldots,r_{k+1}^{(k+2)})\)
be a candidate state generated by the $k+2$ case at $b=k+2$, with threshold
index $t\ge1$.  The artificial allocation \(x_\omega^{(k+1)}\) in the $k+1$ case is obtained as
follows.  For buyer values $b<k+1$, use the allocation generated from the
short ratio sequence
$(r_1^{(k+1)},\ldots,r_k^{(k+1)})$.  At buyer value $b=k+1$, set
\[
x_\omega^{(k+1)}(s,k+1)
=
\begin{cases}
1, & s<t-1,\\
y, & s=t-1,\\
0, & s>t-1.
\end{cases}
\]
This artificial allocation need not satisfy
$\hat H(k+1;x_\omega^{(k+1)})=0$.

\paragraph{Inequality for the Threshold-Stationary Case.}

For $i\ge1$, define the virtual-value functions associated with the two ratio
sequences by
\[
    \phi^{\alpha(k+2)}(i):=(1+\alpha)i+\alpha r_i^{(k+2)},
    \qquad
    \phi^{\alpha(k+1)}(i):=(1+\alpha)i+\alpha r_i^{(k+1)}.
\]
Because of Equation \eqref{eq:single-ratio-truncation-zero-based}, for every
$\ell=1,\ldots,k$,
\begin{equation}
\label{eq:phi-shift-zero-based}
    (1+\alpha)(k+2)-\phi^{\alpha(k+2)}(\ell+1)
    =
    (1+\alpha)(k+1)-\phi^{\alpha(k+1)}(\ell).
\end{equation}

Let the $k+2$ case generate, at $b=k+2$, the state
\(\omega=(y;r_t^{(k+2)},\ldots,r_{k+1}^{(k+2)}), \  t\ge2.\)
The corresponding artificial threshold in the $k+1$ case is $t-1$. For convenience, define
for $\ell=1,\ldots,t-1$,
\begin{equation}
\label{eq:delta-zero-based}
\delta_\ell
:=
\Big((1+\alpha)(k+1)-\phi^{\alpha(k+1)}(\ell)\Big)
\prod_{i=1}^{\ell-1}\left(1+\frac1{r_i^{(k+1)}}\right)
\frac1{r_\ell^{(k+1)}}.
\end{equation}
The coefficient in $\delta_\ell$ is positive for each relevant $\ell$, since
otherwise the $k+2$ execution would have returned \textsc{fail} at the
corresponding threshold by \eqref{eq:phi-shift-zero-based}.

Let
\(p_{k+2}:=x^{(k+2)}(0,k+1), \  p_{k+1}:=x^{(k+1)}(0,k),\)
where $p_{k+2}$ is the threshold-zero allocation inherited by the $k+2$ case
from buyer value $k+1$, and $p_{k+1}$ is the analogous quantity in the
$k+1$ case from buyer value $k$.  A direct expansion of the clear-step
expression gives
\begin{align}
\hat H(k+2;x^{(k+2)})
={}&
 p_{k+2}
 -\beta\prod_{i=1}^{k+1}\left(1+\frac1{r_i^{(k+2)}}\right)
 +(1-p_{k+2})(k+2)(1+\alpha)
 +\left((1+\alpha)(k+1)-\alpha r_1^{(k+2)}\right)
 \frac1{r_1^{(k+2)}}
\nonumber\\
&+
\underbrace{\left(1+\frac1{r_1^{(k+2)}}\right)
\sum_{\ell=1}^{t-2}\delta_\ell
+
\left(1+\frac1{r_1^{(k+2)}}\right)\delta_{t-1}y}
_{\text{candidate-dependent terms}},
\label{eq:common-suffix-kplus2-zero-based}
\\[0.3em]
\hat H(k+1;x_\omega^{(k+1)})
={}&
 p_{k+1}
 -\beta\prod_{i=1}^{k}\left(1+\frac1{r_i^{(k+1)}}\right)
 +(1-p_{k+1})(k+1)(1+\alpha)
+
\underbrace{\sum_{\ell=1}^{t-2}\delta_\ell+
\delta_{t-1}y}_{\text{candidate-dependent terms}}.
\label{eq:common-suffix-kplus1-zero-based}
\end{align}
Since $\hat H(k+2;x^{(k+2)})=0$, to prove
$\hat H(k+1;x_\omega^{(k+1)})<0$, it suffices to show
\begin{equation}
\label{eq:threshold-stationary-same-state-positive-zero-based}
\hat H(k+2;x^{(k+2)})
-
\left(1+\frac1{r_1^{(k+2)}}\right)
\hat H(k+1;x_\omega^{(k+1)})
\ge0.
\end{equation}
Using \eqref{eq:common-suffix-kplus2-zero-based} and
\eqref{eq:common-suffix-kplus1-zero-based}, the candidate-dependent terms
cancel.  Therefore \eqref{eq:threshold-stationary-same-state-positive-zero-based}
is equivalent to $\mathcal D\ge0$, where
\begin{equation}
\label{eq:D-zero-based}
\mathcal D
:=
\left(1+\frac1{r_1^{(k+2)}}\right)
\Big((k+1)(1+\alpha)-1\Big)p_{k+1}
+
1
-
\Big((k+2)(1+\alpha)-1\Big)p_{k+2}.
\end{equation}
If the $k+1$ case later outputs a state with the same ratio suffix, the
positive coefficient of the boundary variable implies that its boundary
allocation is strictly larger than $y$.

\begin{proposition}[Dominance under Threshold-Stationarity]
\label{prop:success-stationary-phi}
Assume $\alpha>0$.  Let the $k+2$ and $k+1$ ratio sequences satisfy the
zero-based truncation convention
\eqref{eq:single-ratio-truncation-zero-based}.  Suppose the execution
\[
    \mathrm{ALG}_2\bigl(\omega_0;
    r_1^{(k+2)},\ldots,r_{k+1}^{(k+2)}\bigr)
\]
is threshold-stationary.  Then the following hold.
\begin{enumerate}
    \item If the $k+2$ execution outputs a state
    \(\omega=(y;r_t^{(k+2)},\ldots,r_{k+1}^{(k+2)})\)
    with $t\ge2$, then either the $k+1$ execution returns \textsc{fail}, or
    the output of the $k+1$ execution dominates $\omega$ after identifying
    $r_i^{(k+1)}$ with $r_{i+1}^{(k+2)}$ for $i=1,\ldots,k$.

    \item If the $k+2$ execution returns \textsc{fail}, then the $k+1$
    execution also returns \textsc{fail}.
\end{enumerate}
\end{proposition}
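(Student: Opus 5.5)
The plan is to reduce both parts to the sign of the quantity $\mathcal D$ in \eqref{eq:D-zero-based}, and then finish with Proposition~\ref{prop:mono}. The key observation is the following.

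Since the long execution is threshold-stationary, every buyer value $b<k+2$ is processed at threshold $0$. Hence $p_{k+2}=\beta G_{k+1}^{(k+2)}$ and $p_{k+1}=\beta G_{k}^{(k+1)}$, both given by the recursion \eqref{eq:big-jump-rec} with the respective ratio sequences.

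The short execution is a prefix-shifted copy: its $F_S(b-1)/F_S(0)$ factors are built from $r_2^{(k+2)},r_3^{(k+2)},\ldots$. I would therefore express $G_{k+1}^{(k+2)}$ through $G_k^{(k+1)}$ and the extra factor $(1+1/r_1^{(k+2)})$.
\begin{itemize}
\item Write $F_S^{(k+2)}(t-1)/F_S^{(k+2)}(0)=\prod_{i=1}^{t-1}(1+1/r_i)$.
\item Split off the $i=1$ factor.
\item Compare the unrolled sums \eqref{eq:Gm-def} termwise. The long sum has one more step of the multiplier $(1-\tfrac{1}{j(1+\alpha)})$ and one extra leading term $t=1$.
\end{itemize}
This should give an identity of the form
\[
\big((k+2)(1+\alpha)-1\big)G_{k+1}^{(k+2)}=\Big(1+\tfrac1{r_1^{(k+2)}}\Big)\big((k+1)(1+\alpha)-1\big)G_k^{(k+1)}+E,
\]
with an explicit remainder $E$.

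The inequality $\mathcal D\ge0$ then reduces to $\beta E\le 1$ with $\beta=\tfrac12$. I expect to bound $E$ using two facts:
\begin{itemize}
\item Lemma~\ref{lem:zero-ratio-case}, since the extremal configuration is the constant-$\phi^\alpha$ case with all ratios forcing the largest weights, and there the normalized quantity stays below $2$;
\item Lemma~\ref{lem:mono-G}.
\end{itemize}
Weak regularity (monotone $\phi^\alpha$) should let me push the ratios to the constant-$\phi^\alpha$ extreme, because $E$ is monotone in each $r_i$. This step is the main obstacle: establishing that monotonicity and the reduction to the constant case. I would first try it by induction on $k$ using the recursion, and then use $\sup_\alpha(\cdot)=2$ together with $\beta=\tfrac12$.

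Granting $\mathcal D\ge0$, part~1 follows from the argument preceding the statement. By \eqref{eq:threshold-stationary-same-state-positive-zero-based} and $\hat H(k+2;x^{(k+2)})=0$, we get $\hat H(k+1;x_\omega^{(k+1)})\le0$ for the artificial allocation with threshold $t-1$ and boundary value $y$. I would make this strict by using the strict inequality in Lemma~\ref{lem:mono-G}. Now suppose the short execution does not fail. Its actual allocation at $b=k+1$ must satisfy $\hat H\ge0$. Since $\hat H$ is increasing in the threshold-boundary variable, whose coefficient $\delta_{t-1}>0$, and the greedy search only increases allocations, the short output either has a proper-suffix ratio sequence (larger threshold) or the same suffix with boundary value $>y$. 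In both cases it dominates $\omega$ in the sense of Definition~\ref{def:order}.

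For part~2, suppose the long execution fails at $b=k+2$ (by Proposition~\ref{prop:regular-never-fail}, failure occurs only at $t_b>1$). There are two cases.
\begin{itemize}
\item If it fails by the virtual-value test at threshold $t$, then \eqref{eq:phi-shift-zero-based} shows the short execution hits the same test at threshold $t-1$.
\item If it fails because no candidate works even with all allocations set to $1$, then take the all-ones candidate at the last threshold as the artificial state. The same cancellation identity, now with $\hat H(k+2)<0$ and $\mathcal D\ge0$, gives $\hat H(k+1;x^{(k+1)}_\omega)<0$ for the maximal allocation, so the short execution fails too.
\end{itemize}
Proposition~\ref{prop:mono} handles any intermediate dominance comparisons needed to align the two runs.
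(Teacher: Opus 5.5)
Your reduction to the sign of $\mathcal D$ in \eqref{eq:D-zero-based}, and your handling of both parts once $\mathcal D\ge0$ is known, match the paper. The gap is in how you propose to prove $\mathcal D\ge0$. You aim to show $\beta E\le1$ for \emph{every} $\alpha$-weakly regular ratio sequence, without using the hypothesis of the proposition, and that unconditional statement is false.

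Your identity is just the paper's rewriting $\mathcal D=1-\frac{k+2}{2}\prod_{j=2}^{k+2}\bigl(1-\frac{1}{(1+\alpha)j}\bigr)+J$, where $J$ is a nonnegative combination of the normalized CDF values $F_{k+2}(m)$, $m\ge1$. Hence $E$ is \emph{increasing} in every $r_i$. Coordinatewise monotonicity therefore drives you to the flat CDF $r_i\to\infty$, which is weakly regular, and not to a constant-$\phi^\alpha$ configuration. Conversely, the $r=0$ configuration of Lemma~\ref{lem:zero-ratio-case}, where the CDF grows fastest, is where $\mathcal D$ is \emph{largest}, not smallest. At the flat end $G_b=1-\prod_{j\le b}\bigl(1-\frac{1}{j(1+\alpha)}\bigr)$ exactly, which gives
\[
E=(1+\alpha)-\frac{\alpha\bigl((k+1)(1+\alpha)-1\bigr)}{(k+1)(1+\alpha)}\prod_{j=1}^{k}\Bigl(1-\frac{1}{j(1+\alpha)}\Bigr)\;\longrightarrow\;1+\alpha\quad(k\to\infty).
\]
So $\mathcal D<0$ once $\alpha>1$. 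Lemma~\ref{lem:zero-ratio-case} bounds $G$ from above, which is the wrong direction: to make $J$ large you need a \emph{lower} bound on how fast $F_{k+2}$ grows.

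Such configurations are excluded only by the hypothesis, and the hypothesis is exactly what supplies that lower bound. In Part~1 the long run outputs a state with $t\ge2$. In Part~2 it fails, which by Proposition~\ref{prop:regular-never-fail} happens only past thresholds $0$ and $1$. In both cases the threshold-zero candidate was rejected at $b=k+2$, i.e.\ $K=\beta G^{(k+2)}_{k+2}\ge1$.

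The paper then minimizes $J$ on the level set of $K$. The ratio $(\partial J/\partial\phi^\alpha_\ell)/(\partial K/\partial\phi^\alpha_\ell)$ is strictly decreasing in $\ell$. So any strict step $\phi^\alpha_i<\phi^\alpha_{i+1}$ can be narrowed (raise $\phi^\alpha_i$, lower $\phi^\alpha_{i+1}$) while holding $K$ fixed and strictly decreasing $J$. Minimizers therefore have constant $\phi^\alpha$. In that one-parameter family, $K\ge1$ forces $\lambda<1-a$ (Lemma~\ref{lem:1}), and Lemma~\ref{lem:extremal-t1} then gives $\mathcal D>0$ via a Beta-function estimate.

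That strict inequality is also what makes $\hat H(k+1;x^{(k+1)}_\omega)<0$ strict in Part~1; Lemma~\ref{lem:mono-G} does not supply it.
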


The proof proceeds by minimizing \(\mathcal{D}\) over the admissible choices of weakly increasing \((\phi^\alpha_0,\ldots,\phi^\alpha_k)\). Suppose that, at a minimizer, the \(\phi_i\)'s are not all equal. Then the first-order coefficients of the perturbation are not balanced across coordinates. We can therefore choose an admissible perturbation that shifts mass from a coordinate with a larger marginal contribution to one with a smaller marginal contribution, keeping the required constraints unchanged while strictly decreasing \(\mathcal{D}\). This contradicts the choice of the minimizer. Thus the minimum is attained only in the constant-\(\phi_S\) case. The desired bound \(\mathcal{D}\ge 0\) then follows from the constant-\(\phi_S\) result in Lemma~\ref{lem:extremal-t1}.

For convenience, given $r,\alpha>0$, we denote
\(a:=\frac{\alpha}{1+\alpha}\in(0,1),\;\lambda:=ar.\)
\begin{lemma}
\label{lem:extremal-t1}
Let $\alpha>0$ and $r\ge0$.
Define the constant-$\phi^\alpha$ ratio sequences by
\( r_i^{(k+2)}=\frac{k+1-i+\lambda}{a}, \  i=1,\ldots,k+1,\)
and
\( r_i^{(k+1)}=\frac{k-i+\lambda}{a}, \  i=1,\ldots,k.\)
If the $k+2$ execution outputs a proper suffix state
$(y;r_t^{(k+2)},\ldots,r_{k+1}^{(k+2)})$ with $t\ge2$, then
$\mathcal D\ge0$.
\end{lemma}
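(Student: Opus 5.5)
The plan is to make $\mathcal D$ fully explicit in the constant-$\phi^\alpha$ case and then sign it. With $r_i^{(k+2)}=(k+1-i+\lambda)/a$ and $a=\alpha/(1+\alpha)$, the virtual value is constant:
\[
\phi^{\alpha(k+2)}(i)=(1+\alpha)i+\alpha r_i^{(k+2)}=(1+\alpha)(k+1+\lambda)
\]
for every $i$, and similarly $\phi^{\alpha(k+1)}\equiv(1+\alpha)(k+\lambda)$. Since $(1+\alpha)(k+2)>\phi^{\alpha(k+2)}$ requires $\lambda<1$, I will assume $\lambda\in[0,1)$; otherwise the execution would fail and the hypothesis is vacuous.

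\textbf{Closed forms.} The products $\prod_{i=1}^{m}(1+1/r_i)$ telescope into Gamma ratios. Indeed,
\[
1+\frac{1}{r_i}=\frac{k+1-i+\lambda+a}{k+1-i+\lambda},
\]
so
\[
F_S(m)/F_S(0)=\frac{\Gamma(k+1+\lambda+a)\,\Gamma(k+1-m+\lambda)}{\Gamma(k+1+\lambda)\,\Gamma(k+1-m+\lambda+a)}.
\]
The threshold-stationary hypothesis lets me invoke the recursion \eqref{eq:big-jump-rec} for $p_{k+2}=x^{(k+2)}(0,k+1)=\beta G_{k+1}$ and $p_{k+1}=x^{(k+1)}(0,k)=\beta G_k$, with $\beta=\tfrac12$. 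Each is written via \eqref{eq:Gm-def} as a sum over $t$ of these Gamma ratios times $\prod_j(1-\frac{1}{j(1+\alpha)})=\frac{\Gamma(b+1-\frac{1}{1+\alpha})\Gamma(t+1)}{\Gamma(t+1-\frac1{1+\alpha})\Gamma(b+1)}$. These are the same objects as the $\widetilde G$ of Lemma~\ref{lem:zero-ratio-case}, now with $\lambda$ as a parameter.

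\textbf{Reduction to one inequality.} Writing $c_m=(m(1+\alpha)-1)$, we have $\mathcal D=(1+1/r_1^{(k+2)})c_{k+1}p_{k+1}+1-c_{k+2}p_{k+2}$. I will relate the two sums termwise through the one-step recursion
\[
p_{k+2}=\Bigl(1-\tfrac{1}{(k+1)(1+\alpha)}\Bigr)p^{\mathrm{long}}_{k}+\tfrac{\beta}{(k+1)(1+\alpha)}\tfrac{F_S(k)}{F_S(0)}.
\]
Here $p^{\mathrm{long}}_k$ is the long-sequence analogue of $p_{k+1}$. The two sequences' $F_S$ ratios differ by exactly the factor $1+1/r_1^{(k+2)}$ after the index shift, which is the source of that prefactor in $\mathcal D$. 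Combining gives $\mathcal D=1-\tfrac12\cdot(\text{explicit nonnegative sum})$, the same kind of compatible-weight comparison described after Lemma~\ref{lem:zero-ratio-case}. The claim then becomes an inequality between a Gamma-weighted sum and $2$.

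\textbf{Main obstacle.} The hard step is bounding this sum uniformly in $k$, $\alpha>0$ and $\lambda\in[0,1)$. I would first show the sum is monotone in $\lambda$, hopefully maximised at $\lambda=0$, the extremal case already covered by Lemma~\ref{lem:zero-ratio-case}. I would attempt this by differentiating the digamma-type log-derivatives of each Gamma ratio in $\lambda$ and checking a uniform sign. Next I would show monotonicity in $k$, reusing the strict monotonicity argument of Lemma~\ref{lem:zero-ratio-case}: regroup so that the non-monotone common summands absorb the extra final term with positive coefficients. In the limit $k\to\infty$ the sum becomes the Beta integral $\frac{1}{1+\alpha}\Gamma(\tfrac1{1+\alpha})^2/\Gamma(\tfrac2{1+\alpha})<2$, giving $\mathcal D\ge0$. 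If the $\lambda$-monotonicity fails, the fallback is to bound each Gamma ratio in $\lambda$ by Gautschi/Wendel inequalities, $x^{1-s}\le\Gamma(x+1)/\Gamma(x+s)\le(x+1)^{1-s}$. The aim is a comparison with the $\lambda=0$ sum that loses at most the slack available before the limit constant reaches $2$.
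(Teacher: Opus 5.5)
Your plan has a genuine gap. Apart from the virtual-value consequence $\lambda<1$, you never use the proper-suffix hypothesis, and you set out to prove $\mathcal D\ge 0$ uniformly for $\lambda\in[0,1)$. That uniform statement is false. Already for $k=1$, the recursion \eqref{eq:big-jump-rec} with $\beta=\tfrac12$ gives $p_2=\tfrac{1-a}{2}$ and $p_3=\tfrac{1-a}{2}\cdot\tfrac{1+a}{2}+\tfrac{1-a}{4}\bigl(1+\tfrac{a}{1+\lambda}\bigr)$. Substituting into \eqref{eq:D-zero-based} yields
\[
\mathcal D=\frac14\Bigl(2(1-a)-\frac{a^2\lambda}{1+\lambda}\Bigr).
\]
This is negative for $\lambda$ close to $1$ whenever $a>2\sqrt2-2$; for instance $a=0.9$, $\lambda=\tfrac12$ gives $\mathcal D=-0.0175$.

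The same example shows that your hoped-for monotonicity runs the wrong way. $\mathcal D$ is decreasing in $\lambda$, and this holds in general, since all $\lambda$-dependence enters through factors $\tfrac{k-i+\lambda+a}{k-i+\lambda}$ that decrease in $\lambda$. So $\lambda=0$ is the most favourable case, not the extremal one. Your Gautschi--Wendel fallback cannot repair this: the slack $2-(1-a)B(1-a,1-a)$ tends to $0$ as $a\to1$, while the loss relative to the $\lambda=0$ value does not.

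The missing ingredient is Lemma~\ref{lem:1}. A proper-suffix output at $b=k+2$ means the threshold-zero candidate is infeasible there, i.e.\ $U_{k+2}(\lambda)\ge 2$. Since $U_{k+2}$ is strictly decreasing in $\lambda$ and $U_{k+2}(1-a)<2$, this forces $\lambda<1-a$.

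With that upper bound, the paper writes the target as $2(1-a)+\sum_t w_tA_t(\lambda)$. Here $A_1$ is independent of $\lambda$, and all $w_t$ and all $A_t$ with $t\ge2$ are decreasing in $\lambda$. So every term is bounded below by its value at the worst admissible point $\lambda=1-a$. At that point the products telescope into Pochhammer symbols, and the expression is controlled by a partial sum of $\sum_{s\ge0}\frac{1-a}{s+1-a}\frac{(a)_s}{s!}=(1-a)B(1-a,1-a)<2$, uniformly in $k$. Consequently neither your separate monotonicity-in-$k$ step nor the $k\to\infty$ limit is needed.
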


Suppose the execution in \(k+2\) case produces a proper suffix state \(\omega\).
We compare it with the \(k+1\) case by placing the corresponding artificial
allocation. By Lemma~\ref{lem:1}, we derive an upper bound on the key quantity
\(\lambda\), which allows us to simplify the expression.
We then show that \(\mathcal{D}\ge0\) using the inequality of the Euler beta function.

\begin{lemma}
\label{lem:1}
Let $\alpha>0$, $r\ge0$. 
Let the $k+2$ constant-$\phi^\alpha$ ratio sequence be
\(r_i^{(k+2)}=\frac{k+1-i+\lambda}{a}, \ i=1,\ldots,k+1.\)
If the $k+2$ execution outputs a proper suffix state, then
$\lambda<1-a$.
\end{lemma}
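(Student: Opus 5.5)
The plan is to turn the hypothesis ``the $k+2$ execution outputs a proper suffix state'' into one explicit inequality at buyer value $b=k+2$, and then show that this inequality fails whenever $\lambda\ge 1-a$.

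First I would record what the constant-$\phi^\alpha$ choice gives. Since $\alpha r_i^{(k+2)}=(1+\alpha)(k+1-i+\lambda)$, we get
\[
\phi^{\alpha(k+2)}(i)=(1+\alpha)(k+1+\lambda)\qquad\text{for every } i\ge1 .
\]
Hence the virtual-value test in Line~\ref{line:state-transition-virtual-test} at $b=k+2$ fails exactly when $\lambda\ge1$. So the case $\lambda\ge 1$ is excluded outright, because the algorithm would return \textsc{fail} rather than output a state, and I may assume $\lambda<1$. The same computation gives the two quantities I will need:
\[
(1+\alpha)(k+1)-\alpha r_1^{(k+2)}=(1+\alpha)(1-\lambda),
\qquad
\prod_{i=1}^{k+1}\Bigl(1+\tfrac1{r_i^{(k+2)}}\Bigr)=\prod_{j=0}^{k}\frac{j+\lambda+a}{j+\lambda}=\frac{\Gamma(k+1+\lambda+a)\,\Gamma(\lambda)}{\Gamma(\lambda+a)\,\Gamma(k+1+\lambda)} .
\]

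Second, I would use threshold-stationarity to fix the state entering step $k+2$. Before $k+2$, every $x(s,b)$ with $s\ge1$ is zero, and $x(0,k+1)=p_{k+2}=\beta G_{k+1}$ by \eqref{eq:Gm-def}, with $F_S(t-1)/F_S(0)$ computed from the constant ratios. For the output threshold to satisfy $t\ge2$, the threshold search must have passed $t_b=1$. That is, even with $x(0,k+2)=x(1,k+2)=1$ the normalized discriminant is still strictly negative. I would evaluate this boundary configuration with \eqref{eq:common-suffix-kplus2-zero-based}, taking $t=2$ and $y=0$, which is the same configuration as threshold $1$ with full allocation. This gives the necessary condition
\[
p_{k+2}-\beta\prod_{i=1}^{k+1}\Bigl(1+\tfrac1{r_i^{(k+2)}}\Bigr)+(1-p_{k+2})(k+2)(1+\alpha)+\frac{(1+\alpha)(1-\lambda)}{r_1^{(k+2)}}<0 .
\]

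Third, I would show the left-hand side is nonnegative when $\lambda\ge1-a$, which yields the contradiction. Write $r_1^{(k+2)}=(k+\lambda)/a$ and $\beta=\tfrac12$. Because $p_{k+2}\le1$ (indeed $p_{k+2}<1$, since no earlier failure occurred), the middle term is nonnegative. The last term is nonnegative because $\lambda<1$. So the whole argument reduces to comparing the Gamma ratio with $2p_{k+2}+2(1-p_{k+2})(k+2)(1+\alpha)$. When $\lambda\ge1-a$, the ratio $\Gamma(k+1+\lambda+a)/\Gamma(k+1+\lambda)$ grows only like $(k+1+\lambda)^{a}$, and $\Gamma(\lambda)/\Gamma(\lambda+a)$ is bounded because $\lambda$ is bounded away from $0$. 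I would bound it using Gautschi/Wendel's inequality $\Gamma(x+a)/\Gamma(x)\le x^{a}$. For the other side I need a matching lower bound on $p_{k+2}$, or a trade-off between $p_{k+2}$ and $1-p_{k+2}$. For this I would use the clear-step equations \eqref{eq:big-jump-rec} together with Lemma~\ref{lem:mono-G}. With the constant ratios, $F_S(b-1)/F_S(0)=\prod_{i=1}^{b-1}(1+1/r_i)$ is itself a Gamma ratio, so $G_{k+1}$ becomes a sum of Gamma ratios. Comparing it termwise with the Beta-integral representation used in Lemma~\ref{lem:zero-ratio-case} should give the lower bound.

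The main obstacle is this last comparison: getting a lower bound on $p_{k+2}$ sharp enough to beat $\beta$ times the product exactly at the threshold $\lambda=1-a$. My first attempt would be to note that at $\lambda=1-a$ the product telescopes to $\Gamma(k+2)\Gamma(1-a)/\Gamma(k+1-a)$. I would then check that the inequality holds with equality in the limiting configuration and is monotone in $\lambda$, so that the case $\lambda\ge1-a$ follows by differentiating in $\lambda$. This monotonicity step is where the argument could break, and there I would need the fine Beta-function inequality mentioned in the overview.
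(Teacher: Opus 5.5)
Your Steps 1 and 2 are fine: the constant value $\phi^{\alpha(k+2)}(i)=(1+\alpha)(k+1+\lambda)$, threshold-stationarity up to $b=k+1$, and a necessary infeasibility condition at $b=k+2$. The overall shape you aim for also matches the paper: contradict that condition for $\lambda\ge 1-a$ using monotonicity in $\lambda$ plus a Beta-function evaluation at $\lambda=1-a$. But Step 3 is the entire content of the lemma, and you have not carried it out. Moreover, the route you sketch points the wrong way.

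Write $\Pi:=\prod_{i=1}^{k+1}\bigl(1+1/r_i^{(k+2)}\bigr)$. Then $p_{k+2}$ enters your left-hand side with net coefficient $1-(k+2)(1+\alpha)<0$, so what would help is an \emph{upper} bound on $p_{k+2}$, not a lower bound. A lower bound on $p_{k+2}$ is useful only if you drop the nonnegative term $(1-p_{k+2})(k+2)(1+\alpha)$. The remaining target $\Pi\le 2p_{k+2}\le 2$ is then false for large $k$, because $\Pi\sim\frac{\Gamma(\lambda)}{\Gamma(\lambda+a)}k^{a}\to\infty$. The inequality lives in the cancellation between $\beta\Pi$ and $(1-p_{k+2})(k+2)(1+\alpha)$. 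Separate Gautschi-type bounds on $\Pi$ and termwise bounds on $p_{k+2}$ do not see this cancellation.

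The missing idea is to recombine these terms with one more application of \eqref{eq:big-jump-rec}. Since $p_{k+2}=\beta G_{k+1}$,
\[
p_{k+2}-\beta\Pi+(1-p_{k+2})(k+2)(1+\alpha)=(k+2)(1+\alpha)\bigl(1-\beta G_{k+2}(\lambda)\bigr),
\]
where $G_{k+2}$ is computed from \eqref{eq:Gm-def} using the constant ratios. So infeasibility of the threshold-zero candidate alone already gives $G_{k+2}(\lambda)\ge 2$; your extra $t_b=1$ term is not needed.

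Each summand of $G_{k+2}$ is a $\lambda$-independent positive coefficient times $F(t-1)=\prod_{i<t}\bigl(1+\frac{a}{k+1-i+\lambda}\bigr)$. Hence $G_{k+2}$ is strictly decreasing in $\lambda$, with no differentiation required.

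At $\lambda=1-a$ the summands telescope into Beta weights, giving the exact identity
\[
G_{k+2}(1-a)=\frac{k+2-a}{k+2}\sum_{j=0}^{k+1}\frac{1-a}{j+1-a}\frac{(a)_j}{j!}.
\]
This is strictly less than $(1-a)B(1-a,1-a)=2(1-a)B(1-a,2-a)<2$, which contradicts $G_{k+2}(\lambda)\ge 2$ for every $\lambda\ge 1-a$.

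The inequality at $\lambda=1-a$ is strict for every $k$; it is not an equality in some limiting configuration. Its margin $2-(1-a)B(1-a,1-a)$ tends to $0$ as $a\to 1$, so you need this exact evaluation rather than lossy Gamma-ratio estimates. A minor correction: at $\lambda=1-a$ the product is $\Gamma(k+2)\Gamma(1-a)/\Gamma(k+2-a)$, not one with $\Gamma(k+1-a)$ in the denominator.
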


The intuition is as follows. 
Suppose that Algorithm~\ref{alg:state-recursion} reaches a state whose ratio vector is a proper suffix. Then the following expression must satisfy a necessary lower bound.
This expression decreases as \(\lambda\) increases. To
establish \(\lambda < 1-a\), it suffices to substitute \(\lambda = 1-a\) into
the expression. At this cutoff, the product terms simplify considerably. A
Gamma--Beta estimate then shows that the expression falls below the required
lower bound. This contradiction implies that \(\lambda\) must be strictly less
than \(1-a\).

\subsection{The General Case and Main Theorem}
\label{sec:general}

Section~\ref{sec:t1} proves the single-ratio truncation argument when the
\(k+2\) case is threshold-stationary. In this subsection, we first generalize the argument by relaxing the threshold-stationary assumption. Then, we prove the never-fail result by contradiction. In addition, we provide the set of all undominated states.

\begin{proposition}[Single-ratio truncation: non-threshold-stationary with $t_b\le1$ case]
\label{prop:strict-decrease-non-stationary}
We assume \(\alpha>0\) and \(\beta=\tfrac12\). Let the \(k+2\) and \(k+1\)
ratio sequences satisfy the zero-based truncation convention
\eqref{eq:single-ratio-truncation-zero-based}. Suppose that the long execution
\(\mathrm{ALG}_2\bigl(\omega_0; r_1^{(k+2)},\ldots,r_{k+1}^{(k+2)}\bigr)\)
satisfies \(t_b\le1\) for every \(b<k+2\). Then the following hold.

\begin{enumerate}
    \item If the long execution outputs a proper suffix state, then either the
    short execution
    \(\mathrm{ALG}_2\bigl(\omega_0; r_1^{(k+1)},\ldots,r_k^{(k+1)}\bigr)\)
    returns \textsc{fail}, or the output of the long execution is dominated by
    the output of the short execution after identifying
    \(r_i^{(k+1)}=r_{i+1}^{(k+2)}\) for \(i=1,\ldots,k\).

    \item If the long execution returns \textsc{fail}, then the short execution
    also returns \textsc{fail}.
\end{enumerate}
\end{proposition}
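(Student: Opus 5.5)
The plan is to reduce Proposition~\ref{prop:strict-decrease-non-stationary} to the threshold-stationary analysis of Proposition~\ref{prop:success-stationary-phi}. The only new phenomenon is that the long execution may have $t_b=1$ at some buyer values $b<k+2$, whereas in the stationary case $t_b=0$ throughout.

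First I will describe what $t_b\le 1$ means concretely. By Proposition~\ref{prop:regular-never-fail}, the long execution never fails at $t_b=1$. So each step $b<k+2$ is either a non-binding step, a clear step at threshold $0$, or a clear step at threshold $1$. A clear step at threshold $1$ sets $x(0,b)=1$ and $x(1,b)=y_b\in(0,1]$. After the first such step, the latest clear state has ratio suffix starting at $r_1^{(k+2)}$, and $\ell=1$ from then on.

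I then split according to whether this ever happens. If the long execution never reaches threshold $1$ before $b=k+2$, it is threshold-stationary, and both items follow directly from Proposition~\ref{prop:success-stationary-phi}. Otherwise, let $b_1<k+2$ be the first buyer value with $t_{b_1}=1$, with long state $(y_{b_1};r_1^{(k+2)},\ldots,r_{b_1-1}^{(k+2)})$.

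The key comparison concerns the short execution on the prefix $r_1^{(k+1)},\ldots,r_{b_1-2}^{(k+1)}$, which is $r_2^{(k+2)},\ldots,r_{b_1-1}^{(k+2)}$. Up to step $b_1-1$ the long run is stationary, so it is stationary on $b_1$ inputs. I will apply the artificial-allocation inequality $\mathcal D\ge0$, from Lemma~\ref{lem:extremal-t1} and the perturbation argument of Proposition~\ref{prop:success-stationary-phi}, at the index $k$ replaced by $b_1-2$. The long state at $b_1$ has threshold $1$, not $\ge2$, so the candidate-dependent sum in \eqref{eq:common-suffix-kplus1-zero-based} reduces to the single boundary term $\delta$. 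The artificial short state then has threshold $0$ and ratio suffix $(r_0^{(k+1)},r_1^{(k+1)},\ldots)$, which is a proper suffix-comparable relative of $(r_1^{(k+2)},\ldots)$ under the identification $r_i^{(k+1)}=r_{i+1}^{(k+2)}$.

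The inequality $\hat H(b_1-1;x_\omega^{(b_1-1)})<0$ shows that the short execution, at step $b_1-1$, either fails or lands in a state at least as demanding as the artificial one. The short state has threshold $0$ with ratio $r_0=0$, while the long state has threshold $1$ with ratio $r_1^{(k+2)}=r_0^{(k+1)}$-shifted. I need to check that the dominance convention covers this by showing $\hat H$ depends on the leading ratio only through the factor $r_{t_0}/(1+r_{t_0})$. With this in hand, I will argue that the short state dominates the long state $\omega_{b_1}$, in the sense of Definition~\ref{def:order} after identification. I expect this step to be delicate: when $\ell=1$, the normalization in \eqref{eq:Hhat} for the long run divides by $F_S(1)$, not $F_S(0)$, and I must verify that it matches the short run's normalization by $F_S(0)$ up to a positive factor.

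Once dominance at step $b_1$ (versus $b_1-1$) is established, both executions read the identical remaining ratio tail $r_{b_1}^{(k+2)},\ldots,r_{k+1}^{(k+2)}=r_{b_1-1}^{(k+1)},\ldots,r_k^{(k+1)}$. Claim~\ref{clm:state-sufficiency} says future behavior depends only on the state and the tail. Proposition~\ref{prop:mono} then propagates the dominance to the end. This gives item 1: if the short execution does not fail, its output dominates the long output. It also gives item 2: a long failure forces a short failure.

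The main obstacle is the middle step. It requires extending the computation \eqref{eq:D-zero-based} to the case where the candidate has threshold $1$, and checking that the sign of $\mathcal D$ still suffices. I would first try to show that the $t=1$ case is literally the $t=2$ formula with the empty sum and $\delta$ replaced by its $\ell=0$ analogue. If that fails, I would redo the $\mathcal D$ minimization directly, again reducing to constant $\phi^\alpha$ and invoking Lemma~\ref{lem:1}.
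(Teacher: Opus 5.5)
Your handling of the case with no turning point matches the paper. The turning-point case, however, has a genuine gap: the step where Proposition~\ref{prop:mono} is supposed to carry dominance from $b_1$ to $k+2$ cannot work.

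At the first buyer value $b_1$ with $t_{b_1}=1$, the long state is $(y_{b_1};r_1^{(k+2)},r_2^{(k+2)},\ldots,r_{b_1-1}^{(k+2)})$. The short run at $b_1-1$ has threshold $0$ and suffix $(0,r_2^{(k+2)},\ldots,r_{b_1-1}^{(k+2)})$. Since $r_1^{(k+2)}=F_S(0)/f_S(1)>0$, and the identification $r_i^{(k+1)}=r_{i+1}^{(k+2)}$ holds only for $i\ge1$, neither sequence is a suffix of the other. So the two states are not comparable under Definition~\ref{def:order}. Proposition~\ref{prop:mono} only compares comparable clear states run on a common tail, so it says nothing here.

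Renormalizing $\hat H$ will not fix this, because after the turning point the two executions obey different recursions. In the long run, the boundary variable $x(1,b)$ has coefficient
\[
R_b:=(1+\alpha)b-\phi^{\alpha(k+2)}(1)=(b-1)(1+\alpha)-\alpha r_1^{(k+2)},
\]
and the first bracket of \eqref{eq:Hhat} (weight $r_{t_0}/(1+r_{t_0})$) is present. In the short run, the boundary variable $x(0,b-1)$ has coefficient $(b-1)(1+\alpha)$, and that bracket vanishes because $r_0=0$.

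The same mismatch already breaks your turning-point computation. The cancellation of candidate-dependent terms that produces $\mathcal D$ uses \eqref{eq:phi-shift-zero-based}, which fails at $\ell=0$: the two sides differ by $\alpha r_1^{(k+2)}$. So the $t=1$ case is not the $t=2$ formula with an empty sum. A further warning sign is that your argument never uses the hypothesis $t_b\le1$ for $b<k+2$.

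What is missing is an invariant carried through the two different recursions in lockstep. The paper defines the following residual (its proof writes the first positive ratio as $r_0$):
\[
Q_b:=\frac{R_b}{r_1^{(k+2)}}\bigl(1-x^{(k+2)}(1,b)\bigr)-\Bigl(1+\frac{1}{r_1^{(k+2)}}\Bigr)(b-1)(1+\alpha)\bigl(1-x^{(k+1)}(0,b-1)\bigr).
\]
This directly compares the long run at threshold $1$ and buyer $b$ with the short run at threshold $0$ and buyer $b-1$. The paper's argument has three steps:
\begin{enumerate}
\item It proves $Q_{b_1}>0$ at the turning point. It lowers the last ratio read at that step to an auxiliary value that makes the short run's threshold-zero allocation exactly $1$, then invokes the stationary analysis for the allocation with $x(0,b_1)=x(1,b_1)=1$.
\item It propagates $Q_b>0$ by induction from the two explicit clear-step equations, using $R_{b-1}>0$ and $x^{(k+1)}\le1$. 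This is exactly where $t_b\le1$ is used.
\item Only at $b=k+2$ is the artificial-allocation comparison made. There a long threshold $t\ge2$ corresponds to the short threshold $t-1\ge1$, and the ratio suffixes genuinely coincide. The candidate-dependent terms cancel, giving $\hat H(k+2;x^{(k+2)})-\bigl(1+1/r_1^{(k+2)}\bigr)\hat H(k+1;x^{(k+1)}_\omega)=Q_{k+2}>0$.
\end{enumerate}
Both the dominance and the failure-propagation conclusions then follow as in the stationary case, with no appeal to Proposition~\ref{prop:mono}.
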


In the threshold-stationary case, the allocation function has a relatively clean expression. But in the non-threshold-stationary case, the expression is involved. To overcome the difficulty, we construct the artificial \(k+1\) allocation used in Section~\ref{sec:t1}. We call buyer value \(b^\star\) a turning point
if the threshold index first moves away from \(0\) at that step; equivalently,
\(t_b=0\) for \(b<b^\star\) and \(t_{b^\star}>0\). Proposition~\ref{prop:success-stationary-phi}
supplies positivity of the residual measuring the gap between the normalized
 \(\hat H\) in the \(k+2\) case and the artificial  \(\hat H\) in the \(k+1\) case at that turning point. Proposition ~\ref{prop:strict-decrease-non-stationary} shows
that the recursion preserves this positivity.  Consequently, the same dominance and failure-propagation
conclusions continue to hold.

We now collect the local truncation results proved in the preceding
lemmas and propositions. Lemma~\ref{lem:extremal-t1} establishes the result in the
constant-\(\phi^\alpha_S\) case. Proposition~\ref{prop:success-stationary-phi}
then extends it to threshold-stationary executions, and
Proposition~\ref{prop:strict-decrease-non-stationary} further generalizes
it to a non-threshold-stationary setting. They imply that deleting a leading ratio cannot turn a
failing execution into a successful one; moreover, when both executions
are successful and the resulting states are comparable, the execution
with the shorter ratio sequence produces a more dominant state.

We use these two implications in two different ways.  Theorem~\ref{thm:never-fail}
uses the failure implication in a minimal-counterexample argument to rule out
the possibility that Algorithm~\ref{alg:state-recursion} returns
\textsc{fail}.  Theorem~\ref{thm:dominance} then iterates the dominance
implication to identify the undominated states reachable from the base clear
state.

\begin{theorem}[Algorithm never fails for $\beta=\tfrac{1}{2}$]
\label{thm:never-fail}
For any $\alpha$-weakly regular $F_S$ and $\beta=\tfrac{1}{2}$,
Algorithm~\ref{alg:state-recursion} never returns {\sc fail}.
\end{theorem}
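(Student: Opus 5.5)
We argue by minimal counterexample, in the way the overview of Section~\ref{sec:general} indicates. By Lemma~\ref{lem:alpha-zero-never-fail} the case $\alpha=0$ is already settled, so we fix $\alpha>0$ and $\beta=\tfrac12$. Suppose some $\alpha$-weakly regular $F_S$ makes Algorithm~\ref{alg:state-recursion} return \textsc{fail} from the base clear state $\omega_0$. Among all such counterexamples, we choose one for which the first failing buyer value $b$ is smallest. By Claim~\ref{clm:state-sufficiency}, only the ratios $r_0=0,r_1,\ldots,r_{b-1}$ influence the execution up to step $b$. We may therefore truncate $F_S$ and regard the counterexample as a ratio sequence of length $k+2$ that fails exactly at buyer value $k+2$, while every earlier step succeeds.

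We then split into cases according to the thresholds $t_{b'}$ for $b'<k+2$. Proposition~\ref{prop:regular-never-fail} rules out failure at threshold index $t_b=1$, so the failure must occur at a threshold $t_b\ge2$ and with $k+2=N$. If every earlier threshold satisfies $t_{b'}\le1$, the long execution falls under Proposition~\ref{prop:success-stationary-phi} in the threshold-stationary subcase, or under Proposition~\ref{prop:strict-decrease-non-stationary} otherwise. In either case, part~2 of the relevant proposition says that the short execution on $r_1^{(k+1)},\ldots,r_k^{(k+1)}$ also fails. Deleting a leading positive ratio preserves $\alpha$-weak regularity, because the shifted virtual values satisfy \eqref{eq:phi-shift-zero-based}, so this short sequence is a legitimate instance. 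It fails at a buyer value at most $k+1$, which contradicts minimality.

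The remaining case is that some earlier step has already moved the threshold to $t_{b^\star}\ge2$, the last clear state before failure being reached at some $b_0<k+2$ with threshold $\ell\ge2$. Here we use the dominance structure. The clear state $\omega=(y;r_\ell,\ldots,r_{b_0-1})$ is achievable. Applying part~1 of the truncation propositions repeatedly, deleting leading ratios one at a time, shows that $\omega$ is dominated by a state with the same ratio suffix produced by a shorter instance, unless that shorter instance fails first, which again contradicts minimality. Proposition~\ref{prop:mono} then carries failure upward: if the continuation from $\omega$ on the tail $(r_{b_0},\ldots)$ fails, then the continuation from the dominating state also fails. That dominating execution is a counterexample on a strictly shorter ratio sequence. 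This contradicts minimality and completes the argument.

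The main obstacle is this last reduction. We must check that the dominating state is genuinely achievable from $\omega_0$ by a legitimate $\alpha$-weakly regular sequence. We must also check that the two executions' ratio suffixes line up under the identification $r_i^{(k+1)}=r_{i+1}^{(k+2)}$, so that the states are comparable as Definition~\ref{def:order} requires. Finally, the minimality measure should be the total sequence length rather than the failing buyer value, so that iterated truncation strictly decreases it. I would first attempt a double induction on the pair (length, number of threshold moves), invoking the single-ratio propositions at each turning point.
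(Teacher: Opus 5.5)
Your skeleton matches the paper's: a minimal counterexample, $\alpha=0$ via Lemma~\ref{lem:alpha-zero-never-fail}, and the case where all thresholds before failure are at most $1$ handled by part~2 of Propositions~\ref{prop:success-stationary-phi} and~\ref{prop:strict-decrease-non-stationary}. The remaining case, however, has a genuine gap.

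You anchor that case at the \emph{last} clear state $\omega$, reached at $b_0$ with threshold $\ell\ge2$, and propose to apply part~1 of the truncation propositions to it repeatedly. Those propositions only compare outputs at a step $k+2$ where the long block is threshold-stationary or satisfies $t_b\le1$ for all $b<k+2$. In your case the threshold has already exceeded $1$ at some step before $b_0$ (thresholds are nondecreasing and only move at clear steps). So the block ending at $b_0$ violates the hypotheses, and after one deletion the shortened run need not satisfy them either. ``Repeatedly deleting leading ratios'' is therefore unsupported. The double induction you sketch is an attempt to patch a step that the correct argument never needs.

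The missing idea is to cut at the \emph{first} step where the threshold exceeds $1$, not at the last clear state.
\begin{itemize}
\item If the first positive threshold is already $\ge2$, the block up to that step $k+2$ is threshold-stationary, so Proposition~\ref{prop:success-stationary-phi} applies.
\item If the first positive threshold is $1$, take $k+2$ to be the first later step with a successful clear step at threshold $>1$. Then $t_b\le1$ for all $b<k+2$, so Proposition~\ref{prop:strict-decrease-non-stationary} applies.
\end{itemize}
A single deletion then gives one of two outcomes. Either the truncated block fails, which is already a shorter counterexample. Or the truncated block's output dominates the long output. In that case, continue both runs on the common tail $r_{k+2},\ldots,r_m$. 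Claim~\ref{clm:state-sufficiency} identifies the long run with the original failing execution, and Proposition~\ref{prop:mono} transfers the failure to the truncated instance, which is shorter by one ratio.

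This needs no iteration. Dominance in either sense of Definition~\ref{def:order} suffices for Proposition~\ref{prop:mono}, so you do not need the dominating state to share $\omega$'s ratio suffix; the propositions do not guarantee that anyway.

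Your other worries resolve directly. The dominating state is produced by running from $\omega_0$ on the truncated sequence, which is $\alpha$-weakly regular by \eqref{eq:phi-shift-zero-based}. After truncating at the failing step, sequence length and failing buyer value are the same minimality measure.
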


\begin{proof}[Proof of Theorem~\ref{thm:never-fail}]
Fix \(\beta=\tfrac12\). If \(\alpha=0\), the claim follows from
Lemma~\ref{lem:alpha-zero-never-fail}. Hence assume \(\alpha>0\).

Suppose, toward a contradiction, that Algorithm~\ref{alg:state-recursion}
returns \textsc{fail} for some \(\alpha\)-weakly regular ratio sequence. Choose
a finite failing global ratio vector \((r_0,r_1,\ldots,r_m)\) of minimal
length, with \(r_0=0\), so that the actual input after the base state is
\(\mathrm{ALG}_2(\omega_0;r_1,\ldots,r_m)=\textsc{fail}.\)
By minimality, the failure occurs at \(b=m+1\) after this input tail is fully
processed; otherwise a proper prefix would already be a shorter counterexample.

First suppose that, before this failure, there is no successful clear step
with positive threshold index. Then the execution on \((r_0,r_1,\ldots,r_m)\)
is threshold-stationary. Applying Proposition~\ref{prop:success-stationary-phi}
to the first positive ratio gives a shorter truncated global vector
\((\bar r_0,\bar r_1,\ldots,\bar r_{m-1})\), where \(\bar r_0=0\) and
\(\bar r_i=r_{i+1}\) for \(i\ge1\), such that
\(\mathrm{ALG}_2(\omega_0;\bar r_1,\ldots,\bar r_{m-1})=\textsc{fail},\)
contradicting minimality.

Now suppose that a first successful clear step with positive threshold index
exists. If its threshold index is larger than \(1\), denote that step by
\(b=k+2\). Up to this step, the execution on the long block
\((r_0,r_1,\ldots,r_{k+1})\) is threshold-stationary, and its output has a
proper suffix after deleting at least one positive ratio. Proposition~\ref{prop:success-stationary-phi}
therefore implies either that the truncated block already fails, which is a
shorter counterexample, or that the truncated block produces a dominating
state. In the dominance case, we continue both states with the same remaining
ratios \(r_{k+2},\ldots,r_m\). By Claim~\ref{clm:state-sufficiency}, the run
from the original state is the original failing execution; by
Proposition~\ref{prop:mono}, the run from the dominating truncated state also
fails. This again gives a shorter counterexample.

It remains to consider the case in which the first positive threshold index is
\(1\). Following Proposition~\ref{prop:strict-decrease-non-stationary}, let
\(k+2\) be the first later step at which the execution either returns
\textsc{fail} or makes a successful clear step with threshold index larger
than \(1\). Such a step exists because the full execution fails. By
construction, the execution on the corresponding long block is
non-threshold-stationary and satisfies \(t_b\le1\) for every \(b<k+2\).

If this block already returns \textsc{fail}, Proposition~\ref{prop:strict-decrease-non-stationary}
passes the failure to the single-ratio truncation, producing a shorter
counterexample. Otherwise, the block outputs a proper suffix state. The same
proposition then implies either failure of the truncated block, or dominance by
the truncated block's output. In the dominance case, continuing both states
with the same remaining ratio tail and applying Proposition~\ref{prop:mono}
again yields failure of the shorter truncated input. This contradicts the
minimality of the original failing global ratio vector.

All cases lead to a shorter failing input, which is impossible. Therefore no
finite \(\alpha\)-weakly regular ratio sequence can make
Algorithm~\ref{alg:state-recursion} return \textsc{fail}. Hence, for every
\(\alpha\)-weakly regular \(F_S\) and \(\beta=\tfrac12\),
Algorithm~\ref{alg:state-recursion} never returns \textsc{fail}.
\end{proof}

\medskip
Theorem~\ref{thm:never-fail} proves the main conclusion needed for the
lower-bound argument: when \(\beta=\tfrac12\), Algorithm~\ref{alg:state-recursion} never returns \textsc{fail} on
every finite input. Next, we not only rule out failure, but also identify the
boundary cases in which the canonical output is undominated.  This gives the
following dominance characterization.

\begin{theorem}[Dominance]
\label{thm:dominance}
Fix $\beta=\tfrac{1}{2}$ and a zero-based global ratio vector $\mathbf r=(r_0,r_1,\ldots,r_k)$ with $r_0=0$. Let
\(\omega^\star := \mathrm{ALG}_2(\omega_0; r_1,\ldots,r_k)\) if \(r_1>0\), and \(\omega^\star := \mathrm{ALG}_2(\omega_0; r_2,\ldots,r_k)\) if \(r_1=0\). Then
\[
\omega^\star\text{ is undominated}\quad\Leftrightarrow\quad
\omega^\star\in \mathcal S_{(r_1,\ldots,r_k)}\;\land\; \forall \mathbf{r}'\prec_{\mathrm{suf}}(r_1,\ldots,r_k),\; \mathrm{ALG}_2(\omega_0;\mathbf r')\in\mathcal S_{(0,\mathbf r')}
\]
where \(\mathbf r'\prec_{\mathrm{suf}}(r_1,\ldots,r_k)\) means that \(\mathbf r'\) is a proper suffix of the positive ratio tail.
\end{theorem}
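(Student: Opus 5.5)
The plan is to prove the two directions separately, using the truncation propositions in their dominance form together with Proposition~\ref{prop:mono}, in the same way Theorem~\ref{thm:never-fail} used their failure form. Throughout, I take Theorem~\ref{thm:never-fail} as given, so every execution below outputs a state and never \textsc{fail}. First I handle the case \(r_1=0\). The seller type \(1\) then has \(F_S(0)/f_S(1)=0\), which is impossible when \(f_S(0)>0\), so this case encodes the degenerate re-indexing and reduces to the case \(r_1>0\) with input \((r_2,\ldots,r_k)\). From now on I assume \(r_1>0\).

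\textbf{Necessity.} Suppose \(\omega^\star\) is undominated.
\begin{enumerate}
\item First suppose \(\omega^\star\notin\mathcal S_{(r_1,\ldots,r_k)}\). Then its ratio part is a proper suffix \((r_t,\ldots,r_k)\) with \(t\ge2\). Take the execution on the single-ratio truncation \((r_2,\ldots,r_k)\). Iterating the step from the proof of Theorem~\ref{thm:never-fail} produces an achievable state comparable to \(\omega^\star\) that dominates it. That step is Proposition~\ref{prop:success-stationary-phi} or Proposition~\ref{prop:strict-decrease-non-stationary} at the first clear step with positive threshold, followed by Proposition~\ref{prop:mono} on the common tail. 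The domination is strict because the coefficient \(\delta_{t-1}\) of the boundary variable is positive, as noted after \eqref{eq:D-zero-based}. This contradicts undominatedness.
\item Next suppose some proper suffix \(\mathbf r'\) of the positive tail has \(\mathrm{ALG}_2(\omega_0;\mathbf r')\notin\mathcal S_{(0,\mathbf r')}\). Then that output has threshold index at least \(1\). Its ratio vector is therefore a suffix of \(\mathbf r'\), and hence a proper suffix of \((r_1,\ldots,r_k)\). By clause~(i) of Definition~\ref{def:order}, it dominates \(\omega^\star\in\mathcal S_{(r_1,\ldots,r_k)}\). This is again a contradiction.
\end{enumerate}

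\textbf{Sufficiency.} Assume both conditions on the right-hand side hold, and let \(\omega'\) be any achievable state comparable with \(\omega^\star\). I must show \(\omega'\not\succ\omega^\star\).
\begin{enumerate}
\item The state \(\omega'\) is produced by some \(\alpha\)-weakly regular sequence \(\mathbf s\). By Claim~\ref{clm:state-sufficiency} and Definition~\ref{def:state}, its ratio part is a suffix of the positive tail of \(\mathbf s\).
\item For \(\omega'\) to dominate \(\omega^\star\), its ratio part must be either \((r_1,\ldots,r_k)\) itself, or a proper suffix \((r_t,\ldots,r_k)\) with \(t\ge2\).
\item Take the first case. Then \(\mathbf s\) ends with \((r_1,\ldots,r_k)\) and has threshold index \(1\) relative to it. I apply the truncation propositions repeatedly to delete the leading ratios of \(\mathbf s\) one at a time. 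Each deletion replaces the output by a state of at least equal dominance, by Proposition~\ref{prop:mono} and the dominance halves of Propositions~\ref{prop:success-stationary-phi} and \ref{prop:strict-decrease-non-stationary}. After all deletions, \(\mathbf s\) is reduced to \((0,r_1,\ldots,r_k)\), whose output is \(\omega^\star\). Because the order is monotone along this chain, \(y'\le y^\star\).
\item Take the second case. The same reduction yields \(\omega'\preceq\mathrm{ALG}_2(\omega_0;r_t,\ldots,r_k)\). By hypothesis this output lies in \(\mathcal S_{(0,r_t,\ldots,r_k)}\), so its ratio part is strictly longer than the ratio part of \(\omega'\), namely \((r_t,\ldots,r_k)\). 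Hence \(\omega'\) cannot share the ratio part \((r_t,\ldots,r_k)\) with a larger \(y\). Here I need care: the dominance order compares suffix lengths, so I argue via the chain of dominances that any state with ratio part \((r_t,\ldots,r_k)\) would have to dominate a state whose ratio part is longer. Tracing this back through the truncations yields the contradiction.
\end{enumerate}

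\textbf{Main obstacle.} I expect the hardest part to be the reduction in sufficiency. The truncation propositions are stated for deleting the first positive ratio from a base-state execution. I must verify that repeated deletion stays within \(\alpha\)-weakly regular sequences, since removing a ratio preserves monotonicity of \(\phi^\alpha\) by \eqref{eq:phi-shift-zero-based}. I must also verify that comparability is maintained at every stage. My first approach would be an induction on the length of the prefix of \(\mathbf s\) preceding \((r_1,\ldots,r_k)\), with the induction hypothesis that the current output is dominated by the corresponding output of the shorter sequence.
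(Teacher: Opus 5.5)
Your core argument matches the paper's. You strip the prefix one ratio at a time with Propositions~\ref{prop:success-stationary-phi} and~\ref{prop:strict-decrease-non-stationary}, and carry the dominance along the common tail with Proposition~\ref{prop:mono}, so $\omega^\star$ is the maximum of $\mathcal S_{(r_1,\ldots,r_k)}$. You also note that if $\mathrm{ALG}_2(\omega_0;\mathbf r')$ ends at threshold $0$, any $\omega'\in\mathcal S_{\mathbf r'}$ would give $\omega'\prec\mathrm{ALG}_2(\omega_0;\mathbf r')\prec\omega'$, so $\mathcal S_{\mathbf r'}=\emptyset$.

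The gap is in how you treat outputs at threshold $0$. In necessity step~1 you infer from $\omega^\star\notin\mathcal S_{(r_1,\ldots,r_k)}$ that its ratio part is a proper suffix $(r_t,\ldots,r_k)$ with $t\ge2$. But the run on $(0,r_1,\ldots,r_k)$ can end at threshold $0$, so that $\omega^\star\in\mathcal S_{(0,r_1,\ldots,r_k)}$ has a \emph{longer} ratio part. For $k=1$, solving $H(2;x)=0$ at threshold $0$ gives $x(0,2)=\frac{2+3\alpha}{4(1+\alpha)^2}+\frac{1}{4(1+\alpha)r_1}$, which is below $1$ once $r_1$ is large. In that case your argument yields no dominator, and none need exist:
\begin{enumerate}
\item Your own emptiness argument with $\mathbf r'=(r_1,\ldots,r_k)$ gives $\mathcal S_{(r_1,\ldots,r_k)}=\emptyset$.
\item $\omega^\star$ is the only achievable state with ratio part $(0,r_1,\ldots,r_k)$, since only $r_0$ vanishes and a threshold-$0$ run is deterministic.
\end{enumerate}
So if every proper-suffix run also ends at threshold $0$ (automatic for $k=1$), $\omega^\star$ is undominated while the right-hand side fails. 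The step cannot be repaired for the literal statement. The paper's proof sidesteps this by treating the theorem as a statement about the class $\mathcal S_{(r_1,\ldots,r_k)}$: it has an undominated element, namely $\omega^\star$, iff the right-hand side holds. In the threshold-$0$ case it concludes exactly $\mathcal S_{(r_1,\ldots,r_k)}=\emptyset$.

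The same oversight undermines your treatment of $r_1=0$. That clause is not a degenerate case to be re-indexed away. The vector $(0,r_2,\ldots,r_k)$ is precisely the ratio part of a threshold-$0$ output of the run on $(r_2,\ldots,r_k)$, so the clause is what covers the threshold-$0$ classes your step~1 misses. Reducing it to the $r_1>0$ case with input $(r_2,\ldots,r_k)$ swaps the requirement $\omega^\star\in\mathcal S_{(0,r_2,\ldots,r_k)}$ (threshold $0$) for $\omega^\star\in\mathcal S_{(r_2,\ldots,r_k)}$ (threshold $1$), which is incompatible with it.

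What is needed instead is the observation $\mathcal S_{(0,r_2,\ldots,r_k)}\subseteq\{\omega^\star\}$. So this class has an undominated element iff $\omega^\star$ ends at threshold $0$ and $\mathcal S_{\mathbf r'}=\emptyset$ for every suffix $\mathbf r'$ of $(r_2,\ldots,r_k)$. Your necessity step~2 together with the emptiness argument identifies the latter with the suffix condition. Concretely, split on whether the final threshold of $\omega^\star$ is $0$, $1$, or at least $2$, rather than equating $\omega^\star\notin\mathcal S_{(r_1,\ldots,r_k)}$ with threshold at least $2$.
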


\begin{figure}[htbp]
    \centering
    \captionsetup[subfigure]{
        justification=centering,
        singlelinecheck=false,
        skip=2pt
    }
    \captionsetup[figure]{skip=6pt}

    \begin{subfigure}[t]{0.29\textwidth}
        \centering
        \vspace{0pt}
        \begin{tikzpicture}[scale=0.42]
            \def\n{11}

            \begin{scope}
                \clip (0,0) -- (0,\n) -- (\n,\n) -- cycle;

                \fill[gray, fill opacity=0.28] (0,1) rectangle (1,\n);
                \fill[gray, fill opacity=0.28] (1,5) rectangle (2,\n);
                \fill[gray, fill opacity=0.28] (2,7) rectangle (4,\n);
                \fill[gray, fill opacity=0.28] (4,\n-1) rectangle (8,\n);

                \draw[step=1cm, draw=black!75] (0,0) grid (\n,\n);
            \end{scope}

            \draw[thick] (0,0) -- (0,\n) -- (\n,\n) -- cycle;

            \draw[dashed] (4.5,0) -- (4.5,5.5);
            \draw[dashed] (7.5,0) -- (7.5,\n-1);
            \draw[dashed] (9.5,0) -- (9.5,\n-0.5);

            \draw[dashed,red,thick] (1,2) rectangle (2,5);
            \draw[dashed,blue,thick] (0,1) rectangle (1,6);
            \draw[purple,thick] (1,5) rectangle (2,6);

            \node at (1.5,-0.6) {$\gamma_1$};
            \node at (3,-0.6) {$\dots$};
            \node at (4.5,-0.6) {$\gamma_i$};
            \node at (5.5,-0.6) {$\dots$};
            \node at (6.5,-0.6) {$\gamma_t$};

            \node at (7.5,-0.6) {$r_1$};
            \node at (8.5,-0.6) {$\dots$};
            \node at (9.5,-0.6) {$r_k$};

            \node at (7.5,\n-0.5) {$\tilde y$};
            \node at (0.5,1.5) {$\scriptstyle \frac{\beta}{1+\alpha}$};

            \node at (-0.7,\n+0.5) {$b$};
            \node at (-0.7,1.5) {$1$};
            \node at (-1,5.5) {$i+1$};
            \node at (-1.3,\n-0.5) {$t{+}k{+}1$};
        \end{tikzpicture}
        \caption{The achievable state\\
        $\tilde\omega=\text{ALG}_2(\omega_0,\gamma_1,\ldots,\gamma_t,\mathbf{r})$}
        \label{fig:canonical-tilde}
    \end{subfigure}
    \hfill
    \begin{minipage}[t]{0.04\textwidth}
        \centering
        \vspace{0pt}
        \raisebox{-2.6cm}[0pt][0pt]{%
            \tikz[baseline=-0.6ex]{
                \draw[->, line width=0.9pt] (0.5,0) -- (1.0,0);
            }
        }
    \end{minipage}
    \hfill
    \begin{subfigure}[t]{0.29\textwidth}
        \centering
        \vspace{0pt}
        \begin{minipage}[t][5.6cm][b]{\linewidth}
        \begin{tikzpicture}[scale=0.42]
            \def\n{10}

            \begin{scope}
                \clip (0,0) -- (0,\n) -- (\n,\n) -- cycle;

                \fill[gray, fill opacity=0.28] (0,1) rectangle (1,\n);
                \fill[gray, fill opacity=0.28] (1,6) rectangle (3,\n);
                \fill[gray, fill opacity=0.28] (3,\n-1) rectangle (7,\n);

                \draw[step=1cm, draw=black!75] (0,0) grid (\n,\n);
            \end{scope}

            \draw[thick] (0,0) -- (0,\n) -- (\n,\n) -- cycle;

            \draw[dashed] (6.5,0) -- (6.5,\n-1);
            \draw[dashed] (8.5,0) -- (8.5,\n-0.5);

            \node at (1.5,-0.6) {$\gamma_2'$};
            \node at (2.5,-0.6) {$\dots$};
            \node at (3.5,-0.6) {$\gamma_i'$};
            \node at (4.5,-0.6) {$\dots$};
            \node at (5.5,-0.6) {$\gamma_t$};

            \node at (6.5,-0.6) {$r_1$};
            \node at (7.5,-0.6) {$\dots$};
            \node at (8.5,-0.6) {$r_k$};

            \node at (6.5,\n-0.5) {$y'$};
            \node at (0.5,1.5) {$\scriptstyle \frac{\beta}{1+\alpha}$};

            \node at (-0.7,\n+0.5) {$b$};
            \node at (-0.7,1.5) {$1$};
            \node at (-1,\n-0.5) {$t+k$};
        \end{tikzpicture}
        \end{minipage}
        \caption{Deleting the first prefix term gives
        $\omega'=\text{ALG}_2(\omega_0,\gamma_2',\ldots,\gamma_i',\gamma_{i+1},\ldots,\gamma_t,\mathbf{r})$}
        \label{fig:canonical-middle}
    \end{subfigure}
    \hfill
    \begin{minipage}[t]{0.04\textwidth}
        \centering
        \vspace{0pt}
        \raisebox{-2.6cm}[0pt][0pt]{%
            \tikz[baseline=-0.6ex]{
                \draw[->, line width=0.9pt] (0.5,0) -- (1.0,0);
            }
        }
    \end{minipage}
    \hfill
    \begin{subfigure}[t]{0.22\textwidth}
        \centering
        \vspace{0pt}
        \begin{minipage}[t][5.6cm][b]{\linewidth}
            \centering
            \begin{tikzpicture}[scale=0.5]
                \def\n{6}
    
                \fill[gray, fill opacity=0.28] (0,1) rectangle (1,\n);
                \fill[gray, fill opacity=0.28] (1,\n-1) rectangle (2,\n);
    
                \begin{scope}
                    \clip (0,0) -- (0,\n) -- (\n,\n) -- cycle;
                    \draw[step=1cm, draw=black!75] (0,0) grid (\n,\n);
                \end{scope}
    
                \draw[thick] (0,0) -- (0,\n) -- (\n,\n) -- cycle;
    
                \draw[dashed] (1.5,0) -- (1.5,\n-1);
                \draw[dashed] (4.5,0) -- (4.5,\n-0.5);
    
                \node at (1.5,-0.55) {$r_1$};
                \node at (3,-0.55) {$\dots$};
                \node at (4.5,-0.55) {$r_k$};
    
                \node at (0.5,1.5) {$\frac{\beta}{1+\alpha}$};
                \node at (1.5,\n-0.5) {$y^\star$};

                \node at (-0.7,\n+0.5) {$b$};
                \node at (-0.7,1.5) {$1$};
                \node at (-1,\n-0.5) {$k+1$};
            \end{tikzpicture}
        \end{minipage}
        \caption{The threshold-stationary state
        $\omega^\star=\mathrm{ALG}_2(\omega_0,\mathbf{r})$}
        \label{fig:canonical-star}
    \end{subfigure}
    
    \caption{
    Schematic illustration of the proof of Theorem~\ref{thm:dominance}. Each unit square represents an allocation entry \(x(s,b)\); gray squares indicate \(x(s,b)>0\); the horizontal axis records the seller value \(s\), and the vertical axis records the buyer value \(b\). Panel~(a) shows an achievable state
    \(\tilde\omega=\text{ALG}_2(\omega_0,\gamma_1,\ldots,\gamma_t,\mathbf{r})\).
    The blue and red boxes highlight the local modification in one reduction step: the blue box marks the initial block that is removed, while the red box marks the region to be reshaped, corresponding to the update from \(\gamma_2,\ldots,\gamma_i\) to \(\gamma_2',\ldots,\gamma_i'\) up to the first turning point, so that the resulting state satisfies \(\omega' \prec \tilde\omega\). Panel~(b) shows the resulting state
    \(\omega'=\text{ALG}_2(\omega_0,\gamma_2',\ldots,\gamma_i',\gamma_{i+1},\ldots,\gamma_t,\mathbf{r})\).
    Repeating this reduction removes the entire prefix and yields the threshold-stationary state
    \(\omega^\star=\mathrm{ALG}_2(\omega_0,\mathbf{r})\) in panel~(c).
    }
    \label{fig:canonical-schematic}
\end{figure}

\begin{proof}[Proof of Theorem~\ref{thm:dominance}]
Let \(\mathbf{r}=(r_1,\ldots,r_k)\) denote the positive ratio tail. We first consider if \(\omega^\star=\mathrm{ALG}_2(\omega_0;\mathbf{r})=(y^\star;\mathbf{r})\in \mathcal S_{\mathbf r}\).
Let
\(\widetilde\omega=\text{ALG}_2(\omega_0;\gamma_1,\ldots,\gamma_t,\mathbf{r})=(\widetilde y\,; \mathbf{r})\)
be any achievable state in \(\mathcal S_{\mathbf r}\) with \(t\ge 1\) shown in
Figure~\ref{fig:canonical-schematic}(a), where $\gamma_1,\ldots,\gamma_t$ is the prefix ratio sequence.

Let \(i\) denote the first turning-point index in the prefix part, which means that the threshold of the allocation $x$ moves forward from $0$ when $b=i+1$.
We apply the truncation result from Section~\ref{sec:t1} to the initial block
\((\gamma_1,\ldots,\gamma_i)\), while keeping the remaining ratio sequence
\((\gamma_{i+1},\ldots,\gamma_t,\mathbf{r})\)
fixed. This argument step deletes \(\gamma_1\) and modifies
\(\gamma_2,\ldots,\gamma_i\) into
\(\gamma_2',\ldots,\gamma_i'\), obtaining a new achievable state
\(\omega'=\text{ALG}_2(\omega_0;\gamma_2',\ldots,\gamma_i',\gamma_{i+1},\ldots,\gamma_t,\mathbf{r})=(y';\mathbf{r})\)
such that
\(\widetilde\omega \prec \omega'\), corresponding to the transition from
Figure~\ref{fig:canonical-schematic}(a) to
Figure~\ref{fig:canonical-schematic}(b).
Iterating this operation finitely many times removes the whole prefix and
reduces to the threshold-stationary case state \(\omega^\star\) shown in Figure~\ref{fig:canonical-schematic}(c).

Next, we consider if \(\omega^\star\notin \mathcal S_{\mathbf r}\). If \(\omega^\star\in \mathcal S_{(0,\mathbf r)}\), then \(\mathcal S_{\mathbf r}=\emptyset\). Otherwise, \(\omega^\star\in\mathcal S_{\mathbf r'}\) where \(\mathbf{r}'\) is the proper suffix of \(\mathbf{r}\). Since any states in \(\mathcal S_{\mathbf r'}\) dominate states in \(\mathcal S_{\mathbf r}\), 
this proves the claim.
\end{proof}

\medskip
Combining Theorem~\ref{thm:never-fail} with the ironing and discretization reductions established earlier, we obtain the main theorem for arbitrary independent buyer and seller distributions.

\begin{theorem}[Half-approximation guarantee]
\label{cor:main}
For every pair of independent buyer and seller distributions,
\(\mathrm{SB} \;\ge\; \frac{1}{2}\,\mathrm{FB}.\)
\end{theorem}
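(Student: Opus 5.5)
The plan is to assemble the reductions of Sections~\ref{sec:dis}--\ref{sec:new-problem} in reverse order, with Theorem~\ref{thm:never-fail} as the engine.

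\textbf{Step 1 (reduction to integer grids).} By Theorem~\ref{thm:main_reduction}, the infimum of $\mathrm{SB}/\mathrm{FB}$ over Borel measures on $[0,1]$ equals the infimum over finitely supported distributions on $\mathbb Z_W$. Scale invariance handles arbitrary bounded supports. Distributions with unbounded support can be truncated at a level whose contribution to $\mathrm{FB}$ vanishes; this reduces to the bounded case as long as $\mathrm{FB}<\infty$ (if $\mathrm{FB}=0$ there is nothing to prove). It therefore suffices to show $\widehat{\mathrm{SB}}\ge\tfrac12\widehat{\mathrm{FB}}$ for every pair $F_S,F_B\in\Delta(\mathbb Z_W)$.

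\textbf{Step 2 (Lagrangian and ironing).} For a fixed finite instance, Lemma~\ref{lem:Lagrangian} gives $\mathrm{SB}=\inf_{\alpha\ge0}\sup_{x\in\mathcal R_1(W)}L(x,\alpha)$. So we need, for every $\alpha\ge0$, an $x$ with $\mathrm{GFT}(x)+\alpha\Lambda_D(x)-\tfrac12\mathrm{FB}\ge0$. Three reductions apply here:
\begin{itemize}
\item Lemma~\ref{lem:wlog-fS0} lets us assume $f_S(0)>0$.
\item Lemma~\ref{lem:position-wise-monotone} lets us restrict to $\mathcal R_2$.
\item Lemma~\ref{lem:ironing-ratio} lets us replace $F_S$ by its $\alpha$-ironed version $F_S^{\mathrm{ir}}$. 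This step needs care because the ironing depends on $\alpha$. I would argue per fixed $\alpha$: ironing preserves $\sup_x L(x,\alpha)$ and weakly raises $\mathrm{FB}$. Hence the Lagrangian inequality for $F_S^{\mathrm{ir}}$ implies it for $F_S$, with a $\tfrac12\mathrm{FB}$ target that is only larger.
\end{itemize}
Taking the infimum over $\alpha$ then reduces everything to \eqref{eq:sion} with $\beta=\tfrac12$ over $\alpha$-weakly regular $F_S$.

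\textbf{Step 3 (minimax and greedy).} For each fixed $W$, Sion's theorem (Theorem~\ref{thm:sion}) swaps $\inf_{F_B}$ and $\sup_{x}$. The hypotheses hold because $\mathcal R_2(W)$ and $\Delta(\mathbb Z_W)$ are compact and convex and the objective is bilinear in $(x,f_B)$. By Theorem~\ref{thm:minimax-recursion-equivalence}, the resulting max-min is nonnegative iff Algorithm~\ref{alg:canonical-recursion} never fails. By Claim~\ref{clm:state-sufficiency}, Algorithm~\ref{alg:canonical-recursion} fails iff Algorithm~\ref{alg:state-recursion}, started from $\omega_0$ with the ratio sequence of $F_S$, fails. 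Theorem~\ref{thm:never-fail} rules this out for $\beta=\tfrac12$ and every $\alpha$-weakly regular $F_S$. For finite $W$, restricting the produced allocation to $\mathbb Z_W$ keeps it in $\mathcal R_2(W)$, as the paper already notes.

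\textbf{Main obstacle.} The delicate point is not the chain itself but the interaction of the quantifiers. The ironed distribution, and hence the certifying $x$, depends on $\alpha$. One must check that for each $\alpha$ the certificate for $F_S^{\mathrm{ir}}$ transfers back to a bound on $\sup_x L(x,\alpha)$ for the original $F_S$, and then take $\inf_\alpha$. I would handle this by proving, per fixed $\alpha$, that
\[
\sup_{x} L_{F_S}(x,\alpha)=\sup_x L_{F_S^{\mathrm{ir}}}(x,\alpha)\ge \tfrac12\mathrm{FB}_{F_S^{\mathrm{ir}},F_B}\ge\tfrac12\mathrm{FB}_{F_S,F_B}.
\]
Here the equality and the last inequality come from Lemma~\ref{lem:ironing-ratio}, whose proof should give Lagrangian-level invariance rather than only invariance of $\mathrm{SB}$; I would verify that. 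The middle inequality uses Steps 2 and 3.
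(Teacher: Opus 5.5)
Your proposal is correct and follows the paper's own proof, which is exactly this assembly: Theorem~\ref{thm:main_reduction}, Lemma~\ref{lem:Lagrangian}, the normalization/monotonicity/ironing reductions, the minimax swap together with Theorem~\ref{thm:minimax-recursion-equivalence}, and finally Theorem~\ref{thm:never-fail}. The paper resolves the quantifier issue you raise the way you suggest, by putting $\inf_{\alpha}$ outermost (Section~\ref{sec:new-problem}) and ironing separately for each fixed $\alpha$, where only $\sup_x L_{F_S}(x,\alpha)\ge\sup_x L_{F_S^{\mathrm{ir}}}(x,\alpha)$ is needed; your extra remark on unbounded supports lies outside the paper's $[0,1]$ setting and, as sketched, shows only that truncation barely changes $\mathrm{FB}$, not that $\mathrm{SB}$ of the original instance is at least that of the truncated one.
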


\section{Upper Bound: Tightness via an Example}
\label{sec:ub}

Section~5 showed that the bottleneck in the lower-bound analysis occurs only in the limiting regime \(\alpha\to\infty\) when $\phi_S$ is constant. For ease of verification, this section presents an explicit continuous analogue
of the same regime and includes several additional discussions showing that the
approximation ratio indeed converges to \(\tfrac12\). This establishes the tightness of the half-approximation bound.

Let \(\alpha>0\) and \(c>0\) be parameters, and define
\begin{align*}
    F_S(s) =
    \left(
    \frac{\alpha c}{\alpha c+(1+\alpha)(1-s)}
    \right)^{\frac{\alpha}{1+\alpha}},
    \quad
    F_B(b) =
    1-
    \left(
    \frac{\alpha c}{\alpha c+(1+\alpha)b}
    \right)^{\frac{\alpha}{1+\alpha}},
\end{align*}
for $s\in(0,1]$ and $b\in[0,1)$. This is a symmetric family of distributions on \([0,1]\), with point masses at \(s=0\) and \(b=1\), both equal to
\[
P=
\left(
\frac{\alpha c}{\alpha c+1+\alpha}
\right)^{\frac{\alpha}{1+\alpha}}.
\]

The family above is a particular instance of a general class of
mixed distributions we use throughout this section: those whose
seller distribution has a single atom at $s=0$ and whose buyer
distribution has a single atom at $b=1$, with absolutely
continuous mass elsewhere. Formally, we can represent
\[
\mathrm{d}F_S(s)=p_S\,\delta_0(\mathrm{d}s)+f_S(s)\,\mathrm{d}s,
\qquad
\mathrm{d}F_B(b)=f_B(b)\,\mathrm{d}b+p_B\,\delta_1(\mathrm{d}b),
\]
where $\delta_{x_0}$ denotes the Dirac measure concentrated at $x_0$. For interior points define
\[
\phi_S(s):=s+\frac{F_S(s)}{f_S(s)},
\qquad
\phi_B(b):=b-\frac{1-F_B(b)}{f_B(b)},
\qquad s,b\in(0,1),
\]
with no $\alpha$ and use the endpoint conventions
\[
\widehat\phi_S(0):=0,\qquad
\widehat\phi_S(s):=\phi_S(s)\ (s\in(0,1)),
\]
\[
\widehat\phi_B(b):=\phi_B(b)\ (b\in(0,1)),
\qquad
\widehat\phi_B(1):=1.
\]

\begin{proposition}[Endpoint-mass absolutely continuous case]
\label{prop:endpoint_mass_phi}
Under the endpoint-mass notation above, the mixed-distribution problem in
Proposition~\ref{prop:general_mixed} is equivalent to
\begin{align*}
\sup_x\quad
& \mathrm{GFT}(x)=
\int_{[0,1]^2}(b-s)x(s,b)\,\mathrm{d}F_S(s)\,\mathrm{d}F_B(b)\\
\mathrm{s.t.}\quad
& \Lambda_{\text{gen}}(x)=
\int_{[0,1]^2}
\bigl(\widehat\phi_B(b)-\widehat\phi_S(s)\bigr)
x(s,b)\,\mathrm{d}F_S(s)\,\mathrm{d}F_B(b)\ge 0,\\
& x_B(b_1)\le x_B(b_2), \qquad \forall\, b_1<b_2,\\
& x_S(s_1)\ge x_S(s_2), \qquad \forall\, s_1<s_2,
\end{align*}
Equivalently, decomposing the product measure into the endpoint atom, the two
one-dimensional boundaries, and the interior gives
\begin{align*}
\Lambda_{\text{gen}}(x)
&=
p_Sp_B\,x(0,1)
+p_S\int_0^1 \phi_B(b)x(0,b)f_B(b)\,\mathrm{d}b
+p_B\int_0^1 \bigl(1-\phi_S(s)\bigr)x(s,1)f_S(s)\,\mathrm{d}s \\
&\qquad
+\int_0^1\int_0^1
\bigl(\phi_B(b)-\phi_S(s)\bigr)x(s,b)f_S(s)f_B(b)\,\mathrm{d}b\,\mathrm{d}s.
\end{align*}
\end{proposition}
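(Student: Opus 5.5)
The plan is to start from the general constraint $\Lambda_{\mathrm{gen}}(x)$ of Proposition~\ref{prop:general_mixed} and rewrite each of its two one-dimensional integrals as an integral of $x$ against the product measure with weights $\widehat\phi_B$ and $\widehat\phi_S$. The objective and the two monotonicity constraints are identical in both programs, so the whole equivalence reduces to an identity of the form $\Lambda_{\mathrm{gen}}(x)=\int(\widehat\phi_B-\widehat\phi_S)\,x\,\mathrm{d}F_S\,\mathrm{d}F_B$ for every measurable $x:[0,1]^2\to[0,1]$.

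\emph{Buyer term.} I will use Fubini to write
\[
\int_{[0,1]}\int_0^b x_B(t)\,\mathrm{d}t\,\mathrm{d}F_B(b)=\int_0^1 x_B(t)\,(1-F_B(t^-))\,\mathrm{d}t .
\]
Since the only atom of $F_B$ is at $1$, we have $1-F_B(t^-)=1-F_B(t)$ for $t\in(0,1)$, and the point $t=1$ has Lebesgue measure zero. Hence
\[
\int_{[0,1]}\Bigl(b\,x_B(b)-\int_0^b x_B\Bigr)\mathrm{d}F_B
=p_B\,x_B(1)+\int_0^1\Bigl(b-\frac{1-F_B(b)}{f_B(b)}\Bigr)x_B(b)f_B(b)\,\mathrm{d}b .
\]
The first term carries weight $1=\widehat\phi_B(1)$ and the second carries weight $\phi_B(b)$ against $\mathrm{d}F_B$. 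This uses $f_B>0$ wherever the ratio appears. If $f_B$ vanishes somewhere, I will instead keep the term as $(1-F_B(b))\,\mathrm{d}b$, which is harmless because it is only notation.

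\emph{Seller term.} Symmetrically,
\[
\int_{[0,1]}\int_s^1 x_S(t)\,\mathrm{d}t\,\mathrm{d}F_S(s)=\int_0^1 x_S(t)F_S(t)\,\mathrm{d}t ,
\]
because $F_S(\{s\le t\})=F_S(t)$. The atom at $0$ contributes $0\cdot x_S(0)$ to $\int s\,x_S\,\mathrm{d}F_S$, which matches $\widehat\phi_S(0)=0$. The remaining interior part gives $\int_0^1\bigl(s+F_S(s)/f_S(s)\bigr)x_S(s)f_S(s)\,\mathrm{d}s$.

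Next I substitute $x_B(b)=\int x(s,b)\,\mathrm{d}F_S(s)$ and $x_S(s)=\int x(s,b)\,\mathrm{d}F_B(b)$ and apply Fubini once more (everything is bounded). This gives the single double integral with integrand $\widehat\phi_B(b)-\widehat\phi_S(s)$. For the second displayed formula, I will split $\mathrm{d}F_S\otimes\mathrm{d}F_B$ into four parts: the atom $\{0\}\times\{1\}$, the two edges $\{0\}\times(0,1)$ and $(0,1)\times\{1\}$, and the interior. Each part is then evaluated directly. On the atom the weight is $\widehat\phi_B(1)-\widehat\phi_S(0)=1$, giving $p_Sp_B\,x(0,1)$. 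On the edge $s=0$ it is $\phi_B(b)$, and on the edge $b=1$ it is $1-\phi_S(s)$. The remaining points, $b=0$ and $s=1$, carry no atoms and so have zero measure.

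The main point needing care is the boundary bookkeeping: getting $F_B(t^-)$ versus $F_B(t)$ right at the atom $b=1$, and confirming that $F_S$ has no atom at $1$ and $F_B$ has no atom at $0$. Once that is settled, the rest is Fubini.
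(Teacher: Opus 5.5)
Your proposal is correct and follows essentially the same route as the paper: you use Fubini to rewrite the buyer and seller terms of $\Lambda_{\mathrm{gen}}$ in virtual-value form, with the atoms producing the endpoint weights $\widehat\phi_B(1)=1$ and $\widehat\phi_S(0)=0$, and then substitute $x_B,x_S$ and split the product measure into atom, edges, and interior. The only cosmetic difference is that you apply Fubini against the full measure $F_B$ (obtaining $1-F_B(t^-)$), whereas the paper first separates the atom $p_B$ and applies Fubini only to the density part, which yields $1-F_B(t)-p_B$ together with a cancelling $-p_B\int_0^1 x_B(t)\,\mathrm{d}t$ term.
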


We compute \(\mathrm{FB}\) and \(\mathrm{SB}\) in closed form. The most critical technical step is showing that the SB mechanism is boundary-restricted for this family. By comparing the marginal gain in \(\mathrm{GFT}\) to the marginal loss in the feasibility constraint $\Lambda$, we demonstrate that boundary trades, where $s=0$ or $b=1$, strictly dominate any interior trades $(s,b)$ in terms of their contribution to the objective per unit of feasibility slack consumed. Finally, we compute the ratio $\mathrm{SB}/\mathrm{FB}$ and show that as $c \to 0^+$ and \(\alpha\to\infty\), the ratio \(\mathrm{SB}/\mathrm{FB}\) tends to \(\tfrac12\). Therefore, no mechanism can guarantee a ratio strictly larger than \(\tfrac12\).

\begin{theorem}\label{thm:ub}
No mechanism that is BIC, IR, and WBB can guarantee a worst-case approximation ratio to the first-best GFT exceeding $1/2$.
\end{theorem}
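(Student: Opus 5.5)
The plan is to use the explicit family $(F_S,F_B)$ defined above and show that $\mathrm{SB}/\mathrm{FB}\to\tfrac12$ as $c\to0^+$ and then $\alpha\to\infty$. By Proposition~\ref{prop:general_mixed}, SBB and WBB give the same feasible set $\Lambda_{\mathrm{gen}}(x)\ge0$ plus monotonicity. So an upper bound on the optimum of the program in Proposition~\ref{prop:endpoint_mass_phi} bounds every BIC, IR, WBB mechanism.

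\textbf{Step 1 (virtual values).} For this family I first compute the interior virtual values. Writing $u=\alpha c+(1+\alpha)(1-s)$, differentiation gives $F_S/f_S=u/\alpha$. Hence
\[
\phi_S(s)=s+\frac{\alpha c+(1+\alpha)(1-s)}{\alpha}=\frac{1+\alpha}{\alpha}+c-\frac{s}{\alpha}.
\]
Symmetrically, $\phi_B(b)=b-\frac{\alpha c+(1+\alpha)b}{\alpha}=-c-\frac{b}{\alpha}$. So $\phi_B(b)-\phi_S(s)=-\frac{1+\alpha}{\alpha}-2c+\frac{s-b}{\alpha}$, which is strictly negative on the interior. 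Along the boundaries the constraint coefficients are $\phi_B(b)<0$ for $s=0$, $1-\phi_S(s)<0$ for $b=1$, and $+1$ at the atom $(0,1)$.

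\textbf{Step 2 (relaxed program).} I then drop monotonicity; this only enlarges the feasible set and so still gives an upper bound. What remains is a fractional-knapsack LP: maximize $\int (b-s)x$ subject to $\int(\widehat\phi_B-\widehat\phi_S)x\ge0$. The only "budget" comes from the atom, of size $p_Sp_B=P^2$, with $x(0,1)=1$. The optimum fills cells in decreasing order of the ratio $(b-s)/(\widehat\phi_S-\widehat\phi_B)$ until the budget is exhausted. I will verify the boundary-restriction claim here: for interior cells the ratio is at most $(b-s)\big/\bigl(\frac{1+\alpha}{\alpha}+\frac{b-s}{\alpha}\bigr)$, while a boundary cell with the same $b-s$ uses strictly less slack (roughly half). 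Thus an optimal relaxed solution trades only on $\{s=0\}\cup\{b=1\}$ plus the atom. By symmetry, it uses thresholds $b\ge\theta$ on the seller-zero edge and $s\le1-\theta$ on the buyer-one edge. This boundary solution is itself monotone, so the relaxation is exact, although only the upper bound is needed.

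\textbf{Step 3 (closed forms and limits).} I next write $\mathrm{SB}(\alpha,c)$ as $P^2$ plus twice $P\int_\theta^1 b\,f_B(b)\,db$, where $\theta$ is pinned by the budget equation $P^2=2P\int_\theta^1(c+b/\alpha)f_B\,db$. I will also compute $\mathrm{FB}=P^2+2P\int_0^1 b f_B\,db+\iint_{s<b}(b-s)f_Sf_B$. As $c\to0^+$, $P\sim (\alpha c/(1+\alpha))^{a}$ with $a=\alpha/(1+\alpha)$, and $f_B(b)\approx a\,(\alpha c)^{a}(1+\alpha)^{-a}b^{-a-1}$ away from $0$. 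Every term therefore scales like $c^{2a}$, and the ratio reduces to a ratio of Beta-type integrals in $\alpha$ alone. The interior term of FB contributes $\iint(b-s)\,b^{-a-1}(1-s)^{-a-1}$-type mass, and it is this term that SB cannot capture.

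The main obstacle is this last limiting computation. I must extract the exact $c\to0$ asymptotics, including the threshold $\theta(\alpha)$, and then show the resulting expression tends to $\tfrac12$ as $a\to1$. I would evaluate the interior integral via Beta functions, since the Gamma factor $\Gamma(1/(1+\alpha))^2/\Gamma(2/(1+\alpha))$ from Lemma~\ref{lem:zero-ratio-case} should reappear. I would then check that FB's interior plus boundary parts are asymptotically twice SB, consistent with the supremum $2$ in that lemma. A diagonal sequence $(\alpha_n,c_n)$ then yields distributions with $\mathrm{SB}/\mathrm{FB}\to\tfrac12$, which proves the theorem.
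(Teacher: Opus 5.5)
Your overall route is the paper's. You use the same family and the one-constraint LP from Proposition~\ref{prop:endpoint_mass_phi}, whose only source of slack is the atom $P^2$. Trade is on the two edges, cut off where the budget binds, followed by a Beta-function limit as $c\to0^+$ and then $\alpha\to\infty$. Your virtual values $\phi_B(b)=-c-b/\alpha$ and $\phi_S(s)=1+c+(1-s)/\alpha$ are the correct ones; the paper's table writes $c$ where $\alpha c$ belongs, which does not affect the limit.

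\textbf{The boundary-restriction step does not go through as written.} That an interior cell with gap $u=b-s$ has a worse GFT-to-slack ratio than the edge cell $(0,u)$ does not stop the knapsack from reaching interior cells. The knapsack fills cells in decreasing order of ratio. Edge cells with small $b$ have ratio $\alpha b/(\alpha c+b)$, which ranks below the best interior ratio $\alpha/(2+\alpha+2\alpha c)$, approached as $u\to1$. What you must show is that the budget runs out before the edge ratio falls to that level, i.e.\ that the binding cutoff satisfies $\theta>\alpha c/(1+\alpha+2\alpha c)$. This is exactly the paper's comparison of $y_1$ with the root of $g$. It requires locating $\theta$ first, which is easy via $-\phi_B f_B=\frac{\mathrm d}{\mathrm db}\bigl[b(1-F_B(b))\bigr]$. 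Your budget equation then becomes $2\theta\,(1-F_B(\theta))=P$. It has a unique root because $b(1-F_B(b))$ is increasing, and the root tends to $2^{-(1+\alpha)}$ as $c\to0^+$, far above $O(c)$. Incidentally, the interior slack $1+2c+(1+u)/\alpha$ is not roughly twice the edge slack $c+u/\alpha$; the additive $1$ dominates.

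\textbf{Your account of where the missing half goes is wrong}, and organizing the computation around it makes the limit harder. Work in units of $U:=(\alpha c)^{2a}(1+\alpha)^{-2a}$ and let $c\to0^+$. The contributions to FB are:
\begin{itemize}
\item the atom, $U$;
\item the two edges, $2\alpha U$;
\item the interior, $\bigl(B(1-a,1-a)-1-2\alpha\bigr)U$.
\end{itemize}
The interior term tends to $U$ as $\alpha\to\infty$; at $a=1$ the interior integral is $\iint_{0<s<b<1}(b-s)\,b^{-2}(1-s)^{-2}\,\mathrm ds\,\mathrm db=1$. Meanwhile $\mathrm{FB}\approx 2(1+\alpha)U$, so the interior is a vanishing fraction of FB. The half is actually lost on the edges. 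Since $\int_0^\theta b^{-a}\,\mathrm db\big/\int_0^1 b^{-a}\,\mathrm db=\theta^{1-a}\to\tfrac12$, truncating at $\theta$ discards half the edge surplus, and $\mathrm{SB}\approx(1+\alpha)U$. Rather than splitting FB into pieces, compute it directly as $\int_0^1F_S(t)(1-F_B(t))\,\mathrm dt\approx U\,B(1-a,1-a)$. Use the budget equation to collapse SB to $2P\int_\theta^1(1-F_B(b))\,\mathrm db$. This is how the paper obtains the ratio $(1+\alpha)/B(1-a,1-a)\to\tfrac12$.
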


\paragraph{Why we do not impose constant virtual values on the original scale.}
A natural alternative would be to require
\[
b-\frac{1-F_B(b)}{f_B(b)}\equiv -c,
\qquad
s+\frac{F_S(s)}{f_S(s)}\equiv 1+c
\]
directly on \([0,1]\). This yields
\[
F_B(b)=\frac{b}{c+b},
\qquad
F_S(s)=\frac{c}{c+1-s},
\]
with atoms \(c/(1+c)\) at \(b=1\) and \(s=0\). For this family,
\[
\mathrm{FB}(c)
=\int_0^1 \frac{c^2}{(c+b)(c+1-b)}\,\mathrm{d}b
=\frac{2c^2}{1+2c}\log\!\left(1+\frac1c\right).
\]
Moreover, one can verify that the optimal second-best allocation is again supported only on the two boundary lines; writing \(u=b\) on \(\{s=0\}\) and \(u=1-s\) on \(\{b=1\}\), the WBB constraint truncates both lines at the common threshold \(u\ge t\), where \(t=\frac{c}{1+2c}\), and therefore
\[
\mathrm{SB}(c)
=\left(\frac{c}{1+c}\right)^2
+2\frac{c}{1+c}\int_t^1 u\,\frac{c}{(c+u)^2}\,\mathrm{d}u
=\frac{2c^2}{1+c}\log\!\left(1+\frac1{2c}\right).
\]
Hence both \(\mathrm{FB}\) and \(\mathrm{SB}\) are governed by the same logarithmic term:
\[
\mathrm{FB}(c)=2c^2\log(1/c)+O(c^2),
\qquad
\mathrm{SB}(c)=2c^2\log(1/c)+O(c^2).
\]
Equivalently, the WBB constraint only changes the lower cutoff of the same logarithmic integral, from \(u=0\) to \(u=t=\Theta(c)\), and therefore does not alter its leading coefficient. In particular,
\[
\frac{\mathrm{SB}(c)}{\mathrm{FB}(c)}\to 1
\qquad\text{as } c\to0^+.
\]
Thus the direct equal-virtual-value construction does not realize the bottleneck identified in Section~5. The family above is introduced precisely because it approaches the critical regime only through the limit
\[
\frac{\alpha}{1+\alpha}\to 1,
\qquad\text{equivalently,}\qquad
\alpha\to\infty.
\]

\paragraph{Order of limits.}
The distinction from the exactly constant-virtual-value construction is therefore
not the limiting shape of the distributions alone, but the order in which that
shape is approached. In the family used in this section, one first fixes
\(\alpha<\infty\) and lets \(c\to0^+\), and only then sends
\(\alpha\to\infty\). This is the regime that exposes the bottleneck:
\[
\lim_{\alpha\to\infty}\lim_{c\to0^+}
\frac{\mathrm{SB}_{\alpha,c}}{\mathrm{FB}_{\alpha,c}}
=\frac12.
\]
By contrast, imposing constant virtual values directly on the original scale is
equivalent to taking the limit \(\alpha\to\infty\) first, since for fixed \(c\),
\[
\left(
\frac{\alpha c}{\alpha c+(1+\alpha)u}
\right)^{\frac{\alpha}{1+\alpha}}
\longrightarrow
\frac{c}{c+u},
\qquad u\in[0,1].
\]
After this limit has been taken, the subsequent small-\(c\) limit gives the
constant-virtual-value family discussed above, for which
\[
\lim_{c\to0^+}\lim_{\alpha\to\infty}
\frac{\mathrm{SB}_{\alpha,c}}{\mathrm{FB}_{\alpha,c}}
=1.
\]
Thus the two constructions correspond to non-commuting iterated limits:
\[
\lim_{\alpha\to\infty}\lim_{c\to0^+}
\frac{\mathrm{SB}_{\alpha,c}}{\mathrm{FB}_{\alpha,c}}
=\frac12
\neq
1
=
\lim_{c\to0^+}\lim_{\alpha\to\infty}
\frac{\mathrm{SB}_{\alpha,c}}{\mathrm{FB}_{\alpha,c}}.
\]
The half-approximation bottleneck is visible only when the small-\(c\) limit is
taken before the exact constant-virtual-value limit.
\newpage{}
\appendix
\section{Omitted Proofs in Section~\ref{sec:dis}}
\label{app:general_mixed}
\begin{proof}[Proof of Proposition~\ref{prop:general_mixed}]
Let \(p_B(s,b)\) denote the payment made by the buyer and let
\(p_S(s,b)\) denote the payment received by the seller. Define the interim
expected payments by
\[
P_B(b)=\int_{[0,1]}p_B(s,b)\,\mathrm dF_S(s),
\qquad
P_S(s)=\int_{[0,1]}p_S(s,b)\,\mathrm dF_B(b).
\]
The interim utilities are
\[
u_B(b)=b\,x_B(b)-P_B(b),
\qquad
u_S(s)=P_S(s)-s\,x_S(s).
\]

\medskip
\noindent
\((1)\) Necessity. Suppose that \(x\) is implementable by a BIC, IR, and
WBB mechanism. By the standard one-dimensional envelope
characterization, BIC implies that \(x_B\) is weakly increasing and \(x_S\)
is weakly decreasing. Moreover,
\[
u_B(b)=u_B(0)+\int_0^b x_B(t)\,\mathrm dt,
\qquad
u_S(s)=u_S(1)+\int_s^1 x_S(t)\,\mathrm dt .
\]
Equivalently,
\[
P_B(b)
=
b\,x_B(b)-\int_0^b x_B(t)\,\mathrm dt-u_B(0),
\]
and
\[
P_S(s)
=
s\,x_S(s)+\int_s^1 x_S(t)\,\mathrm dt+u_S(1).
\]
Therefore,
\begin{align*}
\Lambda_{\mathrm{gen}}(x)
&=
\int_{[0,1]}
\left(
b\,x_B(b)-\int_0^b x_B(t)\,\mathrm dt
\right)\mathrm dF_B(b) -
\int_{[0,1]}
\left(
s\,x_S(s)+\int_s^1 x_S(t)\,\mathrm dt
\right)\mathrm dF_S(s) \\
&=
\int_{[0,1]}P_B(b)\,\mathrm dF_B(b)
-
\int_{[0,1]}P_S(s)\,\mathrm dF_S(s)
+
u_B(0)+u_S(1).
\end{align*}
Weak budget balance gives
\[
\int_{[0,1]}P_B(b)\,\mathrm dF_B(b)
-
\int_{[0,1]}P_S(s)\,\mathrm dF_S(s)
\ge 0,
\]
and IR gives \(u_B(0)\ge 0\) and \(u_S(1)\ge 0\). Hence
\[
\Lambda_{\mathrm{gen}}(x)\ge 0.
\]

\medskip
\noindent
\((2)\) Sufficiency. Conversely, suppose that \(x_B\) is weakly increasing,
\(x_S\) is weakly decreasing, and
\(\Lambda_{\mathrm{gen}}(x)\ge 0\).
Define the report-dependent payments by
\begin{align*}
p_B(s,b)=p_S(s,b)
:={}
b\,x_B(b)-\int_0^b x_B(t)\,\mathrm dt -
\left(
\int_0^s x_S(t)\,\mathrm dt-s\,x_S(s)
\right) +
\int_{[0,1]}
\left(
\int_0^r x_S(t)\,\mathrm dt-r\,x_S(r)
\right)\mathrm dF_S(r).
\end{align*}
Then \(p_B(s,b)=p_S(s,b)\) for every \((s,b)\), so the mechanism is strongly
 budget-balanced.

The buyer's interim payment is
\[
P_B(b)
=
\int_{[0,1]}p_B(s,b)\,\mathrm dF_S(s)
=
b\,x_B(b)-\int_0^b x_B(t)\,\mathrm dt.
\]
The seller's interim payment is
\begin{align*}
P_S(s)
=
\int_{[0,1]}p_S(s,b)\,\mathrm dF_B(b) =
s\,x_S(s)+\int_s^1 x_S(t)\,\mathrm dt
+
\Lambda_{\mathrm{gen}}(x).
\end{align*}
Hence the induced interim utilities are
\[
u_B(b)=\int_0^b x_B(t)\,\mathrm dt,
\qquad
u_S(s)=\int_s^1 x_S(t)\,\mathrm dt+\Lambda_{\mathrm{gen}}(x).
\]
Since \(x_B,x_S\ge 0\) and \(\Lambda_{\mathrm{gen}}(x)\ge 0\), interim IR holds.

Finally, BIC follows from the envelope characterization together with the
monotonicity of \(x_B\) and \(x_S\): \(x_B\) is weakly increasing for the
buyer, and \(x_S\) is weakly decreasing for the seller. Thus the constructed
mechanism is BIC, IR, and SBB.

Combining \((1)\) and \((2)\), the feasible allocation rules are exactly those
satisfying the two monotonicity constraints and
\(\Lambda_{\mathrm{gen}}(x)\ge 0\). Therefore the original second-best problem
is equivalent to the stated formulation.
\end{proof}

\label{app:dis}
\begin{proof}[Proof of Proposition~\ref{prop:discrete}]
Since \(F_S\) and \(F_B\) are discrete, the objective in
Proposition~\ref{prop:general_mixed} becomes
\[
\sum_{i,j=0}^N x(s_i,b_j)f_S(s_i)f_B(b_j)(b_j-s_i),
\]
and the interim allocation rules are
\[
x_B(b_j)=\sum_{i=0}^N x(s_i,b_j)f_S(s_i),
\qquad
x_S(s_i)=\sum_{j=0}^N x(s_i,b_j)f_B(b_j).
\]
The monotonicity constraints reduce to the adjacent inequalities because the
supports are ordered and finite.

It remains to rewrite \(\Lambda_{\mathrm{gen}}\). Since the distributions are
purely discrete, the values of \(x_B\) and \(x_S\) away from the support points
are not determined by the mechanism. Let \(g\) and \(h\) be arbitrary monotone
extensions satisfying
\[
g(b_j)=x_B(b_j),\qquad h(s_i)=x_S(s_i),
\]
where \(g\) is weakly increasing and \(h\) is weakly decreasing. Then
\[
\Lambda_{g,h}(x)
=
\sum_{j=0}^N f_B(b_j)
\left(
b_jx_B(b_j)-\int_{b_0}^{b_j}g(t)\,\mathrm{d}t
\right)
-
\sum_{i=0}^N f_S(s_i)
\left(
s_ix_S(s_i)+\int_{s_i}^{s_N}h(t)\,\mathrm{d}t
\right).
\]
For \(t\in[b_\ell,b_{\ell+1}]\), monotonicity gives
\(g(t)\ge x_B(b_\ell)\). Hence
\[
\int_{b_0}^{b_j}g(t)\,\mathrm{d}t
\ge
\sum_{\ell<j}(b_{\ell+1}-b_\ell)x_B(b_\ell).
\]
Similarly, for \(t\in[s_{\ell-1},s_\ell]\),
\(h(t)\ge x_S(s_\ell)\), and therefore
\[
\int_{s_i}^{s_N}h(t)\,\mathrm{d}t
\ge
\sum_{\ell>i}(s_\ell-s_{\ell-1})x_S(s_\ell).
\]
Both lower bounds are attained by the canonical step extensions
\[
g^*(t)=x_B(b_\ell)\quad \text{for }t\in[b_\ell,b_{\ell+1}),
\qquad
h^*(t)=x_S(s_\ell)\quad \text{for }t\in(s_{\ell-1},s_\ell].
\]
Thus the sharp discrete budget term is
\[
\begin{aligned}
\Lambda_D(x)
=\;&
\sum_{j=0}^N f_B(b_j)
\left(
b_jx_B(b_j)
-
\sum_{\ell<j}(b_{\ell+1}-b_\ell)x_B(b_\ell)
\right)\\
&-
\sum_{i=0}^N f_S(s_i)
\left(
s_ix_S(s_i)
+
\sum_{\ell>i}(s_\ell-s_{\ell-1})x_S(s_\ell)
\right).
\end{aligned}
\]

We now put this expression into pointwise form. Exchanging the order of
summation in the buyer part gives
\[
\sum_{j=0}^N
x_B(b_j)f_B(b_j)
\left(
b_j+(b_j-b_{j+1})\frac{1-F_B(b_j)}{f_B(b_j)}
\right),
\]
where \(b_{N+1}:=b_N\), so the boundary term vanishes because
\(1-F_B(b_N)=0\). Similarly, the seller part equals
\[
\sum_{i=0}^N
x_S(s_i)f_S(s_i)
\left(
s_i+(s_i-s_{i-1})\frac{F_S(s_{i-1})}{f_S(s_i)}
\right),
\]
where \(s_{-1}:=s_0\), so the boundary term at \(i=0\) vanishes.

Substituting the interim rules into \(\Lambda_D(x)\), we obtain
\[
\Lambda_D(x)
=
\sum_{i,j=0}^N
x(s_i,b_j)f_S(s_i)f_B(b_j)
\Bigg(
b_j+(b_j-b_{j+1})\frac{1-F_B(b_j)}{f_B(b_j)}-s_i-(s_i-s_{i-1})\frac{F_S(s_{i-1})}{f_S(s_i)}
\Bigg).
\]
Therefore \(\Lambda_D(x)\ge 0\) is exactly the displayed discrete
budget-balance constraint. Combining this with the discrete objective and the
adjacent monotonicity constraints proves the proposition.
\end{proof}

\begin{proof}[Proof of Lemma~\ref{lem:feasibility-preservation}]
By assumption, the discrete allocation rule $x^{(k+1)}$ is weakly increasing in the buyer's value and weakly decreasing in the seller's value. Because $\widetilde{x}(s,b)$ is constructed as a flat step-function that inherits these exact values, $\widetilde{x}(s,b)$ is globally monotonic over $(s,b) \in [0,1]^2$. 

Then, it remains to show that 
\[\Lambda(\widetilde{x})=\int_0^1\int_0^1\left(b-\frac{1-F_B(b)}{f_B(b)}-s-\frac{F_S(s)}{f_S(s)}\right)\widetilde{x}(s,b)f_B(b)f_S(s)\mathrm{d}s\mathrm{d}b \ge 0.\]
The objective $\Lambda(\widetilde{x})$ can be decomposed as a sum of integrals over the grid cells:
\begin{equation}\label{eq:Lambda}
    \Lambda(\widetilde{x})=\sum_{j=1}^N \sum_{i=1}^N x^{(k+1)}(s_i, b_{j-1}) \left[ \int_{s_{i-1}}^{s_i} \int_{b_{j-1}}^{b_j} \big(\psi_B(b) - \psi_S(s)\big) f_B(b) f_S(s) \mathrm{d}b \mathrm{d}s \right],
\end{equation}
where the virtual values are defined as $\psi_B(b) = b - \frac{1-F_B(b)}{f_B(b)}$ and $\psi_S(s) = s + \frac{F_S(s)}{f_S(s)}$.

Because the integrals are separable, we can evaluate the buyer's continuous virtual value within a single cell:
\begin{align*}
    \int_{b_{j-1}}^{b_j} \psi_B(b) f_B(b) \mathrm{d}b &= \int_{b_{j-1}}^{b_j} \big[ b f_B(b) - (1-F_B(b)) \big] \mathrm{d}b= \Big[ b(F_B(b)-1) \Big]_{b_{j-1}}^{b_j}\\
    &= b_j(F_B(b_j)-1)-b_{j-1}(F_B(b_{j-1})-1).
\end{align*}
Let $\hat{\psi}_B(b_{j-1}) = b_{j-1} - (b_j - b_{j-1}) \frac{1 - \hat{F}_B(b_j)}{\hat{f}_B(b_{j-1})}$. Multiplying this discrete virtual value by its probability mass $\hat{f}_B(b_{j-1}) = \hat{F}_B(b_j) - \hat{F}_B(b_{j-1})$ gives:
\begin{align*}
    \hat{\psi}_B(b_{j-1}) \hat{f}_B(b_{j-1}) &=b_{j-1} \big(\hat{F}_B(b_j) - \hat{F}_B(b_{j-1})\big) - (b_j - b_{j-1}) \big(1 - \hat{F}_B(b_j)\big)\\
    &= b_j(\hat{F}_B(b_j)-1)-b_{j-1}(\hat{F}_B(b_{j-1})-1).
\end{align*}
This yields the exact identity:$$\hat{\psi}_B(b_{j-1}) \hat{f}_B(b_{j-1}) = \int_{b_{j-1}}^{b_j} \psi_B(b) f_B(b) \mathrm{d}b.$$

By defining $\hat{\psi}_S(s_i) = s_i + (s_i - s_{i-1}) \frac{\hat{F}_S(s_{i-1})}{\hat{f}_S(s_i)}$ and applying the symmetric integration for the seller's cell $(s_{i-1}, s_i]$, we obtain:
$$\hat{\psi}_S(s_i) \hat{f}_S(s_i) = \int_{s_{i-1}}^{s_i} \psi_S(s) f_S(s) \mathrm{d}s.$$

Substituting the two identities above back into Eq.~\eqref{eq:Lambda}, we have:
\[ \Lambda(\widetilde{x})= \sum_{j=1}^N \sum_{i=1}^N x^{(k+1)}(s_i, b_{j-1}) \Big[ \hat{\psi}_B(b_{j-1}) \hat{f}_B(b_{j-1}) \hat{f}_S(s_i) - \hat{\psi}_S(s_i) \hat{f}_S(s_i) \hat{f}_B(b_{j-1}) \Big].\]
This shows that $\Lambda(\widetilde{x})$ is exactly the discrete discriminant under $x^{(k+1)}$ for discrete distributions $\hat{F}_S$ and $\hat{F}_B$. Because the discrete allocation rule $x^{(k+1)}$ is BIC, IR and WBB by assumption, the allocation rule $\widetilde{x}(s,b)$ strictly satisfies $\Lambda(\widetilde{x}) \ge 0$. This completes the proof.
\end{proof}

\begin{proof}[Proof of Lemma~\ref{lem:convergence}]
We first compare \(\widehat{\mathrm{SB}}_{M+1}\) with
\(\widehat{\mathrm{SB}}_M\). Consider a rectangle of the \(2^{-M}\)-partition,
whose seller and buyer representatives are \(s_k^{(M)}=k2^{-M}\) and
\(b_{\ell-1}^{(M)}=(\ell-1)2^{-M}\). For every finer-grid pair
\((s_i^{(M+1)},b_{j-1}^{(M+1)})\) contained in this rectangle, we have
\(0\le s_k^{(M)}-s_i^{(M+1)}\le 2^{-(M+1)}\) and
\(0\le b_{j-1}^{(M+1)}-b_{\ell-1}^{(M)}\le 2^{-(M+1)}\). Hence
\begin{equation}\label{eq:dyadic-local-surplus-comparison}
0
\le
\bigl(b_{j-1}^{(M+1)}-s_i^{(M+1)}\bigr)
-
\bigl(b_{\ell-1}^{(M)}-s_k^{(M)}\bigr)
\le
2^{-M}.
\end{equation}

Let \(x^M\) be optimal for \(\widehat{\mathrm{SB}}_M\). Lift it to the
\(2^{-(M+1)}\)-grid by assigning the same value to all finer rectangles
contained in each rectangle of the \(2^{-M}\)-partition. By grid consistency,
the lifted allocation is feasible on the \(2^{-(M+1)}\)-grid, and by
\eqref{eq:dyadic-local-surplus-comparison}, its objective value is at least
\(\widehat{\mathrm{SB}}_M\). Therefore
\(\widehat{\mathrm{SB}}_{M+1}\ge \widehat{\mathrm{SB}}_M\).

Conversely, let \(x^{M+1}\) be optimal for
\(\widehat{\mathrm{SB}}_{M+1}\). Pool it to the \(2^{-M}\)-grid by
probability-mass averaging over the finer rectangles contained in each
rectangle of the \(2^{-M}\)-partition. The pooled allocation is feasible on
the \(2^{-M}\)-grid. By \eqref{eq:dyadic-local-surplus-comparison}, each
finer-grid surplus is at most the corresponding \(2^{-M}\)-grid surplus plus
\(2^{-M}\). Since \(0\le x^{M+1}\le1\) and the total probability mass is one,
we obtain
\begin{equation}\label{eq:dyadic-sb-comparison}
\widehat{\mathrm{SB}}_M
\le
\widehat{\mathrm{SB}}_{M+1}
\le
\widehat{\mathrm{SB}}_M+2^{-M}.
\end{equation}
Thus \(\{\widehat{\mathrm{SB}}_M\}_{M\ge1}\) is nondecreasing. Moreover,
\(\widehat{\mathrm{SB}}_M\ge0\) because the zero allocation is feasible, and
\(\widehat{\mathrm{SB}}_M\le1\) because \(0\le x\le1\) and values lie in
\([0,1]\). Hence \(\lim_{M\to\infty}\widehat{\mathrm{SB}}_M\) exists.

Next, we show that \(\widehat{\mathrm{FB}}_M\to \mathrm{FB}_{F_S,F_B}\). On
each rectangle \(I_i^{(M)}\times J_j^{(M)}\), one has
\(0\le (b-s)-(b_{j-1}^{(M)}-s_i^{(M)})\le 2^{1-M}\). Since the positive-part
map is increasing and \(1\)-Lipschitz, integrating over all rectangles gives
\begin{equation}\label{eq:dyadic-fb-convergence}
0
\le
\mathrm{FB}_{F_S,F_B}
-
\widehat{\mathrm{FB}}_M
\le
2^{1-M}.
\end{equation}
Therefore \(\widehat{\mathrm{FB}}_M\to \mathrm{FB}_{F_S,F_B}\).

We now compare the continuous GFT of the step-function extension with the
discrete value. On each rectangle \(I_i^{(M)}\times J_j^{(M)}\), we have
\(b-s\ge b_{j-1}^{(M)}-s_i^{(M)}\). Since
\(\widetilde x_M(s,b)=\widehat x_M(s_i^{(M)},b_{j-1}^{(M)})\) on this
rectangle and \(\widehat x_M\ge0\), integrating gives
\(\mathrm{GFT}_{F_S,F_B}(\widetilde x_M)\ge \widehat{\mathrm{SB}}_M\).
Moreover, by feasibility of \(\widetilde x_M\) for the continuous second-best
problem,
\begin{equation}\label{eq:dyadic-gft-sandwich}
\mathrm{SB}_{F_S,F_B}
\ge
\mathrm{GFT}_{F_S,F_B}(\widetilde x_M)
\ge
\widehat{\mathrm{SB}}_M .
\end{equation}

The sequence
\(\{\mathrm{GFT}_{F_S,F_B}(\widetilde x_M)\}_{M\ge1}\) is bounded by
\eqref{eq:dyadic-gft-sandwich}. Hence it admits a convergent subsequence, say
along \(M_m\to\infty\). Passing to the limit in
\eqref{eq:dyadic-gft-sandwich} along this subsequence gives
\begin{equation}\label{eq:dyadic-subsequence-limit}
\lim_{m\to\infty}
\mathrm{GFT}_{F_S,F_B}(\widetilde x_{M_m})
\ge
\lim_{M\to\infty}
\widehat{\mathrm{SB}}_M .
\end{equation}

Finally, since \(\mathrm{FB}_{F_S,F_B}>0\) and
\(\widehat{\mathrm{FB}}_M\to\mathrm{FB}_{F_S,F_B}\), we have
\(\widehat{\mathrm{FB}}_M>0\) for all sufficiently large \(M\). Combining
\eqref{eq:dyadic-fb-convergence} with the existence of
\(\lim_{M\to\infty}\widehat{\mathrm{SB}}_M\), we obtain
\begin{equation}\label{eq:dyadic-ratio-limit}
\lim_{M\to\infty}
\frac{\widehat{\mathrm{SB}}_M}{\widehat{\mathrm{FB}}_M}
=
\frac{
\lim_{M\to\infty}\widehat{\mathrm{SB}}_M
}{
\mathrm{FB}_{F_S,F_B}
}.
\end{equation}
Dividing \eqref{eq:dyadic-subsequence-limit} by
\(\mathrm{FB}_{F_S,F_B}\) and using \eqref{eq:dyadic-ratio-limit}, we conclude
that
\[
\lim_{m\to\infty}
\frac{
\mathrm{GFT}_{F_S,F_B}(\widetilde x_{M_m})
}{
\mathrm{FB}_{F_S,F_B}
}
\ge
\lim_{M\to\infty}
\frac{
\widehat{\mathrm{SB}}_M
}{
\widehat{\mathrm{FB}}_M
}.
\]
This proves the claim.
\end{proof}

\section{Omitted Proofs in Section~\ref{sec:new-problem}}
\label{app:new-problem}
\begin{proof}[Proof of Lemma~\ref{lem:Lagrangian}]
Considering the primal problem
\[
\sup_{x\in\mathcal R_1}\ \mathrm{GFT}(x)
\qquad\text{s.t.}\qquad
\Lambda_D(x)\ge 0,
\]
its optimal value is
\[
P^*:=\sup\left\{\mathrm{GFT}(x):x\in\mathcal R_1,\ \Lambda_D(x)\ge 0\right\}.
\]

To put the problem into the standard Lagrangian form, we multiply both the objective and the constraint by $-1$. Then the problem is equivalently written as
\[
-P^*
=
\inf_{x\in\mathcal R_1}\ \left(-\mathrm{GFT}(x)\right)
\qquad\text{s.t.}\qquad
-\Lambda_D(x)\le 0.
\]

Now we define the standard Lagrangian for this minimization problem:
\[
\widetilde L(x,\alpha)
:=
-\mathrm{GFT}(x)+\alpha\left(-\Lambda_D(x)\right),
\qquad \alpha\ge 0.
\]
That is,
\[
\widetilde L(x,\alpha)
=
-\mathrm{GFT}(x)-\alpha\,\Lambda_D(x).
\]
Since both the objective function and the constraints are linear with respect to the allocation rule $x$ (for any fixed $F_S$ and $F_B$), the problem constitutes a linear program in $x$. It follows from strong duality for linear programming that
\[
-P^*
=
\sup_{\alpha\ge 0}\ \min_{x\in\mathcal R_1}\ \widetilde L(x,\alpha).
\]
Substituting the definition of $\widetilde L$ gives
\[
-P^*
=
\sup_{\alpha\ge 0}\ \min_{x\in\mathcal R_1}
\left(-\mathrm{GFT}(x)-\alpha\,\Lambda_D(x)\right).
\]

Finally, multiply both sides by $-1$. Using
\[
-\sup_{\alpha\ge0}\min_{x\in\mathcal R_1} a(x,\alpha)
=
\inf_{\alpha\ge0}\max_{x\in\mathcal R_1}\left(-a(x,\alpha)\right),
\]
we obtain
\[
P^*
=
\inf_{\alpha\ge 0}\ \max_{x\in\mathcal R_1}
\left(\mathrm{GFT}(x)+\alpha\,\Lambda_D(x)\right).
\]
This is exactly the claimed representation.
\end{proof}

\begin{proof}[Proof of Lemma~\ref{lem:wlog-fS0}]
Let
\[
m:=\min\{s\in\mathbb Z_+ : f_S(s)>0\}.
\]
If $m=0$, there is nothing to prove.

We first claim that, in maximizing
\[
\mathrm{GFT}(x)+\alpha\,\Lambda_D(x),
\]
we may without loss of generality restrict attention to allocation rules satisfying
\[
x(s,b)=0\qquad \forall\, b<m.
\]
Indeed, using the virtual-value representation,
\[
\mathrm{GFT}(x)+\alpha\,\Lambda_D(x)
=
\sum_{s,b} x(s,b)\bigl(\phi_B(b)-\phi_S(s)\bigr)f_S(s)f_B(b).
\]
For every $s$ with $f_S(s)>0$, we have $s\ge m$, hence
\[
\phi_S(s)\ge (1+\alpha)s\ge (1+\alpha)m.
\]
For every $b<m$,
\[
\phi_B(b)\le (1+\alpha)b < (1+\alpha)m.
\]
Therefore
\[
\phi_B(b)-\phi_S(s)<0
\qquad\text{whenever } b<m\le s.
\]
Thus replacing $x(s,b)$ by $0$ on the region $b<m$ weakly increases
$\mathrm{GFT}(x)+\alpha\,\Lambda_D(x)$.
Since this modification only decreases lower-$b$ entries, it preserves monotonicity, so the modified rule still belongs to $\mathcal R_2$.

Hence we may restrict to rules with $x(s,b)=0$ for all $b<m$.
For such rules, all terms with $b<m$ vanish in the mechanism objective; and they also vanish in the first-best benchmark, because $b<m\le s$ implies $(b-s)_+=0$.

Now define the shifted seller distribution
\[
\widetilde f_S(s):=f_S(s+m),\qquad s\ge 0,
\]
so that $\widetilde f_S(0)=f_S(m)>0$. Moreover,
\[
\widetilde\phi_S(s)=\phi_S(s+m)-m(1+\alpha),
\]
hence $\widetilde F_S$ remains $\alpha$-weakly regular.

Fix any buyer distribution $F_B$, and let
\[
p:=\Pr_{b\sim F_B}[b\ge m].
\]
If $p=0$, then $b<m\le s$ almost surely, so both $\mathrm{FB}$ and $\mathrm{SB}$ are zero, and the claim is trivial.

Assume $p>0$, and define the conditional shifted buyer distribution
\[
\widetilde f_B(b):=\frac{f_B(b+m)}{p},\qquad b\ge 0.
\]
Its virtual value satisfies
\[
\widetilde\phi_B(b)=\phi_B(b+m)-m(1+\alpha).
\]

Given any $\widetilde x\in\mathcal R_2$, define
\[
x(s,b):=
\begin{cases}
0, & b<m,\\[3pt]
\widetilde x(s-m,b-m), & s\ge m,\ b\ge m.
\end{cases}
\]
(Values for $s<m$ are irrelevant since $f_S(s)=0$ there, and may be filled in arbitrarily so as to preserve monotonicity.)
Then $x\in\mathcal R_2$.

Using the change of variables $s'=s-m$, $b'=b-m$ on the region $s\ge m$, $b\ge m$, we obtain
\[
\mathrm{GFT}(x)+\alpha\,\Lambda_D(x)-\beta\,\mathrm{FB}
=
p\Bigl[
\mathrm{GFT}(\widetilde x)+\alpha\,\Lambda_D(\widetilde x)-\beta\,\widetilde{\mathrm{FB}}
\Bigr],
\]
where the quantities on the right are evaluated under $(\widetilde F_S,\widetilde F_B)$.

Therefore the desired inequality for $F_S$ holds for all buyer distributions $F_B$
if it holds for the shifted seller distribution $\widetilde F_S$ for all buyer distributions on $\mathbb Z_+$.
Since $\widetilde f_S(0)>0$, this proves the claim.
\end{proof}

\begin{proof}[Proof of Lemma~\ref{lem:position-wise-monotone}]
For fixed \(\alpha\), the Lagrangian objective can be written as
\[
\mathrm{GFT}(x)+\alpha\,\Lambda_D(x)
=
\sum_{s,b}x(s,b)\left(\phi_B(b)-\phi_S(s)\right)f_S(s)f_B(b),
\]
where \(\phi_B(b):=(1+\alpha)b - \alpha\frac{1-F_B(b)}{f_B(b)}\) is the buyer-side \(\alpha\)-virtual value. By construction, the ironed term
\(
\phi_B^{\text{ir}}(b)-\phi_S^{\text{ir}}(s)
\)
is weakly increasing in \(b\) and weakly decreasing in \(s\). 
Since the objective is linear in $x$ and the constraints defining ${\cal R}_1$ (Bayesian monotonicity) are weaker than the point-wise monotonicity constraints of ${\cal R}_2$, any optimal $x\in{\cal R}_1$ can be transformed into a point-wise monotone allocation in ${\cal R}_2$ via ``ironing'' (averaging $x$ over the intervals where the virtual values are constant). This transformation preserves feasibility and does not decrease the objective value.
\end{proof}

\begin{proof}[Proof of Lemma~\ref{lem:ironing-ratio}]
By construction, the integrated virtual value $\Phi_S(s):=\sum_{t=0}^s \phi_S(t)f_S(t)$ satisfies
\[
\Phi_S^{\text{ir}}(s)\ge \Phi_S(s)
\qquad\text{for all }s.
\]
Also, we have that 
\[
\phi_S(s)f_S(s)=(1+\alpha)s\,f_S(s)+\alpha F_S(s-1),
\]
so discrete summation by parts gives
\[
\Phi_S(s)
=
(1+\alpha)sF_S(s)-\sum_{t=0}^{s-1}F_S(t).
\]
Hence
\begin{equation}\label{eq:Phi-order}
\Phi_S(s)\le \Phi_S^{\text{ir}}(s)
\qquad\Longleftrightarrow\qquad
\sum_{t=0}^s tF_S(t)\ge \sum_{t=0}^s tF_S^{\text{ir}}(t).
\end{equation}

For any seller-side monotone interim allocation \(x_S\),
\[
\sum_{s=0}^N \phi_S(s)x_S(s)f_S(s)
=
\Phi_S(N)x_S(N)-\sum_{s=0}^{N-1}\Phi_S(s)\left(x_S(s+1)-x_S(s)\right),
\]
and similarly,
\[
\sum_{s=0}^N \phi_S^{\text{ir}}(s)x_S(s)f_S^{\text{ir}}(s)
=
\Phi_S^{\text{ir}}(N)x_S(N)-\sum_{s=0}^{N-1}\Phi_S^{\text{ir}}(s)\left(x_S(s+1)-x_S(s)\right).
\]
Since \(x_S(s+1)-x_S(s)\le 0\) and \(\Phi_S^{\text{ir}}\ge \Phi_S\), it follows that
\[
\sum_{s=0}^N \phi_S(s)x_S(s)f_S(s)
\le
\sum_{s=0}^N \phi_S^{\text{ir}}(s)x_S(s)f_S^{\text{ir}}(s).
\]
Conversely, because \(\phi_S^{\text{ir}}\) is constant on each ironed interval, an optimal allocation can always be chosen to be constant on these intervals (via averaging), preserving the objective value. Thus, $\mathrm{SB}$ is invariant under ironing,
\begin{equation}
    \label{eq:iron-sb}
\mathrm{SB}_{F_S^{\text{ir}},F_B}=\mathrm{SB}_{F_S,F_B}.
\end{equation}

For the first-best, 
\[
\mathrm{FB}_{F_S,F_B}
=
\sum_{s,b}(b-s)_+ f_S(s)f_B(b),
\]
and for each fixed \(b\), the function \(s\mapsto (b-s)_+\) is weakly decreasing. The order in \eqref{eq:Phi-order} therefore implies that
\begin{equation}
    \label{eq:iron-fb}
\mathrm{FB}_{F_S^{\text{ir}},F_B}\ge \mathrm{FB}_{F_S,F_B}.
\end{equation}
Combining Eqs.~\eqref{eq:iron-sb} and~\eqref{eq:iron-fb} yields the claimed ratio inequality.
\end{proof}

\begin{proof}[Proof of Theorem~\ref{thm:minimax-recursion-equivalence}]
In order to prove this result, we proceed in three main steps.
\paragraph{Step 1: Local Constancy of the Allocation.} 
If the left-hand side holds, there exists a feasible allocation $x$ such that 
$$\inf_{F_B\in\Delta(\mathbb Z_{\ge 0})}
    \left[
    \mathrm{GFT}(x)+\alpha\Lambda_D(x)-\beta\cdot \mathrm{FB}
    \right]\ge 0.$$
We first show that we can assume $x$ is ``locally constant'' where the feasibility constraint is not binding.

We consider a modified allocation rule $x^{(k+1)}$. Suppose there exists a value $b$ such that by setting $x^{(k+1)}(\cdot,k)=x(\cdot,k)$ for all $k<b$ and $x^{(k+1)}(\cdot, b)=x(\cdot,b-1)$, the condition ${H}(b;x^{(k+1)})\ge 0$ remains satisfied. 
If we define $x^{(k+1)}(\cdot,k)=x(\cdot,k)$ for all $k>b$, it follows that ${H}(t;x^{(k+1)})\ge 0$ holds for all $t$. 

To see this, we only need to verify the case $t>b$. In this regime, the coefficient associated with $x^{(k+1)}(\cdot,b)$ in the expression of $H(t;x^{(k+1)})$ is negative. Since $x^{(k+1)}(\cdot,b)=x(\cdot,b-1)\le x(\cdot,b)$ by the monotonicity of $x$, it follows that $H(t;x^{(k+1)})\ge H(t;x)\ge0$. Consequently, the modified rule $x^{(k+1)}$ also satisfies the lower bound:
$$
\inf_{F_B\in\Delta(\mathbb Z_{\ge 0})}
    \left[
    \mathrm{GFT}(x^{(k+1)})+\alpha\Lambda_D(x^{(k+1)})-\beta\cdot \mathrm{FB}
    \right]\ge 0.
$$
Therefore, without loss of generality, we may restrict our attention to allocation rules that are locally constant whenever possible; specifically, if the feasibility condition $H(b;x)\ge 0$ allows for the binding constraint $x(\cdot,b)=x(\cdot,b-1)$, we can assume this equality holds.

\begin{remark}
Before Step~2, we explain why the check
$$
\textbf{if } (1+\alpha)b \le \phi_S(k_{b-1}) \textbf{ then return {\sc fail}}
$$
in Algorithm~\ref{alg:canonical-recursion} enforces the structural constraint established in Step~2. When the algorithm reaches this line, the current state satisfies $H(b;x)<0$, implying a deficit that must be corrected by increasing $x(\cdot,b)$ for some seller type $s\in\{k_{b-1},\dots,b-1\}$. However, by the $\alpha$-weak regularity of $F_S$, the virtual value $\phi_S$ is weakly increasing, hence
$$
(1+\alpha)b \;\le\; \phi_S(k_{b-1})
\;\Longrightarrow\;
(1+\alpha)b \;\le\; \phi_S(s)
\quad \forall\, s \ge k_{b-1}.
$$
Eq.~\eqref{eq:virtual-value-cutoff} (established in Step~2) requires $x(s,b)=0$ for every such $s$. Since the algorithm searches exclusively over $s\in\{k_{b-1},\dots,b-1\}$, \emph{every} candidate seller type is forbidden by Eq.~\eqref{eq:virtual-value-cutoff}, and the deficit $H(b;x)<0$ is irremediable within the feasible domain. The algorithm therefore correctly certifies failure immediately, without entering the loop.
\end{remark}

\paragraph{Step 2: Virtual Value Cutoff.}
We now show that, without loss of generality, the allocation satisfies
\begin{equation}
(1+\alpha)\,b \;\le\; \phi_S(s) \;\;\Longrightarrow\;\; x(s,b)=0.
\label{eq:virtual-value-cutoff}
\end{equation}
Suppose otherwise. Among all pairs \((s,b)\) violating \eqref{eq:virtual-value-cutoff}, we choose the one with the \emph{smallest} \(b\), and the \emph{smallest} \(s\). Since \(F_S\) is \(\alpha\)-weakly regular, the virtual value \(\phi_S\) is weakly increasing, so
\[
\phi_S(s') \;\ge\; \phi_S(s) \;\ge\; (1+\alpha)\,b
\qquad \forall\; s'\ge s.
\]
We define the modified allocation \(x^{(k+1)}\) by
\[
x^{(k+1)}(s',b') \;:=\;
\begin{cases}
x(s',b'), & \text{if } b'\neq b \text{ or } s'< s,\\[4pt]
0, & \text{if } b'= b \text{ and } s'\ge s.
\end{cases}
\]

Buyer-side monotonicity is preserved because \(b\) is the smallest buyer type violating the condition and for $b'<b$, the values $x(s',b')$ already satisfy the cutoff.
Thus, \(x^{(k+1)}\in \mathcal R_2\).
Now we will verify \(H(t; x^{(k+1)})\ge H(t;x)\) for every \(t\in\mathbb Z_{\ge 0}\):
\begin{itemize}
\item \emph{Case \(t<b\).} The allocation \(x(\cdot,b)\) does not appear in the expression for \(H(t;x)\), so \(H(t;x^{(k+1)})=H(t;x)\).

\item \emph{Case \(t=b\).} From Eq.~\eqref{eq:Hb-def}, the coefficient of \(x(r,b)\) in \(H(b;x)\) is proportional to \(\left((1+\alpha)b-\phi_S(r)\right)f_S(r)\). For every \(r\ge s\) this coefficient is non-positive, since \(\phi_S(r)\ge(1+\alpha)b\). Setting these entries to zero can therefore only increase \(H(b;x)\), whence \(H(b;x^{(k+1)})\ge H(b;x)\ge 0\).

\item \emph{Case \(t>b\).} Value \(b\) enters \(H(t;x)\) solely through the cross-term \(-\alpha\sum_{u<t}f_S(u)\,x(u,b)\), which has a non-positive sign. Reducing entries of \(x(\cdot,b)\) to zero weakly increases this term, so \(H(t;x^{(k+1)})\ge H(t;x)\ge 0\).
\end{itemize}
Since \(H(t;x^{(k+1)})\ge 0\) for every \(t\), the allocation \(x^{(k+1)}\) still guarantees
\[
\inf_{F_B\in\Delta(\mathbb Z_{\ge 0})}
\left[
\mathrm{GFT}(x^{(k+1)})+\alpha\,\Lambda_D(x^{(k+1)})-\beta\cdot\mathrm{FB}
\right]\ge 0.
\]
Iterating this argument over successive buyer types, we may assume that~\eqref{eq:virtual-value-cutoff} holds everywhere without loss of generality.

\paragraph{Step 3: Optimality of Threshold-Form Allocations.}
Take any \(x\in\mathcal R_2\). For any \(b\ge 1\), if \(x(s-1,b)<1\) and \(x(s,b)>0\), we move mass from $s$ to $s-1$ while keeping $H(b;x)$ constant. Let \(\delta>0\) be the reduction at \(s\), and \(\varepsilon>0\) be the increase at \(s-1\). Since the contribution of the value \(b\) to \(H(b;x)\) is
\[
\sum_{r<b}\left((1+\alpha)b-\phi_S(r)\right)f_S(r)x(r,b),
\]
preserving \(H(b;x)\) requires
\[
\left((1+\alpha)b-\phi_S(s-1)\right)f_S(s-1)\varepsilon
=
\left((1+\alpha)b-\phi_S(s)\right)f_S(s)\delta.
\]
Equivalently,
\[
\frac{f_S(s-1)\varepsilon}{f_S(s)\delta}
=
\frac{(1+\alpha)b-\phi_S(s)}{(1+\alpha)b-\phi_S(s-1)}.
\]
Since value \(b\) enters \(H(t;x)\) only through
\[
-\alpha\sum_{u<b} f_S(u)x(u,b),
\]
and \(\phi_S(s)\) is weakly increasing in \(s\),
for every \(t>b\),
\[
\frac{(1+\alpha)b-\phi_S(s)}{(1+\alpha)b-\phi_S(s-1)}
\le
1.
\]
Then this allocation shift implies
\[
H(t;x^{(k+1)})\ge H(t;x)\qquad\text{for every }t>b.
\]

Consequently, without changing \(H(b;x)\), we may replace \(x(\cdot,b)\) by a smallest value threshold, and this weakly increases every later \(H(t;x)\). Applying this transformation iteratively, we may restrict attention threshold-form allocations of the type
\[
\left(1,\ldots,1,x^*,0,\ldots,0\right).
\]

Note that this leftward shifting preserves property~\eqref{eq:virtual-value-cutoff}: all non-zero entries remain in the region where \((1+\alpha)b > \phi_S(s)\), so the coefficients \(\left((1+\alpha)b-\phi_S(s)\right)\) appearing in the shift ratio are strictly positive and the transformation is well-defined.

In summary, Steps~1–3 show that the minimax condition is equivalent to the existence of a threshold-form allocation satisfying $H(b;x)\ge 0$ for any $b\in\mathbb{Z}_+$. The greedy Algorithm~\ref{alg:canonical-recursion} constructs exactly such an allocation or correctly identifies its non-existence.
\end{proof}

\section{Omitted Proofs in Section~\ref{sec:optimality}}
\label{app:optimality}

\begin{proof}[Proof of Proposition~\ref{prop:mono}]
Run Algorithm~\ref{alg:state-recursion} from \(\omega\) and \(\omega'\) with the same
ratio sequence \(\mathbf r\). We first assume that both \(\mathrm{ALG}_2(\omega;\mathbf r)\) and \(\mathrm{ALG}_2(\omega';\mathbf r)\) are not \textsc{fail}. Let \(x\) and \(x'\) denote the two resulting
allocation rules. Since \(\omega\) and \(\omega'\) are comparable and both runs use the same appended ratio sequence, the output states \(\mathrm{ALG}_2(\omega;\mathbf r)\) and \(\mathrm{ALG}_2(\omega';\mathbf r)\) are also comparable.

We argue by contradiction. Suppose that the monotonicity statement fails.
Let \(b\) be the first step at which the threshold index of \(x\) is no longer less than that of \(x'\). Then for every earlier step \(t<b\),
\[
x(s,t)\le x'(s,t)
\qquad\text{for } s<t.
\]
Since \(b\) is the first violating step, \(\mathrm{ALG}_2(\omega;\mathbf r)\) must output exactly at a clear state; otherwise the allocation row at step \(b\) would be copied from the previous step.
Hence the criterion of \(x\) at step \(b\) is exactly \(\hat H(b; x)=0.\)
Therefore,
every term that comes from a past step and is multiplied by
\(-\alpha\) is smaller for \(x\) than for \(x'\). 

On the other hand, at step \(b\), we compare the contribution to
\(\hat H(b;\cdot)\) in \eqref{eq:Hhat} from the allocation variables
\(x(s,b)\) associated with buyer value \(b\). By the choice of \(b\), the threshold order is preserved for every earlier
buyer value \(t<b\). At buyer value \(b\), this argument changes direction:
the threshold chosen in the run from \(\omega\) becomes weakly larger than the
threshold chosen in the run from \(\omega'\).
Thus, after comparing
the two \(\hat H\)-formulas term by term, we obtain
\[
\hat H(b;x')<\hat H(b;x)=0,
\]
a contradiction!
Therefore the threshold index of \(x\) is never larger than \(x'\),
i.e. 
\(\mathrm{ALG}_2(\omega;\mathbf r)\prec \mathrm{ALG}_2(\omega';\mathbf r).\)

Next, we prove that if \(\mathrm{ALG}_2(\omega;\mathbf r)=\textsc{fail}\), then \(\mathrm{ALG}_2(\omega';\mathbf r)=\textsc{fail}\) and if \(\mathrm{ALG}_2(\omega';\mathbf r)\neq\textsc{fail}\), then \(\mathrm{ALG}_2(\omega;\mathbf r)\neq\textsc{fail}\). We prove this by
contradiction. Suppose that
\(\mathrm{ALG}_2(\omega;\mathbf r)=\textsc{fail},
\mathrm{ALG}_2(\omega';\mathbf r)\neq\textsc{fail}.\)
Let \(b\) be the first step at which the run from \(\omega\) returns
\textsc{fail}. Applying
the ordering argument above to step \(b-1\),
the partially constructed states remain ordered. 

We first show that the failure at step \(b\) cannot be caused by the
condition
\((1+\alpha)b \le \phi
_S(k_{b-1}).\)
Let \(b'\le b\) be the corresponding buyer index in the run
from \(\omega'\). Since the threshold index for \(x'\)
is weakly larger and \(\phi_S\) is nondecreasing, we have
\(\phi_S(k_{b-1})\le \phi_S(k'_{b'-1}).\)
Hence
\[
(1+\alpha)b-\phi_S(k_{b-1})
\ge
(1+\alpha)b'-\phi_S(k'_{b'-1})
\ge 0.
\]

Therefore the failure of the run from \(\omega\) must be caused by the case that
the algorithm cannot find any value \(x(k_b,b)\in(0,1]\) satisfying
\(\hat H(b;x)=0\). By the ordering argument above, if the more
demanding run \(\mathrm{ALG}_2(\omega';\mathbf r)\) can choose a threshold and a fractional
value in \((0,1]\) that makes \(\hat H(b';x')=0\), then the run \(\mathrm{ALG}_2(\omega;\mathbf r)\)
must also be able to find such an achievable state. Hence, in the run from \(\omega\), the candidate-threshold search in
Line~\ref{line:state-transition-threshold-search} cannot fall through to the failure branch
with \(\mathrm{found}=\mathrm{false}\) in Line~\ref{line:state-transition-no-found-fail}.

Thus \(\mathrm{ALG}_2(\omega;\mathbf r)\) cannot fail whenever
\(\mathrm{ALG}_2(\omega';\mathbf r)\neq\textsc{fail}\). Combining this
with the non-failure ordering argument proved above gives the final statement.
\end{proof}

\begin{proof}[Proof of Lemma~\ref{lem:mono-G}]
From \eqref{eq:big-jump-rec} and $x_b=\beta G_b$, we have, for every $b\ge 1$,
\[
G_{b+1}
=
\left(1-\frac{1}{(b+1)(1+\alpha)}\right)G_b
+
\frac{1}{(b+1)(1+\alpha)}\frac{F_S(b)}{F_S(0)} .
\]
Hence
\[
G_{b+1}-G_b
=
\frac{1}{(b+1)(1+\alpha)}
\left(\frac{F_S(b)}{F_S(0)}-G_b\right).
\]
It remains to show that \(G_b\le F_S(b)/F_S(0)\). By \eqref{eq:Gm-def} and the monotonicity of \(F_S\),
\[
\begin{aligned}
G_b
&=
\sum_{t=1}^{b}
\frac{F_S(t-1)}{t(1+\alpha)F_S(0)}
\prod_{j=t+1}^{b}\left(1-\frac{1}{j(1+\alpha)}\right)  \\
&<
\frac{F_S(b)}{F_S(0)}
\sum_{t=1}^{b}
\frac{1}{t(1+\alpha)}
\prod_{j=t+1}^{b}\left(1-\frac{1}{j(1+\alpha)}\right).
\end{aligned}
\]
Moreover, the last sum telescopes:
\[
\sum_{t=1}^{b}
\frac{1}{t(1+\alpha)}
\prod_{j=t+1}^{b}\left(1-\frac{1}{j(1+\alpha)}\right)
=
1-\prod_{j=1}^{b}\left(1-\frac{1}{j(1+\alpha)}\right)
\le 1 .
\]
Therefore \(G_b< F_S(b)/F_S(0)\), and consequently
\[
G_{b+1}-G_b> 0 .
\]
Thus \(G_b\) is strictly increasing in \(b\).
\end{proof}

\begin{proof}[Proof of Lemma~\ref{lem:alpha-zero-never-fail}]
When \(\alpha=0\), the seller virtual value is \(\phi_S(j)=j\). Hence the
returning \textsc{fail} in Line~\ref{line:state-transition-return-fail-virtual}
would require
\[
    b\le \phi_S(k_b)=k_b .
\]
This is impossible, because Algorithm~\ref{alg:state-recursion} only considers
threshold indices \(k_b\le b-1\). Thus the only remaining possible way for the
algorithm to return \textsc{fail} is that the candidate-threshold search in
Line~\ref{line:state-transition-threshold-search} exhausts all candidate threshold
indices and then reaches Line~\ref{line:state-transition-no-found-fail} without
finding a threshold allocation in \((0,1]\) satisfying the clear-step equation.

We rule this out. Fix a buyer value \(b\), and let \(b_0<b\) be the buyer value
of the clear state from which the normalized equation
\(\hat H(b;\cdot)\) in \eqref{eq:Hhat} is computed. Suppose that, after copying
the allocation from the preceding buyer value, i.e., $x(\cdot, b)=x(\cdot, b-1)$, the algorithm enters the
candidate-threshold search in Line~\ref{line:state-transition-threshold-search} at
step \(b\), and we have
\[
    \hat H(b;x)<0 .
\]

For every seller value \(s\) appearing in the summation over \(1\le s<b_0\)
or in the summation over \(b_0\le s<b\) in \eqref{eq:Hhat}, define
\[
    w_s
    :=
    \prod_{i=1}^{s-k_{b_0}}
    \left(1+\frac{1}{r_i}\right)
    \frac{1}{r_s},
    \qquad 1\le s<b,
\]

where the product is interpreted as \(1\) whenever the index set is empty.
This quantity is exactly the multiplicative factor attached in
\eqref{eq:Hhat} to the summand associated with seller value \(s\). In
particular, \(w_s>0\) for every \(1\le s<b\).

When \(\alpha=0\), the coefficient of \(x(s,b)\) in \(\hat H(b;x)\) is
\(
    (b-s)w_s>0 .
\)
Therefore \(\hat H(b;\cdot)\) is strictly increasing in every variable
\(x(s,b)\) that Line~\ref{line:state-transition-update-kb} sets to \(1\)
at buyer value \(b\).

Let \(\bar x\) be obtained from the allocation 
by setting
\[
    \bar x(s,b)=1,\qquad s=0,\ldots,b-1,
\]
while leaving all allocation vectors at buyer values different from \(b\)
unchanged, i.e., $\bar x(\cdot, <b)=x(\cdot,<b)$. We claim that \(\hat H(b;\bar x)>0\). Substituting
\(\alpha=0\), \(\beta=\tfrac12\), and \(\bar x(s,b)=1\) into
\eqref{eq:Hhat} gives
\[
\begin{aligned}
\hat H(b;\bar x)
={}&
\frac{b-b_0}{2}
+
\sum_{0< s<b_0}
 w_s
\left(
    (b-s)-(b_0-s)x(s,b_0)-\frac{b-b_0}{2}
\right) +
\sum_{b_0\le s<b}
 w_s\frac{b-s}{2},
\end{aligned} 
\]

where \(b_0\) is the buyer value from the last clear state. 
For every \(1\le s<b_0\), since \(x(s,b_0)\le 1\),
\[
    (b-s)-(b_0-s)x(s,b_0)-\frac{b-b_0}{2}
    \ge
    (b-s)-(b_0-s)-\frac{b-b_0}{2}
    =
    \frac{b-b_0}{2}
    >0.
\]
For every \(b_0\le s<b\), we also have
\(
    \frac{b-s}{2}>0.
\)
Together with \(w_s>0\) and \((b-b_0)/2>0\), this proves
\[
    \hat H(b;\bar x)>0 .
\]

Starting from the allocation immediately before the candidate-threshold search
in Line~\ref{line:state-transition-threshold-search}, for which
\(\hat H(b;x)<0\), consider the process of setting the variables
\(x(s,b)\), \(s=0,\ldots,b-1\), to \(1\) one by one in the same order as
Line~\ref{line:state-transition-threshold-search}. At each trial \(k_b\), all
variables except \(x(k_b,b)\) are fixed.
Since the coefficient of \(x(k_b,b)\) in \(\hat H(b;x)\) is strictly positive,
\(\hat H(b;x)\), viewed only as a function of \(x(k_b,b)\), is a strictly
increasing linear function.

After all these variables are set to \(1\), the resulting allocation is
\(\bar x\), and we have proved that \(\hat H(b;\bar x)>0\). Therefore, during
this sequential update process, there must be a first index \(k_b\) such that
increasing \(x(k_b,b)\) to \(1\) makes \(\hat H(b;x)\) weakly positive.
For this index, before increasing \(x(k_b,b)\) the value of \(\hat H(b;x)\) is
negative, while after setting \(x(k_b,b)=1\) it is nonnegative. Since
\(\hat H(b;x)\) is a strictly increasing linear function of \(x(k_b,b)\), there
exists a value
\(
    x(k_b,b)\in(0,1]
\)
such that
\[
    \hat H(b;x)=0 .
\]
Thus the algorithm finds a feasible  allocation at step \(b\), and
cannot return \textsc{fail}.
\end{proof}

\begin{proof}[Proof of Proposition~\ref{prop:regular-never-fail}]
Normalize \(F_S(0)=1\), and write
\(\kappa:=\frac{1+\alpha}{\alpha}.\)
First consider \(b=1\). Then
\[
    G_1=\frac{F_S(0)}{1+\alpha}=\frac{1}{1+\alpha},
    \qquad
    x_1=\beta G_1=\frac{1}{2(1+\alpha)}<1.
\]
Moreover \(\phi^\alpha_S(0)=0<(1+\alpha)\). Hence the threshold-zero candidate is
feasible, and the algorithm cannot reach, and therefore cannot fail at,
threshold index \(t_b=1\).

Now fix \(b\ge 2\). Suppose toward a contradiction that
Algorithm~\ref{alg:state-recursion} returns \textsc{fail} at threshold index
\(t_b=1\). Then the virtual-value test at \(t_b=1\) gives
\((1+\alpha)b\le \phi^\alpha_S(1).\)
Since the ratio sequence is \(\alpha\)-weakly regular, \(\phi^\alpha_S\) is weakly
increasing. Thus, for every \(1\le m\le b-1\),
\(\phi^\alpha_S(m)\ge (1+\alpha)b.\)
Using
\(\phi^\alpha_S(m) = (1+\alpha)m+ \alpha\frac{F_S(m{-}1)}{f_S(m)},\)
we obtain
\(\frac{F_S(m{-}1)}{f_S(m)} \ge \kappa(b-m), 1\le m\le b-1 .\)
Therefore, for every \(0\le i\le b-1\),
\[
\begin{aligned}
    F_S(i)
    &=
    \prod_{m=1}^{i}
    \left(
        1+\frac{1}{F_S(m{-}1)/f_S(m)}
    \right) \le
    \prod_{m=1}^{i}
    \left(
        1+\frac{1}{\kappa(b-m)}
    \right)
    =
    \widetilde F_S^{(b+1)}(i),
\end{aligned}
\]
where \(\widetilde F_S^{(b+1)}\) denotes the normalized CDF of the \(\phi^\alpha_S=(1+\alpha)(N-1)\) instance with support size \(b+1\).

By the nonnegative-coefficient representation of \(G_b\) in
\eqref{eq:Gm-def}, this pointwise CDF domination implies
\(G_b\le \widetilde G_b^{(b+1)}.\)
The right-hand side is exactly the extremal constant-\(\phi^\alpha_S\) case.
By Lemma~\ref{lem:zero-ratio-case},
\(\widetilde G_b^{(b+1)}<2.\)
Hence
\(0<x_b=\beta G_b=\frac12G_b<1.\)
In addition, the virtual-value test at threshold zero is satisfied:
\(\phi^\alpha_S(0)=0<(1+\alpha)b.\)
Thus the threshold-zero candidate is feasible and is considered before
threshold index \(t_b=1\). This contradicts the assumption that the algorithm
returns \textsc{fail} at \(t_b=1\).

Therefore Algorithm~\ref{alg:state-recursion} never returns \textsc{fail} at
threshold index \(t_b=1\).
\end{proof}

\begin{proof}[Proof of Lemma~\ref{lem:zero-ratio-case}]
Throughout the proof, empty products are understood as \(1\).  Let
\(a:=\alpha/(1+\alpha)\).  Since \(\alpha>0\), we have \(a\in(0,1)\) and
\(1-a=1/(1+\alpha)\in(0,1)\).  

\textbf{1. Monotonicity Proof}

Recall that, in the case \(r=0\),
\[
\widetilde F_S^{(N)}(t-1)
=
\prod_{j=N-t}^{N-2}\left(1+\frac{a}{j}\right),
\qquad 1\le t\le N-1,
\]
and
\[
\widetilde G_{N-1}^{(N)}
=
\sum_{t=1}^{N-1}
\frac{\widetilde F_S^{(N)}(t-1)}{t(1+\alpha)}
\prod_{j=t+1}^{N-1}
\left(1-\frac{1}{j(1+\alpha)}\right).
\]

We first prove strict monotonicity in \(N\).  Fix \(N\ge2\) and
\(1\le t\le N-1\).  Since
\(\widetilde F_S^{(N+1)}(t) = \widetilde F_S^{(N+1)}(t-1) \left(1+\frac{a}{N-t}\right),\)
the increment
\(\widetilde F_S^{(N+1)}(t)-\widetilde F_S^{(N+1)}(t-1)
=
\frac{a}{N-t}\widetilde F_S^{(N+1)}(t-1)\) is strictly positive.
Moreover,
\[
\widetilde F_S^{(N)}(t-1)
=
\frac{1+\frac{a}{N-t}}{1+\frac{a}{N-1}}\,
\widetilde F_S^{(N+1)}(t-1).
\]
Hence
\[
\begin{aligned}
\widetilde F_S^{(N)}(t-1)
-
\widetilde F_S^{(N+1)}(t-1)
&=
\left[
\frac{1+\frac{a}{N-t}}{1+\frac{a}{N-1}}-1
\right]
\widetilde F_S^{(N+1)}(t-1)  \\
&=
\frac{a(t-1)}
{(N-t)(N-1+a)}
\widetilde F_S^{(N+1)}(t-1)  \\
&=
\frac{t-1}{N-1+a}
\left[
\widetilde F_S^{(N+1)}(t)
-
\widetilde F_S^{(N+1)}(t-1)
\right].
\end{aligned}
\]
It follows that
\begin{equation}
\label{eq:GNminus1-diff-r0-alpha-positive}
\begin{aligned}
\widetilde G_{N-1}^{(N)}
-
\widetilde G_{N-1}^{(N+1)}
&=
\sum_{t=1}^{N-1}
\frac{t-1}{t(1+\alpha)(N-1+a)}
\left[
\prod_{j=t+1}^{N-1}
\left(1-\frac{1}{j(1+\alpha)}\right)
\right] \cdot
\left[
\widetilde F_S^{(N+1)}(t)
-
\widetilde F_S^{(N+1)}(t-1)
\right].
\end{aligned}
\end{equation}

On the other hand, by the recursive form of \(\widetilde G^{(N+1)}\),
\begin{equation}
\label{eq:GN-step-r0-alpha-positive}
\widetilde G_N^{(N+1)}
-
\widetilde G_{N-1}^{(N+1)}
=
\frac{
\widetilde F_S^{(N+1)}(N-1)
-
\widetilde G_{N-1}^{(N+1)}
}
{N(1+\alpha)}.
\end{equation}
For each \(1\le t\le N-1\), since
\(\frac{1}{t(1+\alpha)}
=
1-\left(1-\frac{1}{t(1+\alpha)}\right)\), we have
\[
\begin{aligned}
\frac{1}{t(1+\alpha)}
\prod_{j=t+1}^{N-1}
\left(1-\frac{1}{j(1+\alpha)}\right)
&=
\prod_{j=t+1}^{N-1}
\left(1-\frac{1}{j(1+\alpha)}\right)  -
\prod_{j=t}^{N-1}
\left(1-\frac{1}{j(1+\alpha)}\right).
\end{aligned}
\]
Thus, for every \(1\le s\le N-1\),
\begin{equation}
\label{eq:partial-sum-r0-alpha-positive}
\sum_{t=s}^{N-1}
\frac{1}{t(1+\alpha)}
\prod_{j=t+1}^{N-1}
\left(1-\frac{1}{j(1+\alpha)}\right)
=
1-
\prod_{j=s}^{N-1}
\left(1-\frac{1}{j(1+\alpha)}\right).
\end{equation}

Using
\(\widetilde F_S^{(N+1)}(t-1)
=
1+\sum_{s=1}^{t-1}
[
\widetilde F_S^{(N+1)}(s)
-
\widetilde F_S^{(N+1)}(s-1)
]\), we obtain
\[
\begin{aligned}
\widetilde G_{N-1}^{(N+1)}
&=
\sum_{t=1}^{N-1}
\frac{1}{t(1+\alpha)}
\left[
\prod_{j=t+1}^{N-1}
\left(1-\frac{1}{j(1+\alpha)}\right)
\right] \cdot
\left[
1+\sum_{s=1}^{t-1}
\left(
\widetilde F_S^{(N+1)}(s)
-
\widetilde F_S^{(N+1)}(s-1)
\right)
\right]  \\
&=
1-
\prod_{j=1}^{N-1}
\left(1-\frac{1}{j(1+\alpha)}\right) +
\sum_{s=1}^{N-1}
\left[
1-
\prod_{j=s+1}^{N-1}
\left(1-\frac{1}{j(1+\alpha)}\right)
\right] \cdot
\left[
\widetilde F_S^{(N+1)}(s)
-
\widetilde F_S^{(N+1)}(s-1)
\right],
\end{aligned}
\]
where \eqref{eq:partial-sum-r0-alpha-positive} was used in the second
equality.  Since
\[
\widetilde F_S^{(N+1)}(N-1)
=
1+
\sum_{s=1}^{N-1}
\left[
\widetilde F_S^{(N+1)}(s)
-
\widetilde F_S^{(N+1)}(s-1)
\right],
\]
we get
\begin{equation}
\label{eq:FminusG-r0-alpha-positive}
\begin{aligned}
&\qquad\widetilde F_S^{(N+1)}(N-1)
-
\widetilde G_{N-1}^{(N+1)}  \\
& =
\prod_{j=1}^{N-1}
\left(1-\frac{1}{j(1+\alpha)}\right) +
\sum_{s=1}^{N-1}
\left[
\prod_{j=s+1}^{N-1}
\left(1-\frac{1}{j(1+\alpha)}\right)
\right]
\left[
\widetilde F_S^{(N+1)}(s)
-
\widetilde F_S^{(N+1)}(s-1)
\right].
\end{aligned}
\end{equation}
Substituting \eqref{eq:FminusG-r0-alpha-positive} into
\eqref{eq:GN-step-r0-alpha-positive} gives
\begin{equation}
\label{eq:GN-step-expanded-r0-alpha-positive}
\begin{aligned}
&\qquad\widetilde G_N^{(N+1)}
-
\widetilde G_{N-1}^{(N+1)}\\
&=
\frac{1}{N(1+\alpha)}
\prod_{j=1}^{N-1}
\left(1-\frac{1}{j(1+\alpha)}\right) +
\sum_{t=1}^{N-1}
\frac{1}{N(1+\alpha)}
\left[
\prod_{j=t+1}^{N-1}
\left(1-\frac{1}{j(1+\alpha)}\right)
\right] \cdot
\left[
\widetilde F_S^{(N+1)}(t)
-
\widetilde F_S^{(N+1)}(t-1)
\right].
\end{aligned}
\end{equation}
Subtracting \eqref{eq:GNminus1-diff-r0-alpha-positive} from
\eqref{eq:GN-step-expanded-r0-alpha-positive}, we obtain
\[
\begin{aligned}
&\qquad\widetilde G_N^{(N+1)}
-
\widetilde G_{N-1}^{(N)}\\
&=
\frac{1}{N(1+\alpha)}
\prod_{j=1}^{N-1}
\left(1-\frac{1}{j(1+\alpha)}\right) +
\sum_{t=1}^{N-1}
\left[
\frac{1}{N(1+\alpha)}
-
\frac{t-1}{t(1+\alpha)(N-1+a)}
\right]  \\
&\qquad\qquad\qquad\qquad\qquad\qquad\qquad\qquad \cdot
\left[
\prod_{j=t+1}^{N-1}
\left(1-\frac{1}{j(1+\alpha)}\right)
\right]
\left[
\widetilde F_S^{(N+1)}(t)
-
\widetilde F_S^{(N+1)}(t-1)
\right].
\end{aligned}
\]
The coefficient in brackets equals
\(\frac{N-t+at}{Nt(1+\alpha)(N-1+a)}\), which is strictly positive for
\(1\le t\le N-1\).  Moreover,
\(1-\frac{1}{j(1+\alpha)}>0\) for every \(j\ge1\), and the increments
\(\widetilde F_S^{(N+1)}(t)-\widetilde F_S^{(N+1)}(t-1)\) are strictly
positive.  Hence
\(\widetilde G_N^{(N+1)}-\widetilde G_{N-1}^{(N)}>0\).  Thus
\(\{\widetilde G_{N-1}^{(N)}\}_{N\ge2}\) is strictly increasing.

\vspace{0.5cm}

\textbf{2. Limitation Proof}

It remains to compute the limit as \(N\to\infty\).  We first rewrite each
summand in Gamma-function form.  Since
\[
\widetilde F_S^{(N)}(t-1)
=
\prod_{j=N-t}^{N-2}
\left(1+\frac{a}{j}\right),
\]
we have
\begin{equation}
\label{eq:F-gamma-alpha-positive}
\widetilde F_S^{(N)}(t-1)
=
\frac{\Gamma(N-1+a)}{\Gamma(N-t+a)}
\frac{\Gamma(N-t)}{\Gamma(N-1)}.
\end{equation}
Also, using \(1-a=1/(1+\alpha)\), we have
\(1-\frac{1}{j(1+\alpha)}=1-\frac{1-a}{j}=\frac{j-1+a}{j}\). Therefore
\begin{equation}
\label{eq:prod-gamma-alpha-positive}
\begin{aligned}
\prod_{j=t+1}^{N-1}
\left(1-\frac{1}{j(1+\alpha)}\right)
&=
\prod_{j=t+1}^{N-1}
\frac{j-1+a}{j} =
\frac{\Gamma(N-1+a)}{\Gamma(t+a)}
\frac{\Gamma(t+1)}{\Gamma(N)}.
\end{aligned}
\end{equation}
Combining \eqref{eq:F-gamma-alpha-positive} and
\eqref{eq:prod-gamma-alpha-positive}, and using
\(1/(1+\alpha)=1-a\), \(\Gamma(t+1)=t\Gamma(t)\), and
\(\Gamma(N)=(N-1)\Gamma(N-1)\), the \(t\)-th summand satisfies
\begin{equation}
\label{eq:summand-gamma-alpha-positive}
\begin{aligned}
&\frac{\widetilde F_S^{(N)}(t-1)}{t(1+\alpha)}
\prod_{j=t+1}^{N-1}
\left(1-\frac{1}{j(1+\alpha)}\right) =
\frac{1-a}{N-1}
\left(
\frac{\Gamma(N-1+a)}{\Gamma(N-1)}
\right)^2
\frac{\Gamma(t)}{\Gamma(t+a)}
\frac{\Gamma(N-t)}{\Gamma(N-t+a)}.
\end{aligned}
\end{equation}

By the standard Gamma-ratio asymptotic formula
\[
\frac{\Gamma(x+b)}{\Gamma(x+c)}
=
x^{b-c}
\left(1+O\left(\frac1x\right)\right),
\qquad x\to\infty,
\]
we obtain, whenever \(t/(N-1)\to x\in(0,1)\),
\[
\frac{\widetilde F_S^{(N)}(t-1)}{t(1+\alpha)}
\prod_{j=t+1}^{N-1}
\left(1-\frac{1}{j(1+\alpha)}\right)
=
\frac{1-a}{N-1}
x^{-a}(1-x)^{-a}
(1+o(1)).
\]
Since \(a<1\), the function \(x^{-a}(1-x)^{-a}\) is integrable on
\((0,1)\).  Moreover, the two boundary sums with
\(t\le \delta(N-1)\) and \(t\ge (1-\delta)(N-1)\) are bounded, uniformly in
\(N\), by a constant depending only on \(a\) times \(\delta^{1-a}\).  Therefore
the preceding pointwise asymptotics yield the Riemann-sum limit
\[
\lim_{N\to\infty}
\widetilde G_{N-1}^{(N)}
=
(1-a)
\int_0^1 x^{-a}(1-x)^{-a}\,dx .
\]
Using the Beta-function identity,
\[
\int_0^1 x^{-a}(1-x)^{-a}\,dx
=
B(1-a,1-a)
=
\frac{\Gamma(1-a)^2}{\Gamma(2(1-a))},
\]
we obtain
\(\lim_{N\to\infty} \widetilde G_{N-1}^{(N)} = (1-a) \frac{\Gamma(1-a)^2}{\Gamma(2(1-a))}.\)
Equivalently, since \(1-a=1/(1+\alpha)\),
\[
\lim_{N\to\infty}
\widetilde G_{N-1}^{(N)}
=
\frac{1}{1+\alpha}
\frac{
\Gamma\!\bigl(\tfrac{1}{1+\alpha}\bigr)^2
}{
\Gamma\!\bigl(\tfrac{2}{1+\alpha}\bigr)
}.
\]

Finally, we analyze the supremum over \(\alpha>0\).  By the Weierstrass product
formula for the Gamma function, for \(a\in(0,1)\),
\[
(1-a)
\frac{\Gamma(1-a)^2}{\Gamma(2(1-a))}
=
2\prod_{n=1}^{\infty}
\frac{1+2(1-a)/n}{(1+(1-a)/n)^2}.
\]
Taking logarithms and differentiating term by term on compact subintervals of
\((0,1)\), we get
\[
\begin{aligned}
\frac{d}{da}
\log\left[
(1-a)
\frac{\Gamma(1-a)^2}{\Gamma(2(1-a))}
\right] & =
2\sum_{n=1}^{\infty}
\left(
\frac{1}{n+1-a}
-
\frac{1}{n+2(1-a)}
\right)  \\
& =
2(1-a)
\sum_{n=1}^{\infty}
\frac{1}{(n+1-a)(n+2(1-a))}
>0.
\end{aligned}
\]
Thus \((1-a)\Gamma(1-a)^2/\Gamma(2(1-a))\) is strictly increasing in
\(a\in(0,1)\).  Since \(a=\alpha/(1+\alpha)\) is strictly increasing in
\(\alpha>0\), the limit is strictly increasing in \(\alpha>0\).

As \(\alpha\to\infty\), we have \(a\to1^-\) and \(1-a\to0^+\).  Using
\(\Gamma(1-a)\sim 1/(1-a)\) and
\(\Gamma(2(1-a))\sim 1/[2(1-a)]\), we obtain
\((1-a) \frac{\Gamma(1-a)^2}{\Gamma(2(1-a))} \longrightarrow 2.\)
Therefore,
\[
\sup_{\alpha>0}
\frac{1}{1+\alpha}
\frac{
\Gamma\!\bigl(\tfrac{1}{1+\alpha}\bigr)^2
}{
\Gamma\!\bigl(\tfrac{2}{1+\alpha}\bigr)
}
=
2.
\]
The supremum is not attained at any finite \(\alpha>0\), and is approached
only in the limit \(\alpha\to\infty\).  This completes the proof.
\end{proof}

\begin{proof}[Proof of Proposition~\ref{prop:success-stationary-phi}]
The only nontrivial step is to prove $\mathcal D\ge0$.  We minimize
$\mathcal D$ over the admissible weakly increasing virtual values.  Introduce
\(\phi^\alpha_i := (i+1)(1+\alpha)+\alpha r_{i+1}^{(k+2)}, \  i=0,1,\ldots,k.\)
Thus
\(1+\frac1{r_{i+1}^{(k+2)}} = 1+\frac{\alpha}{\phi^\alpha_i-(i+1)(1+\alpha)}.\)
The weak regularity constraints are
\(\phi^\alpha_0\le\phi^\alpha_1\le\cdots\le\phi^\alpha_k.\)

Let $F_{k+2}$ and $F_{k+1}$ denote the normalized CDFs generated by the
$k+2$ and $k+1$ ratio sequences, respectively.  Define
\[
K(\phi^\alpha_0,\ldots,\phi^\alpha_k)
:=
\sum_{m=0}^{k+1}
\frac{F_{k+2}(m)}{2(1+\alpha)(m+1)}
\prod_{j=m+2}^{k+2}
\left(1-\frac1{(1+\alpha)j}\right),
\]
and
\[
J(\phi^\alpha_0,
\ldots,\phi^\alpha_k)
:=
\sum_{m=1}^{k+1}
\frac{\alpha(k+1-m)F_{k+2}(m)}{2(1+\alpha)m(m+1)}
\prod_{j=m+2}^{k+1}
\left(1-\frac1{(1+\alpha)j}\right).
\]
If $K\le1$, then the threshold-zero candidate at $b=k+2$ is feasible and the
$k+2$ output does not have a proper suffix.  Hence, in the proper-suffix case,
we must have $K\ge1$.

Moreover, for $m=0,\ldots,k$,
\(F_{k+2}(m+1) = \left(1+\frac1{r_1^{(k+2)}}\right)F_{k+1}(m).\)
Substituting the threshold-stationary expansions of $p_{k+1}$ and $p_{k+2}$
into \eqref{eq:D-zero-based} gives
\[
    \mathcal D
    =
    J(\phi^\alpha_0,\ldots,\phi^\alpha_k)
    +1
    -\frac{k+2}{2}
    \prod_{j=2}^{k+2}\left(1-\frac1{(1+\alpha)j}\right).
\]
Thus minimizing $\mathcal D$ is equivalent to minimizing $J$ while keeping
$K$ fixed and preserving
$\phi^\alpha_0\le\cdots\le\phi^\alpha_k$.

First consider $\phi^\alpha_0$.  The function $K$ does not depend on the leading
factor except through the common multiplicative term that is fixed on the
level set, while increasing $\phi^\alpha_0$ decreases
$1+1/r_1^{(k+2)}$.  Hence, if $\phi^\alpha_0<\phi^\alpha_1$, we can increase
$\phi^\alpha_0$ slightly, preserve feasibility, and strictly decrease $J$.
Therefore every minimizer satisfies $\phi^\alpha_0=\phi^\alpha_1$.

Now fix $i\in\{0,\ldots,k-1\}$ and suppose $\phi^\alpha_i<\phi^\alpha_{i+1}$.  For
$m\ge i$, differentiating the normalized CDF gives
\[
\frac{\partial F_{k+2}(m)}{\partial \phi^\alpha_i}
=
-
\frac{\alpha}{
(\phi^\alpha_i-(i+1)(1+\alpha))
(\phi^\alpha_i-(i+1)(1+\alpha)+\alpha)}
F_{k+2}(m),
\]
whereas the derivative is zero for $m<i$.  Hence
$\partial K/\partial\phi^\alpha_i<0$ and
$\partial K/\partial\phi^\alpha_{i+1}<0$.

For $\ell=0,\ldots,k-1$, after cancelling the common negative factors in
$\partial_{\phi^\alpha_\ell}J$ and $\partial_{\phi^\alpha_\ell}K$, we obtain
\[
\frac{\partial J/\partial\phi^\alpha_\ell}
{\partial K/\partial\phi^\alpha_\ell}
=
\frac{
\displaystyle
\sum_{m=\ell+1}^{k+1}
\frac{\alpha(k+1-m)F_{k+2}(m)}
{2(1+\alpha)m(m+1)}
\prod_{j=m+2}^{k+1}
\left(1-\frac1{(1+\alpha)j}\right)
}
{
\displaystyle
\sum_{m=\ell+1}^{k+1}
\frac{F_{k+2}(m)}{2(1+\alpha)(m+1)}
\prod_{j=m+2}^{k+2}
\left(1-\frac1{(1+\alpha)j}\right)
}.
\]
For each $m$, the quotient of the coefficient of $F_{k+2}(m)$ in the numerator
by the corresponding coefficient in the denominator equals
\(\frac{\alpha(1+\alpha)(k+2)(k+1-m)} {m((1+\alpha)(k+2)-1)},\)
which is strictly decreasing in $m$.  Therefore the above weighted average is
strictly larger at $\ell=i$ than at $\ell=i+1$:
\[
\frac{\partial J/\partial\phi^\alpha_i}
{\partial K/\partial\phi^\alpha_i}
>
\frac{\partial J/\partial\phi^\alpha_{i+1}}
{\partial K/\partial\phi^\alpha_{i+1}}.
\]
Along the local level set of $K$, the implicit function theorem gives
\[
\frac{d\phi^\alpha_{i+1}}{d\phi^\alpha_i}
=
-
\frac{\partial K/\partial\phi^\alpha_i}
{\partial K/\partial\phi^\alpha_{i+1}}
<0,
\]
and therefore
\[
\frac{dJ}{d\phi^\alpha_i}
=
\frac{\partial K}{\partial\phi^\alpha_i}
\left(
\frac{\partial J/\partial\phi^\alpha_i}{\partial K/\partial\phi^\alpha_i}
-
\frac{\partial J/\partial\phi^\alpha_{i+1}}{\partial K/\partial\phi^\alpha_{i+1}}
\right)<0.
\]
Thus we can increase the smaller value $\phi^\alpha_i$ and decrease the larger
value $\phi^\alpha_{i+1}$ while preserving $K$, preserving weak monotonicity for a
small perturbation, and strictly decreasing $J$.  This contradicts minimality.
Hence no minimizer can satisfy $\phi^\alpha_i<\phi^\alpha_{i+1}$, and every minimizer has
\(\phi^\alpha_0=\phi^\alpha_1=\cdots=\phi^\alpha_k.\)
Equivalently,
\[
    (1+\alpha)+\alpha r_1^{(k+2)}
    =2(1+\alpha)+\alpha r_2^{(k+2)}
    =\cdots=
    (k+1)(1+\alpha)+\alpha r_{k+1}^{(k+2)}.
\]
By Lemma~\ref{lem:extremal-t1}, $\mathcal D\ge0$.

Now suppose the $k+2$ execution outputs
$\omega=(y;r_t^{(k+2)},\ldots,r_{k+1}^{(k+2)})$ with $t\ge2$.  Construct
$x_\omega$ in the $k+1$ case.  The inequality
\eqref{eq:threshold-stationary-same-state-positive-zero-based} implies that
$\hat H(k+1;x_\omega)<0$ whenever
$\hat H(k+2;x_{k+2})=0$.  Hence, if the $k+1$ execution succeeds with the same
suffix, its boundary allocation must be strictly larger than $y$; if it does
not succeed, it returns \textsc{fail}.  This proves the first assertion.

For the failure implication, suppose the $k+2$ execution returns
\textsc{fail}.  By Proposition~\ref{prop:regular-never-fail}, such a failure
cannot occur before $b=k+2$.  If it occurs at the virtual-value test for
threshold $t\ge2$, then \eqref{eq:phi-shift-zero-based} implies that the
$k+1$ execution also fails at the corresponding threshold $t-1$.  Otherwise,
the $k+2$ execution reaches the end of the candidate search without finding a
feasible boundary value.  For every remaining candidate state, the same
comparison inequality gives a negative artificial $k+1$ expression.  Hence the
$k+1$ execution also reaches the end of the candidate search without a
feasible allocation.  Therefore the $k+1$ execution returns \textsc{fail}.
\end{proof}

\begin{proof}[Proof of Lemma~\ref{lem:extremal-t1}]

If Algorithm~\ref{alg:state-recursion} cannot find an achievable state at
\(t_b=0\), then \(t_b\) is set to \(1\) in
Line~\ref{line:state-transition-update-kb}. When
Algorithm~\ref{alg:state-recursion} reaches
Line~\ref{line:state-transition-virtual-test}, we have
\(\phi^\alpha_S(t_b)=(1+\alpha)(N-1)+\alpha r > (1+\alpha)b,\)
and therefore Algorithm~\ref{alg:state-recursion} returns \textsc{fail} in
Line~\ref{line:state-transition-return-fail-virtual}. Consequently, the transition is threshold-stationary.

The two executions are constant-$\phi^\alpha$ executions.  Indeed,
\[
    \phi^{\alpha(k+2)}(i)=(1+\alpha)i+\alpha r_i^{(k+2)}
    =(1+\alpha)(k+1)+\alpha r,
    \qquad i=1,\ldots,k+1,
\]
and
\(\phi^{\alpha(k+1)}(i)=(1+\alpha)i+\alpha r_i^{(k+1)} =(1+\alpha)k+\alpha r, \qquad i=1,\ldots,k.\)
Moreover $r_i^{(k+1)}=r_{i+1}^{(k+2)}$ for $i=1,\ldots,k$.

For $N\in\{k+1,k+2\}$, let $F_N$ be the normalized CDF generated by the
corresponding ratio sequence, and define
\[
U_N(\lambda)
:=
\sum_{t=1}^{N}
\frac{1-a}{t}\,F_N(t-1)
\prod_{j=t+1}^{N}\left(1-\frac{1-a}{j}\right).
\]
Since $\beta=1/2$ and $1+\alpha=1/(1-a)$, proving $\mathcal D\ge0$ is
equivalent to proving
\begin{equation}
\label{eq:U-difference-goal-zero-based}
2(1-a)
+
\left(1+\frac{a}{k+\lambda}\right)(k+1)U_{k+1}(\lambda)
-
(k+2)U_{k+2}(\lambda)
\ge0.
\end{equation}

Write
\(U_{k+1}(\lambda)=\sum_{t=1}^{k+1}A_t(\lambda),\)
where
\[
A_t(\lambda)
=
\frac{1-a}{t}\,F_{k+1}(t-1)
\prod_{s=t+1}^{k+1}\left(1-\frac{1-a}{s}\right),
\qquad 1\le t\le k+1.
\]
Then
\[
\frac{k+\lambda+1-t}{k+\lambda+1-t+a}A_t(\lambda)
=
\frac{t-1}{t-1+a}A_{t-1}(\lambda),
\qquad 2\le t\le k+1.
\]
Furthermore,
\(\left(1+\frac{a}{k+\lambda}\right)F_{k+1}(t-1) =F_{k+2}(t), \qquad 1\le t\le k+1.\)
Thus
\[
U_{k+2}(\lambda)
=
\frac{k+1+a}{k+2}
\left(
A_1(\lambda)
+
\frac{k+\lambda+a}{k+\lambda}
\sum_{t=1}^{k+1}\frac{t}{t+a}A_t(\lambda)
\right).
\]
Consequently,
\[
\left(1+\frac{a}{k+\lambda}\right)(k+1)U_{k+1}(\lambda)
-(k+2)U_{k+2}(\lambda)
=
\sum_{t=1}^{k+1}w_tA_t(\lambda),
\]
where
\(w_1 = \frac{ak(k+\lambda+a)}{(1+a)(k+\lambda)}-(k+1+a),\)
and, for $2\le t\le k+1$,
\(w_t = \frac{a(k+\lambda+a)(k+1-t)}{(k+\lambda)(t+a)}.\)

By Lemma~\ref{lem:1}, $\lambda<1-a$.  Since
$A_1(\lambda)=A_1(1-a)$ and, for $t\ge2$,
\[
\frac{A_t(\lambda)}{A_t(1-a)}
=
\prod_{s=2}^{t}
\frac{k+\lambda+1-s+a}{k+\lambda+1-s}
\cdot
\frac{k+2-a-s}{k+2-s}
\ge1,
\]
we have $A_t(\lambda)\ge A_t(1-a)$ for $t\ge2$.  Also,
$\lambda<1-a$ implies
\(\frac{k+\lambda+a}{k+\lambda} \ge \frac{k+1}{k+1-a}.\)
Therefore
\(w_1 \ge \frac{ak(k+1)}{(1+a)(k+1-a)}-(k+1+a),\)
and, for $2\le t\le k+1$,
\(w_t \ge \frac{a(k+1)(k+1-t)}{(k+1-a)(t+a)}.\)
Combining these inequalities gives
\begin{align*}
&2(1-a)+\sum_{t=1}^{k+1}w_tA_t(\lambda)
\\
&\ge
2(1-a)
+
\left(\frac{ak(k+1)}{(1+a)(k+1-a)}-(k+1+a)\right)A_1(1-a)
+
\sum_{t=2}^{k+1}
\frac{a(k+1)(k+1-t)}{(k+1-a)(t+a)}A_t(1-a)
\\
&=
(1-a)
\left[
2
-
\frac{(a)_{k+1}}{(k+1)!}
-
\sum_{s=0}^{k}
\frac{1-a}{s+1-a}\frac{(a)_s}{s!}
\right],
\end{align*}
where $(a)_s$ is the rising Pochhammer symbol.  Since
\[
\sum_{s=0}^{k}
\frac{1-a}{s+1-a}\frac{(a)_s}{s!}
+
\frac{(a)_{k+1}}{(k+1)!}
<
\sum_{s=0}^{\infty}
\frac{1-a}{s+1-a}\frac{(a)_s}{s!}
=
(1-a)B(1-a,1-a)<2,
\]
we obtain
\(2(1-a)+\sum_{t=1}^{k+1}w_tA_t(\lambda)>0.\)
This proves \eqref{eq:U-difference-goal-zero-based}, and hence
$\mathcal D\ge0$.
\end{proof}

\begin{proof}[Proof of Lemma~\ref{lem:1}]
If $\lambda=0$, then the claim is immediate.  Assume $\lambda>0$.
Because the $k+2$ execution outputs a proper suffix state, the threshold-zero
candidate at $b=k+2$ is not feasible.  If $x'$ denotes the allocation that
keeps the threshold at zero at this step, then
\begin{equation}
\label{eq:H-negative-main-zero-based}
\hat H(k+2;x')
=
 p_{k+2}
 -\beta\prod_{i=1}^{k+1}\left(1+\frac1{r_i^{(k+2)}}\right)
 +(1-p_{k+2})(k+2)(1+\alpha)
\le0.
\end{equation}
Define
\[
U_{k+2}(\lambda)
:=
\sum_{t=1}^{k+2}
\frac{1-a}{t}\,F_{k+2}(t-1)
\prod_{j=t+1}^{k+2}\left(1-\frac{1-a}{j}\right).
\]
Since $\beta=1/2$, inequality
\eqref{eq:H-negative-main-zero-based} implies
\begin{equation}
\label{eq:U-lambda-greater-than-2-zero-based}
    U_{k+2}(\lambda)\ge2.
\end{equation}

Let the $t$-th summand of $U_{k+2}(\lambda)$ be
\[
A_t(\lambda)
:=
\frac{1-a}{t}\,F_{k+2}(t-1)
\prod_{j=t+1}^{k+2}\left(1-\frac{1-a}{j}\right),
\qquad 1\le t\le k+2.
\]
For $1\le t\le k+1$,
\[
\frac{A_{t+1}(\lambda)}{A_t(\lambda)}
=
\frac{t}{t+a}
\cdot
\frac{k+1+\lambda-t+a}{k+1+\lambda-t}.
\]
Hence, for $t\ge2$,
\[
\frac{A_t(\lambda)}{A_1(\lambda)}
=
\prod_{j=1}^{t-1}
\frac{j}{j+a}
\cdot
\frac{k+1+\lambda-j+a}{k+1+\lambda-j}.
\]
Every factor is strictly decreasing in $\lambda$, while $A_1(\lambda)$ is
independent of $\lambda$.  Thus $U_{k+2}(\lambda)$ is strictly decreasing in
$\lambda$.

It remains to show $U_{k+2}(1-a)<2$.  When $\lambda=1-a$, telescoping gives
\[
A_t(1-a)
=
\frac{1-a}{k+2}
\cdot
\frac{\Gamma(k+2+a)}{\Gamma(k+2-a)}
\cdot
\frac{\Gamma(t)\Gamma(k+3-t-a)}
{\Gamma(t+a)\Gamma(k+3-t)}.
\]
Using the Beta-Gamma identity,
\[
U_{k+2}(1-a)
=
\frac{(1-a)(k+2-a)}{k+2}
\int_0^1
(1-x)^{-a}
\sum_{t=1}^{k+2}
\frac{\Gamma(k+2+a)}{\Gamma(t+a)\Gamma(k+3-t)}
 x^{t-1}(1-x)^{k+2-t}
\,dx.
\]
A standard induction gives
\[
\sum_{t=1}^{k+2}
\frac{\Gamma(k+2+a)}{\Gamma(t+a)\Gamma(k+3-t)}
 x^{t-1}(1-x)^{k+2-t}
=
\sum_{j=0}^{k+1}\frac{(a)_j}{j!}(1-x)^j.
\]
Therefore
\[
U_{k+2}(1-a)
=
\frac{k+2-a}{k+2}
\sum_{j=0}^{k+1}
\frac{1-a}{j+1-a}\frac{(a)_j}{j!}.
\]
Since $(k+2-a)/(k+2)<1$,
\[
U_{k+2}(1-a)
<
\sum_{j=0}^{\infty}
\frac{1-a}{j+1-a}\frac{(a)_j}{j!}
=
(1-a)B(1-a,1-a).
\]
Finally,
\((1-a)B(1-a,1-a) = 2(1-a)B(1-a,2-a) <2(1-a)\int_0^1 y^{-a}\,dy =2.\)
Thus $U_{k+2}(1-a)<2$.  Since $U_{k+2}$ is strictly decreasing, any
$\lambda\ge1-a$ would imply
$U_{k+2}(\lambda)\le U_{k+2}(1-a)<2$, contradicting
\eqref{eq:U-lambda-greater-than-2-zero-based}.  Hence $\lambda<1-a$.
\end{proof}

\begin{proof}[Proof of Proposition~\ref{prop:strict-decrease-non-stationary}]
We only need to consider the case that there is a turning point in the \(k+2\) case execution before the buyer value \(b=k+2\).
Let buyer value \(j+2\) be the turning point, so \(k_b=0\) for
\(b\le j+1\) and \(k_{j+2}=1\). For \(b=j+2,\ldots,k+2\), write
\(x_b^{(k+2)}:=x^{(k+2)}(1,b)\), and let \(x_{b-1}^{(k+1)}\) be the
corresponding value generated by
\(\mathrm{ALG}_2(\omega_0;\mathbf r)\) at buyer value \(b-1\).

If the \(k+1\) case has already returned \textsc{fail}, there is nothing to prove, so
we only need to consider the case that \(k+1\) case execution is threshold-stationary and does not fail at \(b=k+1\) step.
Set \(R_b:=(b-1)(1+\alpha)-\alpha r_0\), and define
\[ Q_b:=\frac{R_b}{r_0}(1-x_b^{(k+2)})
-\left(1+\frac1{r_0}\right)(b-1)(1+\alpha)
\bigl(1-x_{b-1}^{(k+1)}\bigr).
\]
At the first turning point, Algorithm~\ref{alg:state-recursion} has not returned
\textsc{fail} in Line~\ref{line:state-transition-return-fail-virtual} at threshold index
\(1\), so \(R_{j+2}>0\). Since
\(R_b=R_{j+2}+(b-j-2)(1+\alpha)\), we have \(R_b>0\) for all
\(b=j+2,\ldots,k+2\). We also use \(x_t^{(k+1)}\le1\) for the relevant
values.

\medskip
\noindent\textbf{1. The initial residual \(Q_{j+2}>0\).}
Replace \(r_j\) by a variable \(v>0\). The threshold-stationary recursion gives
\[
x_{j+1}^{(j+1)}(v)
=
\left(1-\frac{1}{(j+1)(1+\alpha)}\right)x_j^{(j+1)}
+
\frac{\beta}{(j+1)(1+\alpha)}
\prod_{\ell=1}^{j-1}\left(1+\frac1{r_\ell}\right)
\left(1+\frac1v\right).
\]
There is a unique \(r_j^*\in[0,r_j]\) such that the corresponding
allocation \(x_{j+1}^{(j+1)}(r_j^*)\) is exactly \(1\).
We call the corresponding allocation $\bar x^{(j+1)}$.

The two clear-step equations at \(b=j+1\), one with \(r_j^*\) and
one with \(r_j\), are
\[
\hat H(j{+}1;\bar x^{(j+1)})=\bar x_j^{(j+1)}
-
\beta\prod_{i=1}^{j-1}\left(1+\frac1{r_i}\right)\left(1+\frac1{r_j^*}\right)
+
(1-\bar x_j^{(j+1)})(1+\alpha)(j+1)
=
0
\]
and
\[
\hat H(j{+}1;x^{(j+1)})=x_j^{(j+1)}
-
\beta\prod_{i=1}^{j}\left(1+\frac1{r_i}\right)
+
\bigl(x_{j+1}^{(j+1)}-x_j^{(j+1)}\bigr)(1+\alpha)(j+1)
=
0.
\]

Subtracting the two equations gives
\begin{align}
    (1+\alpha)(j+1)\bigl(1-x_{j+1}^{(k+1)}\bigr)
=
\beta\prod_{i=1}^{j-1}\left(1+\frac1{r_i}\right)
\left(\frac1{r_j^*}-\frac1{r_j}\right).
\label{eq:j-plus-one}
\end{align}

\begin{equation}
\widetilde x^{(j+2)}(s,j+2)=
\begin{cases}
1, & s \le 1,\\
0, & s > 1,
\end{cases}
\quad\text{and}\quad
\widetilde x^{(j+2)}(\cdot,b)= x^{(j+2)}(\cdot,b), \quad b<j+2.
\end{equation}

According to the proof of Proposition~\ref{prop:success-stationary-phi}, we have
\(\hat H(j{+}2;\widetilde x^{(j+2)})> 0\). That is
\[\widetilde x_{j+1}^{(j+2)}
-
\beta\prod_{i=0}^{j-1}\left(1+\frac1{r_i}\right)\left(1+\frac1{r_j^*}\right)
+
\bigl(1-\widetilde x_{j+1}^{(j+2)}\bigr)(1+\alpha)(j+2)
+
\frac{R_{j+2}}{r_0}
>0.
\]

For the common ratio \(r_j\), we have \(\hat H(j{+}2;x^{(j+2)})=0\). That is
\[
x_{j+1}^{(j+2)}
-
\beta\prod_{i=0}^{j-1}\left(1+\frac1{r_i}\right)\left(1+\frac1{r_j}\right)
+
\bigl(1-x_{j+1}^{(j+2)}\bigr)(1+\alpha)(j+2)
+
\frac{R_{j+2}}{r_0}\,x_{j+2}^{(j+2)}
=
0.
\]

Subtracting two inequalities gives
\begin{align}
    \frac{R_{j+2}}{r_0}\bigl(1-x_{j+2}^{(k+2)}\bigr)
>
\beta\left(1+\frac1{r_0}\right)\prod_{i=1}^{j-1}\left(1+\frac1{r_i}\right)
\left(\frac1{r_j^*}-\frac1{r_j}\right).
\label{eq:j-plus-two}
\end{align}
By \eqref{eq:j-plus-one} and \eqref{eq:j-plus-two}, we obtain \(Q_{j+2}>0\).

\medskip
\noindent\textbf{2. Propagation of \(Q_b>0\).}

We prove by induction that \(Q_b>0\) for \(b=j+2,\ldots,k+2\). The base case is
\(Q_{j+2}>0\). Fix \(b\in\{j+3,\ldots,k+2\}\), and assume \(Q_{b-1}>0\).

In the \(k+1\) case execution, we have the following allocation formula
\begin{equation}
\label{eq:nonstat-kplus1-recursion-r0}
\begin{aligned}
(b-1)(1+\alpha)\bigl(1-x_{b-1}^{(k+1)}\bigr)
={}&
\bigl((b-1)(1+\alpha)-1\bigr)
\bigl(1-x_{b-2}^{(k+1)}\bigr)
+
1-\beta\prod_{\ell=1}^{b-2}\left(1+\frac1{r_\ell}\right).
\end{aligned}
\end{equation}

In the \(k+2\) case, because \(b-1<k+2\), the hypothesis gives
\(k_{b-1}\le1\). After the turning point, the allocation has
threshold index \(1\). 
If $x_b^{(k+2)}=x_{b-1}^{(k+2)}$, since $x_b^{(k+1)}\ge x_{b-1}^{(k+1)}$ by Lemma~\ref{lem:mono-G}, $Q_{b-1}>0$ implies $Q_b>0$ immediately. Otherwise, if $x_b^{(k+2)}>x_{b-1}^{(k+2)}$, we have
\begin{equation}
\label{eq:nonstat-kplus2-recursion-r0}
\hat H(b;x^{(k+2)})=
-\frac{R_b}{r_0}\bigl(1-x_b^{(k+2)}\bigr)
+
\frac{R_b-1}{r_0}\bigl(1-x_{b-1}^{(k+2)}\bigr)
+
\left(1+\frac1{r_0}\right)\left(1-\beta\prod_{\ell=1}^{b-2}\left(1+\frac1{r_\ell}
\right)\right)= 0.
\end{equation}

Combining \eqref{eq:nonstat-kplus1-recursion-r0} and
\eqref{eq:nonstat-kplus2-recursion-r0}, we obtain
\[
\begin{aligned}
Q_b
={}&
\frac{R_b-1}{r_0}\bigl(1-x_{b-1}^{(k+2)}\bigr)
-
\left(1+\frac1{r_0}\right)
\bigl((b-1)(1+\alpha)-1\bigr)
\bigl(1-x_{b-2}^{(k+1)}\bigr).
\end{aligned}
\]

Using \(R_b-1=R_{b-1}+\alpha\) and
\((b-1)(1+\alpha)-1=(b-2)(1+\alpha)+\alpha\), this implies
\begin{equation}
\label{eq:nonstat-Q-recursion-r0}
\begin{aligned}
Q_b
=
Q_{b-1}
+
\alpha
\left[
\frac{1-x_{b-1}^{(k+2)}}{r_0}
-
\left(1+\frac1{r_0}\right)
\bigl(1-x_{b-2}^{(k+1)}\bigr)
\right].
\end{aligned}
\end{equation}
By the definition of \(Q_{b-1}\),
\[
\begin{aligned}
R_{b-1}
\left[
\frac{1-x_{b-1}^{(k+2)}}{r_0}
-
\left(1+\frac1{r_0}\right)
\bigl(1-x_{b-2}^{(k+1)}\bigr)
\right]
=
Q_{b-1}
+
\alpha(r_0+1)\bigl(1-x_{b-2}^{(k+1)}\bigr).
\end{aligned}
\]
The right-hand side is positive because \(Q_{b-1}>0\) and
\(x_{b-2}^{(k+1)}\le1\). Since \(R_{b-1}>0\), the bracket in
\eqref{eq:nonstat-Q-recursion-r0} is positive. Hence \(Q_b>Q_{b-1}>0\).
Therefore \(Q_{k+2}>0\).

\medskip
\noindent\textbf{3. Conclusion at \(b=k+2\).}
At \(b=k+2\), let
\(\omega=(y;r_i,\ldots,r_k)\) denote the state generated by the \(k+2\) case. 
The same expansion at \(b=k+2\) as in the
threshold-stationary ordering argument applies: after normalizing 
\(\hat H(k+1;x^{(k+1)}[\omega])\) by \((1+\frac1{r_0})\), all candidate-dependent suffix terms
cancel. The only remaining difference is the residual \(Q_{k+2}\). Hence
\begin{equation}
\label{eq:nonstat-same-state-Hhat-bound-r0}
\hat H(k+2;x^{(k+2)})-\left(1+\frac{1}{r_0}\right)\hat H(k+1;x^{(k+1)}[\omega])
=Q_{k+2}
>0 .
\end{equation}

Following the same analysis as in Lemma~\ref{lem:extremal-t1}, we obtain the corresponding monotonicity relation between the two cases. Whenever both executions achieve states, the state achieved in the \(k+2\) case is dominated by the state achieved in the \(k+1\) case. Moreover, if the \(k+2\) case returns \textsc{fail}, then the \(k+1\) case also returns \textsc{fail}.
\end{proof}






\section{Omitted Proofs in Section~\ref{sec:ub}}
\begin{proof}[Proof of Proposition~\ref{prop:endpoint_mass_phi}]
Let
\[
x_B(b)=\int_{[0,1]}x(s,b)\,dF_S(s)
=
p_Sx(0,b)+\int_0^1 x(s,b)f_S(s)\,ds,
\]
and
\[
x_S(s)=\int_{[0,1]}x(s,b)\,dF_B(b)
=
p_Bx(s,1)+\int_0^1 x(s,b)f_B(b)\,db.
\]

Starting from Proposition~\ref{prop:general_mixed},
the buyer-side term is
\begin{align*}
\Lambda_B(x_B)
&:=
\int_{[0,1]}
\left(
b\,x_B(b)-\int_0^b x_B(t)\,dt
\right)dF_B(b)\\
&=
p_B\left(x_B(1)-\int_0^1 x_B(t)\,dt\right)
+\int_0^1
\left(
b\,x_B(b)-\int_0^b x_B(t)\,dt
\right)f_B(b)\,db.
\end{align*}
By Fubini,
\[
\int_0^1\int_0^b x_B(t)\,dt\,f_B(b)\,db
=
\int_0^1 x_B(t)\int_t^1 f_B(b)\,db\,dt
=
\int_0^1 x_B(t)\bigl(1-F_B(t)-p_B\bigr)\,dt.
\]
Hence
\begin{align*}
\Lambda_B(x_B)
&=
p_Bx_B(1)
+\int_0^1
\bigl(bf_B(b)-(1-F_B(b))\bigr)x_B(b)\,db\\
&=
p_Bx_B(1)
+\int_0^1 \phi_B(b)x_B(b)f_B(b)\,db.
\end{align*}

Similarly, the seller-side term is
\begin{align*}
\Lambda_S(x_S)
&:=
\int_{[0,1]}
\left(
s\,x_S(s)+\int_s^1 x_S(t)\,dt
\right)dF_S(s)\\
&=
p_S\int_0^1 x_S(t)\,dt
+\int_0^1
\left(
s\,x_S(s)+\int_s^1 x_S(t)\,dt
\right)f_S(s)\,ds.
\end{align*}
Again by Fubini,
\[
\int_0^1\int_s^1 x_S(t)\,dt\,f_S(s)\,ds
=
\int_0^1 x_S(t)\int_0^t f_S(s)\,ds\,dt
=
\int_0^1 x_S(t)\bigl(F_S(t)-p_S\bigr)\,dt.
\]
Therefore
\begin{align*}
\Lambda_S(x_S)
&=
\int_0^1
\bigl(sf_S(s)+F_S(s)\bigr)x_S(s)\,ds\\
&=
\int_0^1 \phi_S(s)x_S(s)f_S(s)\,ds.
\end{align*}

Combining the two parts yields
\[
\Lambda_{\text{con-mass}}(x)
=
p_Bx_B(1)
+\int_0^1 \phi_B(b)x_B(b)f_B(b)\,db
-\int_0^1 \phi_S(s)x_S(s)f_S(s)\,ds.
\]
Now substitute the formulas for \(x_B\) and \(x_S\). After expanding and rearranging,
\begin{align*}
\Lambda_{\text{con-mass}}(x)
&=
p_Sp_B\,x(0,1)
+p_S\int_0^1 \phi_B(b)x(0,b)f_B(b)\,db\\
&\qquad
+p_B\int_0^1 \bigl(1-\phi_S(s)\bigr)x(s,1)f_S(s)\,ds
+\int_0^1\int_0^1
\bigl(\phi_B(b)-\phi_S(s)\bigr)x(s,b)f_S(s)f_B(b)\,db\,ds.
\end{align*}
This is exactly
\[
\Lambda_{\text{con-mass}}(x)
=
\int_{[0,1]^2}
\bigl(\widehat\phi_B(b)-\widehat\phi_S(s)\bigr)
x(s,b)\,dF_S(s)\,dF_B(b),
\]
with the endpoint conventions \(\widehat\phi_S(0)=0\) and \(\widehat\phi_B(1)=1\).

The expression for \(\mathrm{GFT}(x)\) follows from the same decomposition of the product measure:
\[
dF_S(s)dF_B(b)
=
p_Sp_B\,\delta_{(0,1)}
+p_S\,\delta_0(ds)f_B(b)\,db
+p_B\,f_S(s)\,ds\,\delta_1(db)
+f_S(s)f_B(b)\,ds\,db.
\]
\end{proof}

\begin{proof}[Proof of Theorem~\ref{thm:ub}]
We show that for every $\epsilon>0$ there exists a pair of distributions such that $\frac{\mathrm{SB}}{\mathrm{FB}}<\frac{1}{2}+\epsilon$. The proof proceeds by deriving the closed-form analytical expressions for $\text{FB}$ and $\text{SB}$, extracting their asymptotic limits as $c \to 0^+$, and finally evaluating the ratio as $\alpha \to \infty$.

\paragraph{First-best benchmark and asymptotics.}

By definition, the ideal expected surplus FB is:
\begin{align*}
\text{FB}(\alpha,c) &= \int_0^1\int_0^b(b-s)f_B(b)f_S(s)\mathrm{d}s\mathrm{d}b\\
&=\int_0^1 F_S(b) (1 - F_B(b)) \mathrm{d}b \\
&= \int_0^1 \left( \frac{c^2}{(c + (1+\alpha)(1-b))(c + (1+\alpha)b)} \right)^a \mathrm{d}b.
\end{align*}

We analyze the asymptotic behavior of $\mathrm{FB}(\alpha,c)$ as $c \to 0^+$. Factoring out $c^{2a}$, the dominant term of the integral is governed by the $c \to 0^+$ limit of the denominator:
$$\lim_{c \to 0^+} \frac{\text{FB}(\alpha,c)}{c^{2a}} = \int_0^1 \frac{1}{((1+\alpha)^2 \cdot b(1-b))^a} \mathrm{d}b = (1+\alpha)^{-2a} \int_0^1 b^{-a} (1-b)^{-a} \mathrm{d}b.$$

\paragraph{Structure of the second-best allocation.}

We first demonstrate that the optimal allocation rule restricts trade exclusively to the boundaries
\[
{\cal T}:=\{(s,b):s=0,\ b\ge 1-y\}\cup\{(s,b):s\le y,\ b=1\},
\]
where \(y\) is the truncation point.

By Proposition~\ref{prop:endpoint_mass_phi}, for the present family the second-best problem is
\[
\max_x \ \mathrm{GFT}(x)
\qquad\text{s.t.}\qquad
\Lambda(x)\ge 0,
\]
with
\begin{align*}
\mathrm{GFT}(x)
&=
P^2x(0,1)
+P\int_0^1 b\,x(0,b)f_B(b)\,db
+P\int_0^1 (1-s)x(s,1)f_S(s)\,ds \\
&\qquad
+\int_0^1\int_0^1 (b-s)x(s,b)f_S(s)f_B(b)\,db\,ds,
\end{align*}
and
\begin{align*}
\Lambda(x)
&=
P^2x(0,1)
+P\int_0^1 \phi_B(b)x(0,b)f_B(b)\,db
+P\int_0^1 \bigl(1-\phi_S(s)\bigr)x(s,1)f_S(s)\,ds \\
&\qquad
+\int_0^1\int_0^1 \bigl(\phi_B(b)-\phi_S(s)\bigr)x(s,b)f_S(s)f_B(b)\,db\,ds.
\end{align*}
Substituting the present family gives
\[
\phi_B(b)=-\frac{c+b}{\alpha},
\qquad
1-\phi_S(s)=-\frac{c+1-s}{\alpha},
\qquad
\phi_B(b)-\phi_S(s)=-\frac{(b-s)+2c+1+\alpha}{\alpha},
\]
for \(0<b<1\), \(0<s<1\), and \(0<s<b<1\), respectively, while the atom \((0,1)\) contributes \(1\) to both \(\mathrm{GFT}\) and \(\Lambda\).

Thus, after factoring out the positive measure elements \(P^2\), \(Pf_B(b)\,db\), \(Pf_S(s)\,ds\), and \(f_S(s)f_B(b)\,db\,ds\), the pointwise coefficients in the objective and in the feasibility constraint are
\[
\begin{array}{c|c|c}
\text{location} & \text{coefficient in }\mathrm{GFT} & \text{coefficient in }\Lambda \\ \hline
(0,1) & 1 & 1 \\[1mm]
(0,b),\ 0<b<1 & b & -\dfrac{c+b}{\alpha} \\[2mm]
(s,1),\ 0<s<1 & 1-s & -\dfrac{c+1-s}{\alpha} \\[2mm]
(s,b),\ 0<s<b<1 & b-s & -\dfrac{(b-s)+2c+1+\alpha}{\alpha}
\end{array}
\]
Hence \((0,1)\) is the unique location at which increasing trade relaxes the feasibility constraint. Therefore any optimum sets \(x(0,1)=1\). After this point mass is fully used, every additional trade raises \(\mathrm{GFT}\) but lowers \(\Lambda\), so the question is where the available feasibility slack should be spent.

We first compare two points on the seller boundary \(b=1\). Let \(0<s_1<s_2<1\). Suppose we increase trade probability at \((s_1,1)\) by \(\Delta_1\) and decrease trade probability at \((s_2,1)\) by \(\Delta_2\), choosing them so that the feasibility constraint is unchanged:
\[
\frac{c+1-s_1}{\alpha}\Delta_1
=
\frac{c+1-s_2}{\alpha}\Delta_2.
\]
The corresponding change in the objective is
\begin{align*}
\Delta \mathrm{GFT}
&=(1-s_1)\Delta_1-(1-s_2)\Delta_2 \\
&=\Delta_1\left[(1-s_1)-(1-s_2)\frac{c+1-s_1}{c+1-s_2}\right] \\
&=\frac{c(s_2-s_1)}{c+1-s_2}\Delta_1
>0.
\end{align*}
Thus, for a fixed loss of feasibility slack, trading with a lower seller type strictly dominates trading with a higher seller type. By symmetry, on the buyer boundary \(s=0\), a higher buyer type strictly dominates a lower buyer type. Therefore, whenever trade occurs on the boundaries, it must take the form
\[
\{(s,1):s\le y\}\cup\{(0,b):b\ge 1-y\}
\]
for some cutoff \(y\in[0,1]\).

We next compare boundary trade with interior trade. Fix a boundary point \((s,1)\) with \(0<s<1\), and an interior point \((s',b')\) with \(0<s'<b'<1\). Write
\[
u:=b'-s'\in(0,1).
\]
Suppose we increase trade probability at \((s,1)\) by \(\Delta_B\) and decrease trade probability at \((s',b')\) by \(\Delta_I\), again choosing them so that the feasibility constraint is unchanged:
\[
\frac{c+1-s}{\alpha}\Delta_B
=
\frac{u+2c+1+\alpha}{\alpha}\Delta_I.
\]
Then the induced change in the objective is
\begin{align*}
\Delta \mathrm{GFT}
&=(1-s)\Delta_B-u\Delta_I \\
&=\Delta_B\left[(1-s)-u\frac{c+1-s}{u+2c+1+\alpha}\right] \\
&=\frac{\Delta_B}{u+2c+1+\alpha}
\Bigl[(1-s)(u+2c+1+\alpha)-u(c+1-s)\Bigr] \\
&=\frac{\Delta_B}{u+2c+1+\alpha}
\Bigl[(1-s)(2c+1+\alpha)-cu\Bigr].
\end{align*}
For fixed \(s\), the last expression is strictly decreasing in \(u\). Hence the strongest interior competitor is obtained at the maximal surplus level \(u\uparrow 1\), i.e. in the limit \(b'\to 1\), \(s'\to 0\). Therefore the boundary point \((s,1)\) dominates every interior point whenever
\[
(1-s)(2c+1+\alpha)-c>0.
\]
Let \(y_1\) be the boundary cutoff at which equality holds. Then
\[
(1-y_1)(2c+1+\alpha)=c,
\]
so
\[
y_1=\frac{c+1+\alpha}{2c+1+\alpha}.
\]
Hence every boundary point with \(s<y_1\) yields a strictly larger increase in the objective than any interior point for the same decrease in feasibility slack. It follows that an optimal allocation cannot place positive mass in the interior before exhausting the boundary segment up to \(y_1\).

\paragraph{Budget exhaustion and second-best surplus.}

We now verify the feasibility of reaching $y_1$. Let the actual trade interval length on the boundary be $t$, and let $y = 1-t$ be the length of the non-trading gap. For the seller boundary, the expected virtual surplus is: $$P \int_0^t (1 - \phi_S(s)) f_S(s) \mathrm{d}s = P \int_0^t (1-s) f_S(s) \mathrm{d}s - P \int_0^t F_S(s) \mathrm{d}s.$$ 
Applying integration by parts yields:
$$\int_0^t (1-s) f_S(s) \mathrm{d}s = (1-t)F_S(t) - P + \int_0^t F_S(s) \mathrm{d}s.$$
Thus, the boundary virtual surplus evaluates to $P [ (1-t)F_S(t) - P ]$. Equating the total virtual surplus to zero for budget balance gives $P^2 + 2P [ (1-t)F_S(t) - P ] = 0$, which reduces to $2y\cdot F_S(1-y) = P$. Substituting the respective distribution functions, we have $2y \left( \frac{c}{c+(1+\alpha)y} \right)^a = \left( \frac{c}{c+1+\alpha} \right)^a$. 

We now verify this budget exhaustion. As derived from the budget balance condition, the actual gap $y$ is the root of the continuous function $g(y) := 2y - \left( \frac{c+(1+\alpha)y}{c+1+\alpha} \right)^a = 0$. We evaluate $g(\cdot)$ at the critical threshold $y_1$:
\begin{align*}g(1-y_1) &= 2\left(\frac{c}{2c+1+\alpha}\right) - \left( \frac{c + (1+\alpha)\frac{c}{2c+1+\alpha}}{c+1+\alpha} \right)^a \\&= \frac{2c}{2c+1+\alpha} - \left( \frac{2c}{2c+1+\alpha} \right)^a.
\end{align*}
Let $z = \frac{2c}{2c+1+\alpha}$. Since $c>0$ and $\alpha>0$, we have $0<z<1$. Moreover, since
$a=\frac{\alpha}{1+\alpha}<1$, it follows that $z<z^a$. Therefore
\[
g(1-y_1)=z-z^a<0.
\]
On the other hand,
\[
g(1/2)
=
1-\left(\frac{c+\frac{1+\alpha}{2}}{c+1+\alpha}\right)^a
>0.
\]
Hence, by the Intermediate Value Theorem, the budget-balancing gap $y$ satisfies
\[
y\in (1-y_1,1/2).
\]
Equivalently, the corresponding boundary trading length $t=1-y$ satisfies
\[
1/2<t<y_1.
\]

Thus the feasibility constraint binds at a cutoff $t$ that is strictly below the critical cutoff $y_1$ at which interior points could become relevant in the marginal comparison. For every boundary point in the traded segments
\[
\{(s,1):0\le s\le t\}
\cup
\{(0,b):1-t\le b\le 1\},
\]
the gain in $\mathrm{GFT}$ per unit loss of feasibility slack is strictly larger than that of any interior point $(s,b)$ with $0<s<b<1$. Consequently, the optimal second-best allocation is
\[
x^\star(s,b)=
\begin{cases}
1, & (s,b)\in\{(0,1)\}\cup\{(s,1):0\le s\le 1-y\}\cup\{(0,b):y\le b\le 1\},\\[2pt]
0, & \text{otherwise.}
\end{cases}
\]

Therefore, the actual Second-Best expected surplus is the sum of the point mass surplus at the boundary \((0,1)\) and the actual surplus from the other boundary trades:
$$\text{SB}(\alpha,c) = P^2 + 2P \int_0^{1-y} (1-s) f_S(s) \mathrm{d}s.$$
Applying the same integration by parts, this expands to $P^2 + 2P \left[ yF_S(1-y) - P + \int_0^{1-y} F_S(s) \mathrm{d}s \right]$. By enforcing the budget balance condition $2yF_S(1-y) = P$, the discrete algebraic terms cancel ($P^2 + P^2 - 2P^2 = 0$), collapsing the surplus to a pure integral:
$$\text{SB}(\alpha,c) = 2P \int_0^{1-y} F_S(s) \mathrm{d}s = 2 \left( \frac{c}{c+1+\alpha} \right)^a \int_0^{1-y} \frac{c^a}{(c+(1+\alpha)(1-s))^a} \mathrm{d}s.$$

Evaluating this fractional integral yields $c^a \left[ (c+1+\alpha)^{1-a} - (c+(1+\alpha)y)^{1-a} \right]$. Factoring out $(c+1+\alpha)^{1-a}$ and utilizing the relation $\frac{c+(1+\alpha)y}{c+1+\alpha} = (2y)^{1/a}$ derived from the budget constraint, the expression becomes:
$$\mathrm{SB}(\alpha,c) = 2(c+1+\alpha) \left( \frac{c}{c+1+\alpha} \right)^{2a} \left[ 1 - (2y)^{\frac{1}{\alpha}} \right].$$

As $c\to 0^{+}$, the budget constraint $2y=\left( \frac{c+(1+\alpha)y}{c+1+\alpha} \right)^a$ simplifies to $2y\sim y^a$, yielding $y\sim \left( \frac{1}{2} \right)^{1+\alpha}$. 
Substituting the invariant $1-(2y)^{1/\alpha}=1/2$, we obtain:
$$\text{SB}(\alpha,c) \sim 2(1+\alpha) \cdot c^{2a} (1+\alpha)^{-2a} \cdot \frac{1}{2} = c^{2a} (1+\alpha)^{1-2a}.$$

\paragraph{Limiting approximation ratio.}

Then, we define the limit ratio $R_0(\alpha) = \lim_{c \to 0^+} \frac{\text{SB}(\alpha,c)}{\text{FB}(\alpha,c)}$.
Dividing the asymptotic expression of SB by FB, the $c^{2a}$ terms cancel. Recalling that $-2a = -\frac{2\alpha}{1+\alpha}$, we simplify the remaining $(1+\alpha)$ exponents:
$$R_0(\alpha) = \frac{(1+\alpha)^{1-2a}}{(1+\alpha)^{-2a} B\left( \frac{1}{1+\alpha}, \frac{1}{1+\alpha} \right)} = \frac{1+\alpha}{B\left( \frac{1}{1+\alpha}, \frac{1}{1+\alpha} \right)}.$$
Expressing the Beta function in terms of Gamma functions $\Gamma(\cdot)$ using the identity $B(z,z) = \frac{\Gamma(z)^2}{\Gamma(2z)}$, we acquire the ratio by the parameter $\alpha$:
$$R_0(\alpha) = (1+\alpha) \frac{\Gamma\left( \frac{2}{1+\alpha} \right)}{ \Gamma\left( \frac{1}{1+\alpha} \right)^2}.$$
Finally, we evaluate $R_0(\alpha)$ as $\alpha \to \infty$. Let $z = \frac{1}{1+\alpha}$. As $\alpha \to \infty$, $z \to 0^+$. The ratio function rewrites to:
$$R_0(z) = \frac{1}{z} \frac{\Gamma(2z)}{\Gamma(z)^2}.$$
To resolve the $\frac{\infty}{\infty}$ singularity at $z \to 0^+$, we apply the elementary recurrence relation of the Gamma function:$\Gamma(z) = \frac{\Gamma(z+1)}{z}$, $\Gamma(2z) = \frac{\Gamma(2z+1)}{2z}$.
Substituting these into the ratio:
$$R_0(z) = \frac{1}{z} \cdot \frac{\frac{\Gamma(2z+1)}{2z}}{\left( \frac{\Gamma(z+1)}{z} \right)^2} = \frac{1}{2} \cdot \frac{\Gamma(2z+1)}{\Gamma(z+1)^2}.$$Because the Gamma function is strictly continuous for positive real numbers and $\Gamma(1) = 1$, we can now directly evaluate the limit as $z \to 0^+$:
$$\lim_{z \to 0^+} R_0(z) = \frac{1}{2} \cdot \frac{\Gamma(1)}{(\Gamma(1))^2}  = \frac{1}{2}.$$
Therefore, taking the sequential limits $\lim_{\alpha \to \infty} \lim_{c \to 0^+} \frac{\text{SB}(\alpha,c)}{\text{FB}(\alpha,c)}$ yields precisely $1/2$, completing the proof.
\end{proof}

\printbibliography

\end{document}